\renewcommand{\:}{\colon}
\def\Autoref#1{%
  \begingroup
  \edef\reserved@a{\cpttrimspaces{#1}}%
  \ifcsndefTF{r@#1}{%
    \xaftercsname{\expandafter\testreftype\@fourthoffive}
      {r@\reserved@a}.\\{#1}%
  }{%
    \ref{#1}%
  }%
  \endgroup
}
\def\testreftype#1.#2\\#3{%
  \ifcsndefTF{#1autorefname}{%
    \def\reserved@a##1##2\@nil{%
      \uppercase{\def\ref@name{##1}}%
      \csn@edef{#1autorefname}{\ref@name##2}%
      \autoref{#3}%
    }%
    \reserved@a#1\@nil
  }{%
    \autoref{#3}%
  }%
}
\newcommand{\IR}{{\mathbb{R}}}
\newcommand{\IN}{{\mathbb{N}}}
\newcommand{\IC}{{\mathbb{C}}}
\newcommand{\abs}[1]{\left| #1 \right|}
\newcommand{\mm}[1]{\begin{pmatrix} #1 \end{pmatrix}}
\newcommand{\muu}[1]{\begin{align*} #1 \end{align*}}
\newcommand{\muun}[1]{\begin{align} #1 \end{align}}
\renewcommand{\d}{\,\mathrm{d}}
\newcommand{\abc}{\begin{enumerate}[label=(\alph*)]}
\renewcommand{\phi}{\varphi}
\renewcommand{\sc}[1]{\left< #1 \right>}
\newcommand{\nn}[1]{\left\Vert #1 \right\Vert}
\newcommand{\ind}{\mathds{1}}
\newcommand{\ran}{\operatorname{ran}}
\newcommand{\supp}{\operatorname{supp}}
\newcommand{\Id}{\operatorname{Id}}
\newcommand{\thref}[1]{\autoref{#1}}
\begin{document}
\newtheoremstyle{fgrs} 
                        {0.5em}    
                        {0.5em}    
                        {}         
                        {}         
                        {\bfseries}
                        {}        
                        {\newline} 
                        {}
\newtheoremstyle{mydef} 
                        {0.5em}    
                        {0.5em}    
                        {}         
                        {}         
                        {\bfseries}
                        {}        
                        {\newline} 
                        {}
\theoremstyle{mydef}

\newtheorem{deffn}{Definition}[section]
\newcommand{\furthertheorem}[3]{
\newaliascnt{#1}{deffn}  
\newtheorem{#1}[#1]{#2}  
\aliascntresetthe{#1}  
\providecommand*{#3}{#2}  
}

\furthertheorem{defn}{Definition}{\defnautorefname}
\furthertheorem{example}{Example}{\exampleautorefname}
\furthertheorem{rem}{Remark}{\remautorefname}

\newtheoremstyle{dotless} 
                        {0.5em}    
                        {0.5em}    
                        {\itshape}         
                        {}         
                        {\bfseries}
                        {}        
                        {\newline} 
                        {}
\theoremstyle{dotless}

\furthertheorem{prop}{Proposition}{\propautorefname}
\furthertheorem{thm}{Theorem}{\thmautorefname}
\furthertheorem{lemma}{Lemma}{\lemmaautorefname}
\furthertheorem{coro}{Corollary}{\coroautorefname}
\furthertheorem{hyp}{Hypothesis}{\hypautorefname}

\newcommand{\thmenum}{\leavevmode \vspace{-\baselineskip} \vspace*{-\medskipamount}}


\bibliographystyle{plain}

\newcommand{\el}{{\operatorname{el}}}
\renewcommand{\H}{\mathcal{H}}
\newcommand{\HH}{\mathcal{H}}
\newcommand{\h}{\mathfrak{h}}
\newcommand{\Fc}{\mathcal{F}}
\newcommand{\FF}{\mathfrak{F}}
\newcommand{\hs}{\mathfrak{h}}
\newcommand{\f}{{\operatorname{f}}}
\renewcommand{\c}{{\mathrm c}}
\newcommand{\Cci}{C_\c^\infty}
\newcommand{\fin}{{\operatorname{fin}}}
\newcommand{\D}{\mathcal{D}}
\newcommand{\G}{\overline G}
\newcommand{\Laux}{L_{\operatorname{aux}}}
\newcommand{\ess}{{\operatorname{ess}}}
\newcommand{\disc}{{\operatorname{disc}}}
\newcommand{\Uh}{\widehat U}
\renewcommand{\l}{\mathrm l}
\renewcommand{\r}{\mathrm r}

\renewcommand{\d}{{\mathrm d}}
\newcommand{\Ctot}{C_{\operatorname{tot}}}
\newcommand{\qtot}{q_{\operatorname{tot}}}

\newcommand{\aux}{{\operatorname{aux}}}
\newcommand{\sgn}{{\operatorname{sgn}}}
\newcommand{\Hf}{H_\f}
\newcommand{\Hfc}{\mathcal{H}_\f}
\newcommand{\Lf}{L_\f}
\newcommand{\oPi}{\PP{\Pi}}
\newcommand{\dist}{\operatorname{dist}}
\renewcommand{\Re}{\operatorname{Re}}
\renewcommand{\Im}{\operatorname{Im}}
\renewcommand{\P}{\overline P}
\renewcommand{\i}{\mathrm{i}}
\newcommand{\at}{{\operatorname{at}}}
\newcommand{\Af}{\mathfrak{A}}
\newcommand{\Wf}{\mathfrak{W}}
\newcommand{\C}{\mathcal{C}}
\newcommand{\slim}{\operatorname{s-lim}}
\newcommand{\dGamma}{\mathsf{d}\Gamma}
\renewcommand{\l}{\mathsf{l}}
\renewcommand{\r}{\mathsf{r}}
\newcommand{\Hfl}{H_{\f,\l}}
\newcommand{\Hfr}{H_{\f,\r}}
\newcommand{\IS}{\mathbb{S}}
\newcommand{\Ug}{U_g}
\newcommand{\Ugl}{\tilde U}
\newcommand{\Rc}{\mathcal{R}}


\newcommand{\R}{\mathbb{R}}
\newcommand{\p}{{\mathrm p}}
\newcommand{\s}{{\mathrm s}}
\renewcommand{\L}{\mathcal{L}}
\newcommand{\W}{\mathcal{W}}
\newcommand{\Nf}{N_\f}
\newcommand{\Pd}{P_{\operatorname{disc}}}
\newcommand{\Pdc}{\Pc}
\newcommand{\dG}{\mathsf{d}\Gamma}
\newcommand{\Dq}{\D_q^\infty }
\newcommand{\Dp}{\Cci(\IR^3)}

\newcommand{\aotimes}{\widehat\otimes}
\newcommand{\mult}[1]{{#1}}
\newcommand{\mults}[2]{{ #2 \cdot}_{#1}}
\newcommand{\Lb}{\mathcal{L}}

\newcommand{\Pw}{}
\newcommand{\ad}{\operatorname{ad}}
\newcommand{\Czi}{C^\infty_0}

\newcommand{\liim}{\mathsf{l.i.m.}}

\newcommand{\Vc}{V_{\mathsf c}}

\newcommand{\Aps}{{A_\p}}
\newcommand{\Apse}{{A_\p^{(\epsilon)}}}
\newcommand{\ee}{\eta_\epsilon}
\newcommand{\Dps}{\D_\p}
\newcommand{\Lps}{\Lambda_\p}
\newcommand{\Ls}{\Lambda}
\newcommand{\As}{{A}}
\newcommand{\Ase}{{A^{(\epsilon)}}}
\newcommand{\Ds}{{\D}}

\newcommand{\Vcd}{\mathcal{V}}
\newcommand{\Il}{I^{(\l)}}
\newcommand{\Ir}{I^{(\r)}}
\newcommand{\Vd}{V_{\mathsf{d}}}
\newcommand{\phie}[1]{ e^{\i #1 \mult{x}} }
\newcommand{\Pc}{P_{\mathsf{ess}}}
\newcommand{\phit}{\widetilde{\phi}}
\renewcommand{\S}{\mathcal{S}}
\newcommand{\Gos}{G(\omega,\Sigma)}
\newcommand{\Hos}{H(\omega,\Sigma)}
\newcommand{\const}{\mathsf{const.}}
\newcommand{\Adil}{A_{\mathsf{D}}}
\newcommand{\Df}{\FF_\fin(\Cci(\IR^3))}
\newcommand{\qce}{q_{C_1}}
\newcommand{\qcee}{q_{C^{(\epsilon)}_1}}

\newcommand{\Def}{\mathcal{D}}
\newcommand{\CQ}[1]{C_{ #1 }}
\newcommand{\PP}[1]{{ #1 }^\perp }
\newcommand{\Cb}{C_{\mathsf{b}}}
\newcommand{\Pb}{ C_{\mathsf{poly}}}
\newcommand{\q}{\mathsf q}
\newcommand{\Gt}{\tilde G}
\newcommand{\gb}{\gamma_\beta}
\newcommand{\Nfh}{\widehat{\Nf}}
\newcommand{\Poh}{\widehat{P_\Omega}}
\newcommand{\Dw}{\widetilde \D}
\newcommand{\hDk}{\hat{D}_k}
\newcommand{\Cf}[1]{C_{\f,#1}}

\numberwithin{equation}{section}

\newcommand{\phiz}{\phi_0}
\newcommand{\phir}{\phi_{\mathsf{R}}}
\newcommand{\refJ}[1]{\hyperref[Js]{($\text{J}_{#1}$)}}


\theoremstyle{fgrs}
\newtheorem*{fgr}{Fermi Golden Rule Condition}

\title{Thermal Ionization for Short-Range Potentials}
\author{David Hasler \and Oliver Siebert}
\date{\small Department of Mathematics, Friedrich Schiller University Jena \\
Jena, Germany}
\maketitle

\begin{abstract}
We study a concrete model of a confined particle  in form of a Schrö-dinger operator with a compactly supported smooth potential coupled to a bosonic field at positive temperature. We show,  that  the model  exhibits  thermal ionization for any positive temperature, provided the coupling is sufficiently small.  Mathematically, one has to rule out that zero is an eigenvalue of the self-adjoint generator of time evolution -- the Liouvillian. This will be done by using positive commutator methods with dilations in the space of scattering functions. Our proof relies on a spatial cutoff in the coupling but does otherwise not require any unnatural restrictions.
\end{abstract} 



\section{Introduction}
The phenomenon of  \textit{thermal ionization} can be viewed as a positive temperature generalization of the photoelectric effect: an atom is exposed to thermal radiation emitted by a black body of temperature $T  > 0$. Then photons of momentum $\omega$ according to \textsc{Planck}'s probability distribution of black-body radiation,
\[
\frac{1}{e^{\beta \omega}-1}, \qquad \beta \sim \frac{1}{T},
\]
 will interact with the atoms' electrons. Since there is a positive probability for arbitrary high energy photons, eventually one  with sufficiently high energy will show up exceeding   the ionization threshold of the atom. 
 
For the zero temperature situation (photoelectric effect) there can be found qualitative and quantitative statements for different simplified models of atoms with quantized fields in \cite{photoelectric1,photoelectric2,photoelectric3}. If one replaces the atom by a finite-dimensional small subsystem, the model usually exhibits the behavior of \textit{return to equilibrium}, see for example \cite{rte1}. Here the existence of a Gibbs state of the atom leads to the existence of an equilibrium state of the whole system. One is confronted with a similar mathematical problem -- disproving the degeneracy of the eigenvalue zero. The most common technique for handling this are complex dilations, or its infinitesimal analogue: positive commutators, which goes back to work of \textsc{Mourre} (cf. \cite{mourre}). There are a number of papers which also use positive commutators  in the context of return to equilibrium, see for example  \cite{rte2,jakpillet2,jakpillet3,merkli_positive}.

A first rigorous treatment of thermal ionization was given by \textsc{Fröhlich} and \textsc{Merkli} in \cite{merkli1}  and in a subsequent paper \cite{merkli2} by the same authors together with \textsc{Sigal}. The ionization appears mathematically as the absence of a time-invariant state in a suitable von Neumann algebra describing the complete system. The time-invariant states can be shown to be in one-to-one correspondence with the elements of the kernel of the Liouvillian. For the proof they used a global positive commutator method first established for Liouvillians in \cite{merkli_positive} and \cite{dj} with a conjugate operator on the field space as in \cite{jakpillet3}. Furthermore, they developed a new virial theorem. 

A similar situation in a mathematical sense also occurs if one considers a small finite-dimensional or confined system coupled to multiple reservoirs at different temperatures. Here one can prove as well the absence of time-invariant (equilibrium) states, which translates into the absence of zero as an eigenvalue of the Liouvillian (cf. \cite{jaksic_pillet_multiple_fermionic,derezinski_rte,instability_eqstates,ness_resonances}).

In \cite{merkli1} an abstract representation of a Hamiltonian diagonalized with respect to its energy with a single negative eigenvalue was considered. For the proof certain regularity assumptions of the interaction with respect to the energy were imposed. However, it is not so clear how these assumptions translate to a more concrete setting. The reference \cite{merkli2} on the other hand covers the case of a Schrödinger operator with a long-range potential but with only finitely many modes coupled via the interaction. Moreover, only a compact interval away from zero of the continuous spectrum was coupled. 

The purpose of this paper is to transfer the results of \cite{merkli1,merkli2} to a more specific model of a  Schrödinger operator with a compactly supported potential with finitely many eigenvalues.  We consider a typical coupling term and we have to impose a spatial cutoff. However, we do not need any restrictions with respect to the coupling to the continuous subspace of the atomic operator as in \cite{merkli2}. Moreover, in contrast to \cite{merkli1,merkli2} our result holds  uniformly for  bounded positive  temperatures.
This is achieved   by considering a finite  approximation of the so-called level shift operator. 
For the proof we use the same commutator on the space of the bosonic field as in \cite{jakpillet3,merkli1,merkli2}, and we also reuse the original virial theorem of \cite{merkli1,merkli2}. On the other hand we work with a different commutator on the atomic space, namely the generator of dilations in the space of scattering functions.  

The organization of the paper is as follows. In \Autoref{sec:model} we introduce the model and define the Liouvillian. In addition, we state the precise form of our main result in \thref{th:main} and all the necessary assumptions. We also give a more detailed outline of the proof in \Autoref{sec:overview proof}. \Autoref{sec:abstract virial} recalls the abstract virial theorem of \cite{merkli1,merkli2} and some related technical methods. Then we verify the requirements of the abstract virial theorem in our setting in \Autoref{section:verification}. We repeat the definition of scattering states (\Autoref{sec:scattering}) and use those for the concrete choice of the commutators.  The major difficulty here is to check that the commutators with the interaction terms are bounded. This requires  bounds which involve  the scattering functions and is elaborated in \Autoref{sec:estimates scattering}. The application of the virial theorem then yields a concrete version -- \thref{th:result virial}. This is the first key element for the proof of the main theorem. The second one is the actual proof that the commutator together with some auxiliary term is positive. This and the concluding proof of \thref{th:main} at the end can be found in \Autoref{sec:positivity}.

\section{Model and the Main Result}
\label{sec:model}

A model of a small subsystem interacting with a bosonic field at positive temperature is usually represented as a suitable $C^*$- or $W^*$-dynamical system on a tensor product algebra consisting of the field and the atom, respectively.

The field is defined  by a Weyl algebra with infinitely many degrees of freedom. To implement black-body radiation at a specific temperature $T > 0$ the GNS representation with respect to a KMS (equilibrium) state depending on $T$ is considered. For the atom the whole algebra $\L(H_\p)$ for the  atomic Hilbert space 
$\H_\p = L^2(\IR^3)$ is used and the GNS representation with respect to an arbitrary mixed reference state is performed. The combined representation of the whole system generates a $W^*$-algebra where the interacting dynamics can be defined by means of a Dyson series in a canonical way. Its self-adjoint generator -- the Liouvillian -- is of great interest when studying such systems.
The details of this construction can be found in \cite{merkli1,merkli2,mueck_phd}. Furthermore, it is shown in \cite{merkli1,merkli2} that the absence of zero as an eigenvalue of the Liouvillian implies the absence of time-invariant states of the $W^*$-dynamical system.

In this paper we start directly with the definition of the Liouvillian without repeating its derivation and the algebraic construction.
The only difference in our setting to \cite{merkli1,merkli2} is the coupling term which can be realized as an approximation of step function couplings as considered in their work. 

The purpose of this section is the definition of the concrete Liouvillian, the precise statement of the result --  absence of zero as an eigenvalue -- and the required conditions. At the end in \Autoref{sec:overview proof}, we explain the basic structure of the proof. 

We start with the three ingredients of the model: the atom, the field and the interaction, which we first discuss separately and state the required assumptions. 

\subsection{The Atom}

For the atom we consider a Schrödinger operator on the Hilbert space $\H_\p := L^2(\IR^3)$, 
\[
H_\p = -\Delta + V,
\]
where we assume that 

\begin{enumerate}[label=(H\arabic*)]
\item \label{ass:h1} $V \in \Cci(\IR^3)$,
\item \label{ass:h2} if $\psi \in L^2(\IR^3)$ satisfies
\begin{align} \label{eq:propofpot} 
\psi(x)  = - \frac{1}{4\pi} \int \frac{\abs{V(x)}^{\frac{1}{2}} V(y)^{\frac{1}{2}}  }{\abs{x-y}} \psi(y)  \d y 
, \quad \text{ for a.e.    }  x \in \IR^3  , \end{align} 
where $V^{1/2} := |V|^{1/2} \text{sgn} V$, then $\psi =  0$.
\end{enumerate}

We note that the above assumptions have the following immediate consequences. 

\begin{prop} If  \ref{ass:h1} holds, then  
\label{th:H implications}
\begin{itemize}
\item[(a)] $H_\p$ is essentially self-adjoint on $\Cci(\IR^3)$ with domain 
$\Def(\Delta)$,
\item[(b)] $H_\p$   has essential spectrum $[0,\infty)$, 
\item[(c)] the discrete spectrum of $H_\p$, denoted by $\sigma_\d(H_\p)$, is finite,
\item[(d)] $H_\p$ has no positive eigenvalues,
\item[(e)] $H_\p$ has no singular spectrum. 
\end{itemize} 
 If  \ref{ass:h1} and \ref{ass:h2} hold, then 
\begin{itemize}
\item[(f)] zero is not an eigenvalue of $H_\p$. 
\end{itemize}
\end{prop} 
\begin{proof} (a) follows from Kato-Rellich since $V$ is infinitesimally bounded with respect to $-\Delta$. (b) is 
shown for example in Example 6  of  Section XIII.4  in \cite{rs4}. (c) follows from \cite[Theorem XIII.6]{rs4}.
(d) follows from the Kato-Agmon-Simon
theorem (\cite[Theorem XIII.58]{rs4}).
(e)  follows from  \cite[Theorem XIII.21]{rs4}. (f) Suppose $\varphi$ were an eigenvector with eigenvalue zero. Then using the integral
representation  of the resolvent of the Laplacian, see for example \cite[IX.7]{rs2},  we find $\varphi(x) = - (4\pi)^{-1}  \int_{\R^3}  |x-y|^{-1} V(y)  \varphi(y) dy$.   
From this  it is straightforward to verify that $\psi = |V|^{1/2} \varphi$  would be   nonzero solution of  \eqref{eq:propofpot}. 
\end{proof} 
We shall denote by  $\Pc$ be the spectral projection to the essential spectrum of $H_\p$. As an immediate consequence of the 
above proposition we find that \ref{ass:h1} and \ref{ass:h2} imply  $\Pc = P_\text{ac}$,
where   $P_\text{ac} $ denotes the projection onto the absolutely continuous spectral subspace  of $H_\p$.  
As we are in a statistical physics setting, we have to consider density matrices as states of our system. They form a subset of the Hilbert-Schmidt operators, so we will work in a space isomorphic to the latter,
\[
\H_\p \otimes \H_\p. 
\]

\begin{rem} We note   that   \ref{ass:h2} is  mostly satisfied for the  potentials which we consider. 
To this end,  assume  \ref{ass:h1} and let $V_\alpha = \alpha V$ for $\alpha \geq 0$.   Let  $K_\alpha$ be the operator with integral kernel 
$-|V_\alpha(x)|^{1/2} V_\alpha^{1/2}(y) /(4\pi|x-y|)$. Clearly $K_\alpha = \alpha K_1$,
and  it is straightforward 
to see, that  $K_1$ is Hilbert-Schmidt and hence a compact operator. We conclude that 
for any $k > 0$, there exist only finitely many $\alpha \in [0,k]$ such that   $K_\alpha$
has eigenvalue one or equivalently   that   \ref{ass:h2} is violated for $V_\alpha$. 
 Furthermore, we note  that  \ref{ass:h2} 
is   equivalent to the statement that the so-called Fredholm deterimant 
$\text{det}_2(1 -  K_0)$ is nonzero, see \cite{newton}.  
 For definition of $\text{det}_2$ we refer the reader to 
\cite{rs4}.
Now  Fredholm determinants 
are   finite analytic (cf. \cite{simon_trace})
 and $\text{det}_2(1) \neq 0$,  which could be used to  obtain 
a  second argument why   potentials which do 
not satisfy   \ref{ass:h2} are rather rare. 
Finally, we note  the property    \ref{ass:h2} is referred to  in  \cite{newton} as  energy  0 being    an exceptional point. 
\end{rem}

\begin{rem}
For the result and our proof one actually needs just finitely many derivatives of $V$. Therefore, one could weaken \ref{ass:h1}. However it seems a bit tedious in the proof to keep track to which order exactly derivatives are required. 
\end{rem}
\subsection{The Quantized Field}
The quantized field will be described by operators on Fock spaces. 
Let $\FF(\hs)$ denote the bosonic Fock space over a Hilbert space $\hs$, that is, 
\[
\FF(\hs) = \bigoplus_{n=0}^\infty \hs^{\otimes_\s n},
\]
where  $\otimes_\s n$ denotes  the $n$-fold symmetric tensor product of Hilbert spaces, and $\hs^{\otimes_\s 0} := \IC$. For $\psi \in \FF(\hs)$, we write $\psi_n$ for the $n$-th element in the direct sum and we use the notation $\psi = (\psi_0, \psi_1, \psi_2, \ldots)$. The vacuum vector is defined as $\Omega := (1,0,0,\ldots)$. For a dense subspace $\mathfrak{d} \subseteq \hs$ we define the dense space of finitely many particles in $\FF(\hs)$ by
\begin{align*}
\FF_\fin(\mathfrak{d}) := &\{  (\psi_0, \psi_1, \psi_2, \dots)  \in \FF(\hs) \: \psi_n \in \mathfrak{d}^{\otimes_\s n} \text{ for all }  n \in \IN_0 \\
&\text{ and there exists }  N \in \IN_0 \: \psi_n = 0 \text{ for } n \geq N \},
\end{align*}
where here   $\otimes_\s n$  represents the $n$-fold symmetric tensor product of vector spaces. 
Again we will work on the space of density matrices and thus use the space
\begin{align} \label{eq:defgluinghilb} 
\FF:= \FF(L^2(\IR \times \IS^2 )) \cong  \FF(L^2(\IR^3)) \otimes \FF(L^2(\IR^3))  ,
\end{align} 
where $L^2(\IR \times \IS^2 ) := L^2 (\IR \times \IS^2,\d u \times \d \Sigma)$ and $\d \Sigma$ denotes the standard spherical measure on $\IS^2$. We note that the canonical identification
\eqref{eq:defgluinghilb},  outlined  in \autoref{rem:origin}
below,  is referred to as  `gluing' and was first introduced  by \textsc{Jak{\v{s}}i{\'{c}}} and \textsc{Pillet} in \cite{jakpillet2}.
For $\psi \in \FF$ notice that $\psi_n$ can be understood as an $L^2$ function in $n$ symmetric variables $(u,\Sigma) \in \IR \times \IS^2$. 

Let $\H$ be a Hilbert space.  By $\mathcal{L}(\H)$ we shall denote the space of 
all bounded linear operators on $\H$. On $\H \otimes \FF$ we define a so-called generalized  annihilation operator for a function $F \in L^2(\IR \times \IS^2, \L( \H ))$ by
\begin{align*}
(a(F) \psi )_n&(u_1, \Sigma_1, \ldots , u_n, \Sigma_n )  \\ &=
\sqrt{n+1} \int  F(u,\Sigma)^*
\psi_{n+1}(u,\Sigma, u_1, \Sigma_1, \ldots , u_n, \Sigma_n  ) \d(u,\Sigma) .
\end{align*}
Note that $a(F) \Omega = 0$. The generalized creation operator $a^*(F)$ is defined as the adjoint of $a(F)$. By definition $F \mapsto a^*(F)$ is linear whereas $F \mapsto a(F)$ is anti-linear. In the scalar case where  $\H = \IC$,  we have $\H \otimes \FF = \FF$ and one obtains the usual   creation and annihilation operators satisfying  the canonical commutation relations, for $f,g \in L^2(\IR \times \IS^2)$,
\[
[a(f),a^*(g)] = \sc{f,g}, \qquad [a(f),a(g)] = 0, \qquad [a^*(f),a^*(g)] = 0.
\]
We also define the field operators as $\Phi(F) = a(F) + a^*(F)$. 

For a measurable function  $M \: \IR \times \IS^2 \to \IR$ we introduce the second quantization $\dG(M)$, which is a self-adjoint operator on $\FF$, given for $\psi \in \FF$ by
\[
(\dG(M) \psi)_n(u_1, \Sigma_1, \ldots, u_n,\Sigma_n) = \left(\sum_{i=1}^n M(u_i,\Sigma_i) \right) \psi_n(u_1, \Sigma_1, \ldots, u_n,\Sigma_n),  n \in \IN_0.
\]
In particular we will use the number operator,
\begin{align*}
\Nf := \dG(1).
\end{align*}

\subsection{Liouvillian with Interaction}
\label{subsec:liouvillian}
We assume to have an interaction term with a smooth spatial cutoff.  For $(\omega,\Sigma) \in \IR_+ \times \IS^2$, where $\IR_+ := (0,\infty)$, we define a bounded multiplication operator on $\H_\p$ by
\begin{align}
\label{eq:G definition}
 G(\omega,\Sigma)(x)  = \kappa(\omega) \chi(x) \Gt(\omega,\Sigma)(x), \qquad x \in \IR^3,
\end{align}
where  $\kappa$ is a   function on $\IR_+$  and $\chi \in \S(\IR^3)$ -- the space of Schwartz functions, and for each $(\omega,\Sigma)$, $\Gt(\omega,\Sigma)$ is a function on $\IR^3$, satisfying the following conditions.
\begin{enumerate}[label=(I\arabic*)]
\item \label{assumption:I1}
\textit{Spatial cutoff:}  
 For all  $n \in \{0,1,2,3\}$
and $\alpha \in \IN_0^3$   the partial derivatives  
 $ \partial_x^\alpha  \partial^n_\omega \Gt$ exist  and are continuous on $\IR_+ \times \IS^2  \times \IR^3$,
and 
 there exists a polynomial $P$ and an $M \in \IN_0$ such that  for all $(\omega,\Sigma,x) \in \IR_+ \times \IS^2 \times  \IR^3$ 
\begin{align} \label{I3polcond} 
\abs{  \partial_x^\alpha  \partial^n_\omega \Gt(\omega,\Sigma)(x) } \leq P(\omega) \langle x \rangle^M ,  
\end{align} 
where $\sc{x} := (1 + x^2)^{1/2}$. 
\item
\label{assumption:I2}
\textit{UV cutoff:} $\kappa$ decays faster than any polynomial, that is, for all $n \in \IN$,
\[
\sup_{\omega \geq 1} \omega^n \abs{ \kappa(\omega) } < \infty.
\]
\item
\label{assumption:I3}
\textit{Regularity and infrared behavior:} 
$\kappa \in C^3(\IR_+)$ and one of the following two properties holds:
\begin{enumerate}[label=(\roman*)]
\item \label{I3:first}
 there exist $k, C \in (0,\infty)$, $p > 2$ such that 
\[
\abs{ \partial^j_\omega \kappa(\omega)   }  \leq  C \omega^{p-j}, \quad  \omega \in (0, k) , 
\quad   j = 0,\ldots,3 , 
\]
\item  \label{I3:second}
there exist   $J \in \IN_0$  and  $\kappa_0 \in C^s([0,\infty))$, with   $s = \max\{0,3-J\}$, such that 
$$
\kappa(\omega)  = \omega^{-\frac{1}{2} + J } \kappa_0(\omega), \quad   \omega >0  ,
$$
and there  exists an extension  $\tilde{G}_0$  of  $\tilde{G}$ to  $[0,\infty) \times \IS^2 \times  \IR^3$
such that  for all $j \in \{0,\ldots,s\}$ and all $\alpha \in \IN_0^3$  the partial derivatives 
 $ \partial_x^\alpha \partial^j_\omega \Gt_0$ exist,  are continuous,     satisfy  \eqref{I3polcond} for all  $(\omega,\Sigma,x) \in [0,\infty) \times \IS^2 \times  \IR^3$, 
and 
\begin{align*} 
&  \partial_{\omega}^j (\kappa_0(\omega) \chi \tilde G_0(\omega,\Sigma))(x)  |_{\omega=0} =    (-1)^{j+J+1} \partial_{\omega}^j  \overline{  (\kappa_0(\omega) \chi \tilde G_0(\omega,\Sigma))(x)}|_{\omega=0} .
\end{align*} 
\end{enumerate}
\end{enumerate}
Let $\beta > 0$ be the inverse temperature and let
\[
\rho_\beta(\omega) := \frac{1}{e^{ \beta \omega} - 1 }
\]
be the probability distribution for black-body radiation. 
To describe the interaction in the positive temperature setting it is convenient to introduce a  map  $\tau_\beta$ as follows. 
For a function $F \: \IR_+ \times \IS^2 \to \mathcal{L}(H_\p)$   we define a function $\tau_\beta F \: ( \R \setminus \{ 0 \} ) \times \IS^2 \to \mathcal{L}(H_\p)$  by 
\begin{align}
\label{eq:taubeta defn}
(\tau_\beta F)(u, \Sigma) &:= \begin{cases} u \sqrt{1+\rho_\beta(u)} F(u, \Sigma), & u > 0, \\
									 u \sqrt{\rho_\beta(-u)}  F(-u ,\Sigma)^* , & u < 0 . \end{cases}
\end{align}
It is straightforward to verify that  $\tau_\beta$ maps the following spaces into each other 
\begin{align*} 
\tau_\beta \: L^2(\IR_+  &\times \IS^2, (\omega^2 + \omega ) \d \omega \d \Sigma ,  \L(\H_\p) ) \rightarrow   L^2(\IR \times \IS^2,   \d \omega \d \Sigma ,  \L(\H_\p)) , 
\end{align*} 
where we adapted the convention to use  again the symbol  for the  restriction.

Let $\C_\p$ be the complex conjugation on $\H_\p$ given by $\C_\p \psi(x) := \overline{\psi(x)}$, and for an operator $T \in \L(\H_\p)$ we define  $\overline{T} := \C_\p T \C_\p$.

Let 
\[
\Dw := \Cci(\IR^3) \otimes \Cci(\IR^3) \otimes \FF_\fin(\Cci(\IR \times \IS^2)),
\]
which is a dense subspace of the composite Hilbert space
\[
\H := \H_\p \otimes \H_\p \otimes \FF.
\]
On $\Dw$ we define for $\lambda \in \IR$ the Liouvillian by
\begin{align}
\label{eq:L definiton}
L_\lambda := (H_\p \otimes \Id_\p - \Id_\p \otimes H_\p) \otimes \Id_\f  + \Id_\p \otimes \Id_\p \otimes \dG(\mult{u}) + \lambda  W  ,
\end{align}
where $W := \Phi(I)$, and
\muu
{
I(u,\Sigma) &:= \Il(u,\Sigma) \otimes \Id_\p + \Id_\p \otimes \Ir(u,\Sigma),    \\
\Il(u, \Sigma) &:= \tau_\beta(G)(u,\Sigma), \qquad \Ir(u, \Sigma) :=  - e^{-\beta u  /2}   \tau_\beta(\G^*)(u,\Sigma).
}
\newcommand{\Hft}{{\tilde{\H}_\f}
}
Note that by \ref{assumption:I1}--\ref{assumption:I3}, $G \in L^2(\IR_+ \times \IS^2, (\omega^2 + \omega ) \d \omega \d \Sigma ,  \L(\H_\p) )$ and therefore, $I \in L^2(\IR \times \IS^2, \Lb(\H_\p \otimes \H_\p))$ and the expression $\Phi(I)$ is well-defined.
It will be shown below (\autoref{th:gjn check}) that $L_\lambda$ is indeed essentially self-adjoint.
\begin{rem} \label{rem:origin}  Let us  give a unitarily equivalent 
definition  of  the  Liouvillian \eqref{eq:L definiton}, which uses the usual  notations from physics
and which  does not involve the 
`gluing' construction.   Let   $c^*(k)$ and $c(k)$, $k \in \IR^3$, denote the usual
creation and annihilation operator-valued distributions  of   $\FF(L^2(\IR^3))$.    Let the free  field energy operator on $\FF(L^2(\IR^3))$
be given by 
$\Hf = \int |k| c^*(k) c(k) dk$.  For the coupling function introduced in \eqref{eq:G definition}
we define  $\widehat{G}(\omega \Sigma) = G(\omega,\Sigma)$ for all $(\omega,\Sigma) \in \R_+ \times \IS^2$. It follows 
from  \ref{assumption:I1}--\ref{assumption:I3} that $\widehat{G}  \in L^2(\IR^3, (1+ \abs k^{-1}) \d k , \L( \H_\p ))$.
We consider  the following operator in  $\H_\p \otimes \H_\p \otimes   \FF(L^2(\IR^3))  \otimes  \FF(L^2(\IR^3))    $,
\begin{align*}
\widehat{L}_\lambda = (H_\p \otimes &  \Id_\p - \Id_\p  \otimes H_\p) \otimes \Id_\f \otimes  \Id_\f   + \Id_\p \otimes \Id_\p \otimes (\Hf \otimes   \Id_\f - 
  \Id_\f  \otimes \Hf) 
\\  + 
\lambda \int \d k \bigg\{  & \left( \sqrt{ 1 + \rho_\beta(|k|)} \widehat{G}(k) \otimes \Id_\p - \sqrt{\rho_\beta(|k|)} \Id_\p \otimes \overline{\widehat{G}}^*(k) \right) \otimes c^*(k) \otimes \Id_\f \\
 +& \left( \sqrt{ 1 + \rho_\beta(|k|) } \widehat{G}^*(k) \otimes \Id_\p - \sqrt{\rho_\beta(|k|)} \Id_\p \otimes \overline{\widehat{G}}(k) \right) \otimes c(k) \otimes \Id_\f \\
 +& \left( \sqrt{  \rho_\beta(|k|)} \widehat{G}^*(k) \otimes \Id_\p - \sqrt{ 1+ \rho_\beta(|k|)} \Id_\p \otimes \overline{\widehat{G}}(k) \right)  \otimes \Id_\f \otimes c^*(k)  \\
 +& \left( \sqrt{  \rho_\beta(|k|)} \widehat{G}(k) \otimes \Id_\p - \sqrt{1 + \rho_\beta(|k|)} \Id_\p \otimes \overline{\widehat{G}}^*(k) \right)  \otimes \Id_\f \otimes c(k)  \bigg\}  , 
\end{align*}
which is defined on the dense subspace 
$\hat{ \mathcal{D} } := \Cci(\IR^3) \otimes \Cci(\IR^3) \otimes \FF_\fin( \Def(|\cdot|)) \otimes  \FF_\fin(\Def(|\cdot|))$. 
One can show that  $\widehat{L}_\lambda$ is unitarily equivalent to \eqref{eq:L definiton}. 
To this end,  we  define the unitary transformation 
\[
\psi  \: L^2(\R^3) \longrightarrow L^2(\R_+ \times \mathbb{S}^2 )\]
with $\psi(f)(\omega,\Sigma) :=  \omega   f(\omega \Sigma)$, inducing 
the unitary transformation of Fock spaces  $\Gamma(\psi)$, see for example 
\cite[Section X]{rs2} , 
\begin{align*} 
&  \Gamma(\psi)  \: \FF(L^2(\IR^3)) \longrightarrow  \FF(L^2(\R_+ \times \mathbb{S}^2 ) )  ,  
\end{align*} 
and the unitary transformation 
\begin{align*}
&\tau \: L^2(\R_+ \times \mathbb{S}^2 ) \oplus L^2(\R_+ \times \mathbb{S}^2) \longrightarrow L^2(\R \times \mathbb{S}^2), \\ &\tau(f_1,f_2)(u, \Sigma) := \begin{cases} f_1(u, \Sigma), & u > 0, \\ - f_2(-u,\Sigma), & u < 0,  \end{cases}
\end{align*}
inducing the unitary transformation of Fock spaces  $\Gamma(\tau)$.
Furthermore,  let $V_\mathfrak{h}$ denote the canonical unitary map 
$$
V_\mathfrak{h} \:  \FF(\mathfrak{h}) \otimes    \FF(\mathfrak{h})  \longrightarrow   \FF(\mathfrak{h}   \oplus \mathfrak{h} )   ,
$$
characterized by mapping the tensor product of the vacua to the vacuum and  satisfying  $V_\mathfrak{h} (b(f)\otimes \Id + \Id \otimes b(g)   ) V_\mathfrak{h}^* = b(f,g)$, here $b(\cdot)$ 
stands for the usual annihilation operator on the corresponding Fock spaces.  As a consequence of the   definition it follows that 
 $U := \Gamma(\tau)   \circ V_{  L^2(\R_+ \times \mathbb{S}^2   ) } \circ  (  \Gamma(\psi)  \otimes \Gamma(\psi) ) $ is an  unitary transformation 
\begin{equation}  \nonumber 
U \: \FF(L^2(\IR^3)) \otimes \FF(L^2(\IR^3))    \longrightarrow  \FF(L^2(\IR \times \IS^2 ))   , 
\end{equation} 
which 
satisfies  the   property 
\begin{equation} \label{eq:propgenequiv1} 
  a(u ,\Sigma)  =  U \left(  \ind_{u > 0} u  c (u \Sigma) \otimes  \Id  +   \Id \otimes \ind_{u < 0}  u c(- u \Sigma) \right) U^*  , \quad (u,\Sigma) \in \IR \times \IS^2 , 
\end{equation} 
 where   $a(u,\Sigma)$ denotes  the usual
 annihilation operator-valued distribution  of   $\FF(L^2( \allowbreak \R  \times \mathbb{S}^2 )  ) $.
Using    \eqref{eq:propgenequiv1}  it is    straightforward to verify that
 $ L_\lambda = U \widehat{L}_\lambda   U^*$ on $\Dw$. 

\end{rem}

\subsection{Main Result}
\label{subsection:main}

For the proof of our main result we need an  additional assumption. The  instability of the eigenvalues should be  visible in second order in perturbation theory with respect to the coupling constant. This term is also called level shift operator and the corresponding positivity assumption \textit{Fermi Golden Rule condition}. In \Autoref{subsec:applications}  an example is provided where this is satisfied. 
\begin{fgr}
For $E \in \sigma_\d(H_\p)$ let $p_E := \ind_{\{E\}}(H_\p)$  be the spectral projection corresponding to the eigenvalue $E$.
For  $\varepsilon > 0$ and  $E \in \sigma_\d(H_\p)$ let  $\gb(E,\varepsilon)$ be the largest 
number such that 
\begin{align*}
p_E \left( F^{(1)}_\beta(E,\varepsilon) +  F^{(2)}_\beta(E,\varepsilon) \right) p_E   &\geq \gb(E,\varepsilon) p_E,
\end{align*}
where
\begin{align*}
F^{(1)}_\beta(E,\varepsilon) &:= \int_{0}^\infty \int_{\IS^2}  \frac{\omega^2}{e^ {\beta \omega}  - 1}  G(\omega,\Sigma)  \frac{ \Pc}{(H_\p - E - \omega)^2 + \varepsilon^2} G(\omega,\Sigma)^*    \d \Sigma  \d \omega , \\
F^{(2)}_\beta(E,\varepsilon) &:= \int_0^{\infty} \int_{\IS^2}   \frac{\omega^2}{1 - e^ {-\beta \omega} }  G(\omega,\Sigma)^*  \frac{ \Pc}{(H_\p - E + \omega)^2 + \varepsilon^2} G(\omega,\Sigma)    \d \Sigma  \d \omega .
\end{align*}
Furthermore we set $\gb(\varepsilon) := \inf_{E \in \sigma_\d(H_\p)} \gb(E,\varepsilon)$. 
\begin{enumerate}[label=(F)]
\item \label{fgrc} By the Fermi Golden Rule Condition for $\beta$ we mean that there exists an  $\varepsilon > 0$ such that $\gamma_\beta(\varepsilon ) > 0$. 
\end{enumerate}

\end{fgr}
Notice that $\gb$ might depend on $\beta$. In particular, this is the case if, as in \cite{merkli1,merkli2}, \ref{fgrc} is verified using only the term $\varepsilon F^{(1)}_\beta(E,\varepsilon)$ (for $\varepsilon \to 0$), which decays exponentially to zero if $\beta \to \infty$. However, one can also obtain results uniformly in $\beta$ for $\beta \geq \beta_0 > 0$ (low temperature) by proving that $F^{(2)}_\beta(E,\varepsilon)$  is positive for a fixed $\varepsilon > 0$, which will be done in the next section,  \autoref{th:application2}. 

Let  us now state the main result of this paper.
%
\begin{thm} 
\label{th:main}
Assume that \ref{ass:h1},  \ref{ass:h2} and  \ref{assumption:I1}--\ref{assumption:I3} hold. Let $\beta_0 > 0$ and  $\varepsilon > 0$. 
Then there exists a constant $ C > 0$ such that the following holds.  If   $\beta \geq \beta_0$ and  $0 < \abs \lambda <  C \min\{ 1 , \gb^2(\varepsilon)\}$    the operator  $L_\lambda$ given in \eqref{eq:L definiton} does not have  zero as an eigenvalue. 
\end{thm}


\subsection{Application}
\label{subsec:applications}

\renewcommand{\q}{\hat {\mathsf x}}

In the following we present an example of a QED system with a linear coupling term (Nelson Model) where the conditions for the main theorem are satisfied. In particular, one can verify the Fermi Golden Rule condition \ref{fgrc} in this case. We summarize this in the following corollary to \autoref{th:main}.
\begin{coro}
\label{th:application2}  Assume that  \ref{ass:h1}, \ref{ass:h2} hold.
Let 
\begin{align}
\label{eq:eikx3}
G(\omega,\Sigma)(x) =  \kappa(\omega ) e^{ \i \omega \Sigma x } \chi( x), \qquad \text{for all } \quad  x \in \IR^3, \, (\omega, \Sigma  ) \in \R_+ \times \IS^2 , 
\end{align}
where  $\chi \in \S(\IR^3)$ is nonzero, and  $\kappa$ is a nonzero   function on $\IR_+$ satisfying \ref{assumption:I2} and one of the following two conditions: 
\begin{enumerate}[label=(\alph*)]
\item \label{I3 application first} part \ref{I3:first} of  \ref{assumption:I3}  holds,
\item \label{I3 application second} $\chi$ is real-valued and there exist positive constants $c$ and $C$ such that 
\begin{align}
\label{eq:kappa other case}
\kappa(\omega) = C \omega^{-\frac{1}{2}} e^{-c \omega^2}
\end{align}
for all $\omega \geq 0$. 
\end{enumerate}
Then for any $\beta_0 > 0$ there exists a  $\lambda_0 > 0 $, such that  whenever $0 < \abs \lambda < \lambda_0$ 
and  $\beta \geq \beta_0$ the operator  $L_\lambda$, or equivalently  $\widehat{L}_\lambda$ defined in \autoref{rem:origin}, 
does not have zero as an eigenvalue. 
\end{coro}
\begin{proof}[Proof of \autoref{th:application2}]
Derivatives with respect to $\omega$ and $x$ yield only polynomial growth in $x$ and $\omega$, respectively. Thus, \ref{assumption:I1} is satisfied. The conditions \ref{ass:h1}, \ref{ass:h2}, \ref{assumption:I2} and  \ref{assumption:I3} \ref{I3:first} (if \ref{I3 application first} holds), are satisfied by assumption. In the case of \ref{I3 application second} note that $G$ in \eqref{eq:eikx3} can be multiplied with any phase $e^{\i \varphi}$, $\varphi \in \IR$, which just yields unitary equivalent Liouvillians by means of the unitary transformation $\Id_\p \otimes \Id_\p \otimes \Gamma(e^{\i \varphi})$ (cf. \cite[Section X]{rs2} for the definition of $\Gamma$). Therefore, we can assume without loss of generality that instead of \eqref{eq:kappa other case}, we have
\[
\kappa(\omega) = \i C  \omega^{-\frac{1}{2}} e^{-c \omega^2}.
\]
The condition \ref{assumption:I3} \ref{I3:second} is actually satisfied in this case. One has to verify
\[
 \partial_{\omega}^j ( \i e^{-c \omega^2} \chi(x)  e^{ \i \omega \Sigma x } )  |_{\omega=0} =    (-1)^{j+1} \partial_{\omega}^j  \overline {\i e^{-c \omega^2} \chi(x)  e^{ \i \omega \Sigma x } }|_{\omega=0}
\]
 for all $\Sigma \in \IS^2$, $x \in \IR^3$ and  $j=0, \ldots ,3$, which is true.

It remains the verification of the Fermi Golden Rule condition. This will
follow from  \autoref{th:fgrc verification2}, below, since $\sigma_\d(H_\p)$ is finite. 
\end{proof}

\begin{prop}
\label{th:fgrc verification2} Suppose the assumptions of \autoref{th:application2} hold, and   let $E \in \sigma_\d(H_\text{p})$. Then 
for any $\varepsilon > 0$ there exists a  $\gamma > 0$ (independent of $\beta$)  such that 
\begin{align*}
p_E F^{(2)}_\beta(E,\varepsilon) p_E
 &\geq \gamma p_E. 
\end{align*}
\end{prop}
\begin{proof}
 Let $\varphi_E$ denote a normalized eigenvector  of $-\Delta +V$ with eigenvalue $E$.  
First observe  that 
\begin{align*}
&\sc{ \phi_E, F^{(2)}_\beta(E,\varepsilon) \phi_E} \\ &\qquad \geq \int_0^{\infty} \int_{\IS^2}   \omega^2 \sc{\phi_E, G(\omega,\Sigma)^*  \frac{ \Pc}{(H_\p - E + \omega)^2 + \varepsilon^2} G(\omega,\Sigma) \phi_E}   \d \Sigma  \d \omega.
\end{align*}
The integrand is continuous in $(\omega,\Sigma)$ and non-negative. 
Thus, by the  finite dimensionality of the eigenspace with eigenvalue $E$  and continuity  it suffices to show that for some $(\omega,\Sigma)$ we have 
\[
\sc{G(\omega,\Sigma) \phi_E,  \frac{ \Pc}{(H_\p - E + \omega)^2 + \varepsilon^2} G(\omega,\Sigma) \phi_E} \not= 0,
\]
which follows  if we can show
\begin{align}
\label{eq:phi_E not zero}
\Pc G(\omega,\Sigma) \phi_E \not= 0.
\end{align}
 We now  claim that \eqref{eq:phi_E not zero} holds for some   $(\omega,\Sigma)$.
Otherwise, by the finite dimensionality of the range of  $\Id-P_\text{ess}$,  the space  
\begin{equation} \label{eq:infspaceclaim} 
\{ G(\omega, \Sigma) \phi_E  \:  (\omega,\Sigma) \in (0,\infty) \times \IS^2 \} 
\end{equation} 
would be finite-dimensional as well. But this leads to contradiction. 
By unique  continuation of eigenfunctions (\cite[Theorem XIII.57]{rs4})  $\phi_E(x) \not= 0$ for a.e. $x \in \IR^3$, and thus,  $\chi \varphi_E$ is nonzero. 
By assumption  $\kappa$ does not vanish on some nonempty open interval $I \subset (0,\infty)$. 
It now follows  that 
\[
\{ \tilde{G}(\omega, \Sigma) \chi \phi_E  \:  (\omega,\Sigma ) \in  I \times \IS^2 \} 
\]
is a subspace of \eqref{eq:infspaceclaim}  and has 
infinite dimension, by well-known methods (e.g. by calculating  Wronskians and  Vandermonde determinants). 
\end{proof}

Instead of using $F^{(2)}_\beta(E,\varepsilon)$ one could  also verify \ref{fgrc} with the first term $F^{(1)}_\beta(E,\varepsilon)$ in the limit $\varepsilon \to 0$. This does not improve the qualitative statement of \autoref{th:application2} and has the drawback that $\gb(\varepsilon) \to 0$ as $\beta \to \infty$ as in \cite{merkli1,merkli2}. However, in certain situations 
 it  might give the  dominant contribution. In this context we would like to mention the ``zero temperature 
result'' about the  leading order contribution of  the ionization probability in
the photoelectric effect  \cite{photoelectric3}.

\subsection{Overview of the Proof}
\label{sec:overview proof}

\renewcommand{\q}{\hat{\mathsf k}}
\newcommand{\po}{{\hat{\mathsf q}}}

The first step is to find a suitable conjugate operator $A$ consisting of a part $\Aps$ on the particle space $\H_\p$ and a part $A_\f$ on the field space $\FF$. 

For the latter we make the same choice as established for the first time in \cite{jakpillet3} and later also used by \textsc{Merkli} and co-authors in \cite{merkli_positive,merkli1,merkli2}, namely the second quantization of the generator of translations, 
\[
A_\f = \dG(\i \partial_u).  
\]
Let $P_\Omega$ denote the orthogonal projection onto the one-dimensional subspace containing the vacuum $\Omega$ (``vacuum subspace''). Formally, we obtain on  $\FF$  that
\[ \i [\dG(\mult{u}), A_\f] = \dG( \Nf ) \geq \PP{P_\Omega},
\]
which yields a positive contribution on the space orthogonal to  the vacuum.
The Hilbert space $\mathcal{H}$ can be further decomposed by means of the projection 
\begin{equation}
\label{eq:Pi defn}
\Pi := \ind_{L_0 = 0} = \ind_{L_\p = 0}\otimes P_\Omega
\end{equation} 
as 
\begin{equation}
\label{eq:ranpi}
\H = \ran (\Id_{\H_\p \otimes \H_\p} \otimes P_\Omega^\perp ) \oplus \ran \Pi \oplus  \ran ( \ind_{L_\p \not= 0} \otimes P_\Omega).
\end{equation} 
To obtain a positive operator on $\ran \Pi$, we proceed again as in \cite{merkli1,merkli2} and consider a bounded operator $A_0$ on the whole space $\H$.  The Fermi Golden Rule Condition \ref{fgrc} then implies that 
\[
\i \Pi [L_\lambda, A_0] \Pi  > 0.
\]
 The details can be found in \Autoref{subsec:error}.

Let $\Pd$ denote the spectral projection to the discrete spectrum of $H_\p$. Note that  $ \PP{\Pd}  = \Pc $. The third space in \eqref{eq:ranpi} can be decomposed further by use of
\begin{align}
\nonumber 
\ind_{L_\p \not= 0} &= ( \Pc \otimes \Pc) \oplus  (\Pc \otimes \Pd) \\ &\qquad  \oplus (\Pd \otimes \Pc)  \oplus  \ind_{L_\p \not= 0}(\Pd \otimes \Pd ) .\label{eq:Lp not zero} 
\end{align}
We start with the space generated by the first projection $\Pc \otimes \Pc$. In contrast to \cite{merkli2} the conjugate operator in the particle space will be defined as follows. We first diagonalize the non-negative part of $H_\p$ by means of   generalized eigenfunctions associated  to the positive (continuous) spectrum, the \textit{scattering functions}, which we recall in \Autoref{sec:scattering}. This will establish a unitary map $\Vc$ between the non-negative eigenspace of $H_\p$ and $L^2(\IR^3)$ with the property that $\Vc^* H_\p \Vc = \q^2$, where $\q = (\q_1, \q_2, \q_3)$ denotes the vector of multiplication operators with the respective components.  Let 
\[\Adil := \frac{1}{4} (\po \q + \q \po)\]
be the generator of dilations, where $\po :=  \i \nabla = (\i \partial_1,\i \partial_2,\i \partial_3) $
and where we used the notation 
 $\po \q := \sum_{j=1}^3 \po_j \q_j$ and  $\q \po  := \sum_{j=1}^3  \q_j \po_j$.
Then
\[
\Aps := \Vc ^* \Adil  \Vc
\]
has the effect that 
\[
\i [H_\p, \Aps] = \Vc^* \q^2 \Vc = \Pc H_\p,
\]
which is strictly positive on  $\ran \Pc$.  We combine $A_\f$ and $\Aps$ to an operator on $\H$ by
\[
\As = (\Aps \otimes \Id_\p - \Id_\p \otimes \Aps) \otimes \Id_\f + \Id_\p \otimes \Id_\p \otimes A_\f,
\]
which yields 
\[
\i[L_0, \As] = ( \Pc H_\p \otimes \Id_\p + \Id_\p \otimes \Pc H_\p ) \otimes \Id_\f + \Id_\p \otimes \Id_\p \otimes \Nf.
\]
As $\As$ is unbounded, it is necessary to use a virial theorem for the positive commutator method to work. We will indeed use the same abstract versions developed in  \cite{merkli1,merkli2} which are repeated in \Autoref{sec:abstract virial}. In order to be able to apply the virial theorem \autoref{thm:firstvirial} it is necessary that the commutators are bounded on the atomic space (see \eqref{thm:virthm:eq2} and \eqref{thm:virthm:eq3}). Thus one has to include a regularization in $\Aps$. The exact definition of $\As$ and of a regularized version $\Ase$, as well as the verification of the conditions for the virial theorems, can be found in \Autoref{section:verification}. 

For the space corresponding to the sum of the remaining three   projections in \eqref{eq:Lp not zero} we choose an operator $Q$ on $\H_\p \otimes \H_\p$ given as a bounded continuous function of $L_\p$, which vanishes at the origin. We add a suitable operator $T$ depending on the interaction and $\lambda$ to accomplish 
\[
\sc{\psi, (Q \otimes P_\Omega + T) \psi} = 0
\]
for all $\psi \in \ker L_\lambda$. Now  the distance between the essential and the discrete spectrum is strictly positive and  the  distances  between the distinct discrete eigenvalues of $H_\p$ are   bounded from below 
by a positive number. Therefore,  $Q \otimes P_\Omega$ is   strictly positive on 
\[
\ran ( \ind_{L_\p \not= 0}  \otimes P_\Omega ) .
\]
The operator $T$ will  be viewed as an  error term which will  be estimated by $\Nf$.

Finally, there will arise further error terms from the the commutator of the interaction with $\As$ and $A_0$, respectively. The general idea to control them, is  to  estimate them in terms of   $\Nf$ on $\FF$ and 
in terms of bounded terms on $\H_\p \otimes \H_\p \otimes \ran P_\Omega$, respectively. It is for the 
  latter that we need   the decompositions   \eqref{eq:ranpi}  and  \eqref{eq:Lp not zero} as well as 
 the corresponding positive operators mentioned  above. On $\ran \Pc \otimes \Pc$ we estimate them by $\po^{-2}$ and then use that by the uncertainty principle 
\[
\q^2 - \lambda \po^{-2} > 0
\]
for $\lambda > 0$ sufficiently small (cf. \cite[X.2]{rs2}). 

\section{Abstract Virial Theorems}
\label{sec:abstract virial}
In this section we recall the abstract virial theorems of \cite{merkli1,merkli2}. They are based on Nelson's commutator theorem, which can be used for proving self-adjointness of operators which are not bounded from below. An important notion will be that of a GJN triple. 
\begin{defn}[GJN triple]
Let $\HH$ be a Hilbert space, $\mathcal{D} \subset \HH $ a core for a self-adjoint operator $Y \geq \Id$, and $X$ a symmetric operator on $\mathcal{D}$. We say the triple
$(X,Y,\mathcal{D})$ satisfies the Glimm-Jaffe-Nelson (GJN) condition, or that $(X,Y,\mathcal{D})$ is a GJN-triple, if there is a constant $C < \infty$, such that
for all $\psi \in \mathcal{D}$:
\begin{align}
\| X \psi \| & \leq C \| Y \psi \| , \label{eq:gjn1} \\
 \pm \i \{ \sc{ X \psi , Y \psi} - \sc{ Y \psi , X \psi } &\leq C \sc{ \psi , Y \psi } .  \label{eq:gjn2}
\end{align}
\end{defn}

\begin{thm}[GJN commutator theorem, {\cite[Theorem X.37]{rs2}}] 
\label{thm:invariance_gjn_domain}
If $(X,Y,\mathcal{D})$ satisfies the GJN condition, then $X$ determines a self-adjoint operator (again denoted by $X$),
such that $\Def(X) \supset \Def(Y)$. Moreover, $X$ is essentially self-adjoint on any core for $Y$, and \eqref{eq:gjn1} is valid for all
$\psi \in \Def(Y)$.
\end{thm}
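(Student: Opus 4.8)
\emph{Proof strategy.} This is the classical commutator theorem of Glimm--Jaffe--Nelson, so the plan is to prove it from scratch, splitting the work according to which of the two GJN estimates is used: the relative bound \eqref{eq:gjn1} produces a canonical closed symmetric extension $\bar X$ on which \eqref{eq:gjn1} survives, and the commutator bound \eqref{eq:gjn2} upgrades this extension to a self-adjoint one.

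For the first step, note that since $\mathcal D$ is a core for $Y\ge\Id$, every $\psi\in\Def(Y)$ is a limit in the graph norm $\|\cdot\|_Y$ of a sequence $(\psi_n)\subset\mathcal D$; by \eqref{eq:gjn1} the images $(X\psi_n)$ are Cauchy, so $\bar X\psi:=\lim_n X\psi_n$ defines a symmetric extension of $X$ with $\Def(\bar X)\supseteq\Def(Y)$ and $\|\bar X\psi\|\le C\|Y\psi\|$ for all $\psi\in\Def(Y)$ — which is already the last assertion of the theorem. Moreover $X$ is thereby continuous as a map $(\Def(Y),\|\cdot\|_Y)\to\HH$, so the closure of $X$ restricted to \emph{any} core of $Y$ is one and the same closed operator; hence it suffices to show this operator, still denoted $\bar X$, is self-adjoint, and "essential self-adjointness on every core for $Y$" follows automatically.

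For self-adjointness I would invoke the basic criterion: a closed symmetric $T$ is self-adjoint iff $\ker(T^*\mp i\mu)=\{0\}$ for some $\mu>0$. So take $\phi\in\Def(\bar X^*)$ with $\bar X^*\phi=\pm i\mu\phi$, the aim being $\phi=0$ once $\mu$ is large. The tool is the family of mollifiers $J_\epsilon:=(\Id+\epsilon Y)^{-1}$, $\epsilon>0$: bounded, self-adjoint, commuting with $Y$, with $\|J_\epsilon\|\le1$, $J_\epsilon\to\Id$ strongly, and mapping $\HH$ into $\Def(Y)\subseteq\Def(\bar X)$. Pairing the eigenvalue relation with $J_\epsilon\phi\in\Def(\bar X)$ gives $\langle\phi,\bar X J_\epsilon\phi\rangle=\langle\bar X^*\phi,J_\epsilon\phi\rangle=\mp i\mu\langle\phi,J_\epsilon\phi\rangle\to\mp i\mu\|\phi\|^2$ as $\epsilon\to0$, so $\Im\langle\phi,\bar X J_\epsilon\phi\rangle\to\mp\mu\|\phi\|^2$. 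Against this I want the competing uniform bound $|\Im\langle\phi,\bar X J_\epsilon\phi\rangle|\le\tfrac C2\|\phi\|^2$ with $C$ the GJN constant; it then forces $\mu\le C/2$, so choosing $\mu>C/2$ (the same argument covering the conjugate eigenvalue) yields $\phi=0$.

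The uniform bound is where \eqref{eq:gjn2} enters and is the only genuinely delicate point. I would first promote \eqref{eq:gjn2} from $\mathcal D$ to all of $\Def(Y)$ by the same graph-norm density argument, obtaining $|\langle\bar X\psi,Y\psi\rangle-\langle Y\psi,\bar X\psi\rangle|\le C\langle\psi,Y\psi\rangle$ for $\psi\in\Def(Y)$. Since $\phi$ itself need not lie in $\Def(\bar X)$, I would insert a second mollifier: for $\eta:=J_\epsilon J_\delta\phi\in\Def(Y^2)$ one has $J_\delta\phi=(\Id+\epsilon Y)\eta$, hence $\langle J_\delta\phi,\bar X J_\epsilon J_\delta\phi\rangle=\langle\eta,\bar X\eta\rangle+\epsilon\langle Y\eta,\bar X\eta\rangle$; the first term is real by symmetry of $\bar X$, so its imaginary part is $\epsilon\,\Im\langle Y\eta,\bar X\eta\rangle$, which by the form bound is at most $\tfrac{\epsilon C}{2}\langle\eta,Y\eta\rangle\le\tfrac C2\langle\eta,(\Id+\epsilon Y)\eta\rangle=\tfrac C2\langle J_\epsilon J_\delta\phi,J_\delta\phi\rangle\le\tfrac C2\|\phi\|^2$, uniformly in $\epsilon,\delta$. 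Letting $\delta\to0$ at fixed $\epsilon$ — using $J_\epsilon J_\delta\phi\to J_\epsilon\phi$ in $\|\cdot\|_Y$ and then \eqref{eq:gjn1} on $\Def(Y)$ to pull $\bar X$ through the limit — gives exactly $|\Im\langle\phi,\bar X J_\epsilon\phi\rangle|\le\tfrac C2\|\phi\|^2$ for every $\epsilon$, which closes the argument. The main obstacle, then, is precisely this double-regularization bookkeeping: making sense of "integrating" the form hypothesis \eqref{eq:gjn2} against the mollified putative eigenvector while $\phi\notin\Def(Y)$; once that is organized, everything else is soft functional analysis.
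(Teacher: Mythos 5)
Your proof is correct, but note that the paper itself does not prove this statement: it cites it as Theorem X.37 of Reed--Simon Volume II, so there is no paper proof to compare against. What you have reconstructed is essentially the standard proof of Nelson's (Glimm--Jaffe) commutator theorem: extend $X$ to $\Def(Y)$ by graph-norm density using \eqref{eq:gjn1}, reduce essential self-adjointness to the vanishing of deficiency vectors, and kill a putative deficiency vector $\phi$ by pairing it against $J_\epsilon\phi$ with $J_\epsilon=(\Id+\epsilon Y)^{-1}$, playing the growth $\Im\langle\phi,\bar X J_\epsilon\phi\rangle\to\mp\mu\|\phi\|^2$ against the uniform bound $|\Im\langle\phi,\bar X J_\epsilon\phi\rangle|\le\tfrac C2\|\phi\|^2$ that comes from \eqref{eq:gjn2}. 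The double-mollifier bookkeeping with $\eta=J_\epsilon J_\delta\phi$ to justify that uniform bound when $\phi\notin\Def(Y)$ is handled carefully and correctly, including the passage $\delta\to 0$ at fixed $\epsilon$ using graph-norm convergence $J_\epsilon J_\delta\phi\to J_\epsilon\phi$ and the relative bound on $\Def(Y)$.
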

A consequence of  the GJN commutator theorem   is that the unitary group generated by $X$ leaves the domain of $Y$ invariant. The concrete formulation stated in the next theorem  is taken from \cite{merkli1}.
\begin{thm}[Invariance of domain, \cite{froehlich_invariance}]
Suppose $(X,Y,\D)$ satisfies the GJN condition. Then, for all $t \in \IR$, $e^{\i tX}$ leaves $\Def(Y)$ invariant, and there is a constant $\kappa \geq 0$ such that
\muu
{
\nn{Y e^{\i t X} \psi } \leq e^{\kappa \abs t} \nn{Y \psi}, \qquad \psi \in \Def(Y).
}
\end{thm}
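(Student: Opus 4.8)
The plan is to deduce the statement from the GJN commutator theorem (\autoref{thm:invariance_gjn_domain}) together with a Gr\"onwall-type differential inequality for the quantity $t \mapsto \nn{Y e^{\i t X}\psi}^2$. By \autoref{thm:invariance_gjn_domain} we already know that $X$ is self-adjoint with $\Def(X) \supset \Def(Y)$ and that \eqref{eq:gjn1} extends to all of $\Def(Y)$; what remains is the quantitative bound and, crucially, the fact that $e^{\i tX}$ maps $\Def(Y)$ into itself. The natural obstruction is that a priori we do not know $e^{\i tX}\psi \in \Def(Y)$, so the expression $\nn{Y e^{\i tX}\psi}$ may be infinite and differentiating it is not justified; the standard remedy is a regularization argument.

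First I would introduce the bounded regularizers $Y_\delta := Y(\Id + \delta Y)^{-1}$ for $\delta > 0$, which satisfy $Y_\delta \leq Y$, $Y_\delta \to Y$ strongly on $\Def(Y)$, and $Y_\delta \le \delta^{-1}$. For $\psi \in \Def(Y) = \Def(X) \cap \{\text{dom of }Y\}$ and $\psi_t := e^{\i tX}\psi$, the map $t \mapsto f_\delta(t) := \nn{Y_\delta \psi_t}^2 = \sc{\psi_t, Y_\delta^2 \psi_t}$ is differentiable because $Y_\delta^2$ is bounded and $t \mapsto \psi_t$ is differentiable in $\H$ (as $\psi \in \Def(X)$), giving
\[
f_\delta'(t) = \i\sc{X\psi_t, Y_\delta^2 \psi_t} - \i\sc{Y_\delta^2 \psi_t, X\psi_t}.
\]
The key step is to bound the right-hand side by $C f_\delta(t)$ uniformly in $\delta$. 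This is where one uses the GJN commutator bound \eqref{eq:gjn2}: one shows that the commutator form $\psi \mapsto \i(\sc{X\psi, Y_\delta^2\psi} - \sc{Y_\delta^2\psi, X\psi})$ inherits, with a $\delta$-independent constant, the estimate $\le 2C\sc{\psi, Y_\delta^2 \psi}$ from \eqref{eq:gjn2}. Concretely one writes $Y_\delta^2 = Y_\delta (\Id+\delta Y)^{-1} Y$, commutes $X$ past the bounded, self-adjoint resolvent factors $(\Id+\delta Y)^{-1}$ (the commutator $[X,(\Id+\delta Y)^{-1}] = -\delta (\Id+\delta Y)^{-1}[X,Y](\Id+\delta Y)^{-1}$ is controlled using \eqref{eq:gjn1}, \eqref{eq:gjn2}, and the bound $\nn{\delta Y(\Id+\delta Y)^{-1}}\le 1$), and reduces to applying \eqref{eq:gjn2} to the vector $(\Id+\delta Y)^{-1}\psi_t \in \D$, using a density/closure argument to pass from $\D$ to $\Def(Y)$.

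Granting the inequality $f_\delta'(t) \le 2C f_\delta(t)$, Gr\"onwall's lemma yields $f_\delta(t) \le e^{2C\abs t} f_\delta(0) = e^{2C\abs t}\nn{Y_\delta\psi}^2 \le e^{2C\abs t}\nn{Y\psi}^2$. Now let $\delta \downarrow 0$: since $Y_\delta\psi_t \to Y\psi_t$ strongly whenever $\psi_t \in \Def(Y)$ and otherwise $\nn{Y_\delta\psi_t}\to\infty$, the uniform bound forces $\sup_\delta \nn{Y_\delta\psi_t} < \infty$; because $Y$ is closed (self-adjoint) and $Y_\delta\psi_t = Y(\Id+\delta Y)^{-1}\psi_t$ with $(\Id+\delta Y)^{-1}\psi_t \to \psi_t$, this uniform bound implies $\psi_t \in \Def(Y)$ and $\nn{Y\psi_t} \le \liminf_\delta \nn{Y_\delta\psi_t} \le e^{C\abs t}\nn{Y\psi}$. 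Setting $\kappa := C$ gives the claim for $\psi \in \Def(Y)$, and then for general $\psi \in \Def(Y)$ the bound is exactly what we proved. I expect the main obstacle to be the uniform-in-$\delta$ commutator estimate: one must carefully track that every resolvent insertion produces only bounded, $\delta$-uniformly controlled error terms, so that \eqref{eq:gjn2} can be invoked with a single constant independent of both $t$ and $\delta$.
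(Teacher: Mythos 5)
The paper does not prove this statement; it is quoted from \cite{froehlich_invariance}, with the formulation taken verbatim from \cite{merkli1}, and no proof appears in the present text. So there is no ``paper proof'' to compare against, and I can only assess the proposal on its own.

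Your overall strategy -- regularize $Y$ by $Y_\delta = Y(\Id + \delta Y)^{-1}$, differentiate $f_\delta(t) = \nn{Y_\delta e^{\i tX}\psi}^2$, close a Gr\"onwall inequality uniformly in $\delta$, and then let $\delta \downarrow 0$ using closedness of $Y$ -- is indeed the standard route taken in the literature. The derivative formula and the limiting argument at the end are fine (with the small slip that $(\Id + \delta Y)^{-1}\psi_t$ lies in $\Def(Y)$, not in the core $\D$, so one needs the extension of \eqref{eq:gjn2} to $\Def(Y)$ rather than working on $\D$ directly). However, the step you yourself flag as ``the main obstacle'' is, as you have written it, a genuine gap rather than a routine verification. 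The claim is that \eqref{eq:gjn2} ``inherits, with a $\delta$-independent constant, the estimate $\pm\i[X, Y_\delta^2] \le 2C\, Y_\delta^2$'' by commuting $X$ past the resolvent factors. Let me indicate why this is not what comes out. Writing $R_\delta := (\Id+\delta Y)^{-1}$ and $T := \i[X,Y]$ (form sense), one has the clean identity $\i[X,Y_\delta] = R_\delta T R_\delta =: T_\delta$ and hence $\i[X, Y_\delta^2] = Y_\delta T_\delta + T_\delta Y_\delta$. From $\pm T \le CY$ one gets $\pm T_\delta \le C Y_\delta$ uniformly in $\delta$, which controls the lower-order quantity $\langle \psi_t, Y_\delta \psi_t\rangle$ perfectly well (giving invariance of $\Def(Y^{1/2})$). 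But to control $f_\delta$ you need to bound the anticommutator, and polarizing $2\Re\langle Y_\delta u, T u\rangle$ (with $u = R_\delta\psi_t$) via $\pm T \le CY$ produces the terms $\langle u, Yu\rangle$ and $\langle Y_\delta u, Y Y_\delta u\rangle$. The first is dominated by $f_\delta(t)$, but the second, spelled out spectrally, is $\int \lambda^3 (1+\delta\lambda)^{-4}\,\mathrm{d}\mu_{\psi_t}$ versus $f_\delta(t) = \int \lambda^2(1+\delta\lambda)^{-2}\,\mathrm{d}\mu_{\psi_t}$, and the ratio of the integrands, $\lambda(1+\delta\lambda)^{-2}$, peaks at $\lambda = 1/\delta$ with value $1/(4\delta)$. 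So the straightforward bound you sketch gives at best $|f_\delta'(t)| \lesssim \delta^{-1}f_\delta(t)$, which Gr\"onwall turns into a bound that blows up as $\delta \downarrow 0$. The same $\delta^{-1/2}$ loss appears if one tries weighted Cauchy--Schwarz instead of symmetric polarization. This is not a matter of bookkeeping: the inequality $\pm\{Y_\delta, T_\delta\} \le C' Y_\delta^2$ is simply not a consequence of $\pm T_\delta \le CY_\delta$ and $Y_\delta \ge 0$ alone (one sees this already with $2\times 2$ matrices), so additional structure beyond ``commute the resolvents through and reuse \eqref{eq:gjn2}'' is needed. The actual proof in \cite{froehlich_invariance} must exploit something more -- a different regularizer, a double-commutator control, or a bootstrap from the $Y^{1/2}$ estimate -- and that missing ingredient is exactly what your outline omits.
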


Based on the GJN commutator theorem, we  can now describe  the setting for a general virial theorem. Suppose one is given  a self-adjoint operator $\Lambda \geq \Id$ with
core $\mathcal{D} \subset \HH$, and let  $L, A, N, D, C_n$, $n \in \{0,1,2,3\}$, be symmetric operators  on $\mathcal{D}$  satisfying the relations 
\begin{align} \label{abstractviri1}
&\sc{ \varphi, D \psi } = \i ( \sc{ L \varphi, N \psi } - \sc{ N \varphi, L \psi } )
\end{align}
and
\begin{align}
& C_0 = L, \\
&\sc{ \varphi, C_{n+1} \psi } = \i ( \sc{ C_{n} \varphi, A \psi  } - \sc{ A \varphi, C_{n} \psi } ) , \quad n \in \{0,1,2\},  \label{eq:defn_Cn}
\end{align}
where $\varphi, \psi \in \mathcal{D}$. Furthermore we shall assume:
\begin{enumerate}[label=(V\arabic*)]
\item \label{vi:all gjn} $(X,\Lambda,\mathcal{D})$ satisfies the GJN condition for $X=L,N,D,C_n$, $n\in \{0,1,2,3\}$. Consequently all these operators determine self-adjoint operators, which we denote by the same letters.
\item \label{vi:eitA invariant}  $A$ is self-adjoint, $\mathcal{D} \subset \Def(A)$, and $e^{\i  t A}$ leaves  $\Def(\Lambda)$ invariant.
\end{enumerate}
\begin{thm}[Abstract virial theorem, {\cite[Theorem 3.2]{merkli1}}] \label{thm:firstvirial}   Let  $\Lambda \geq \Id$ be a self-adjoint operator in $\HH$ with
core $\mathcal{D} \subset \HH$, and let  $L, A, N, D, C_n$, $n \in \{0,1,2,3\}$, be  symmetric on $\mathcal{D}$ satisfying relations  \eqref{abstractviri1}--\eqref{eq:defn_Cn}. Assume  \ref{vi:all gjn}  and \ref{vi:eitA invariant}. Furthermore, 
assume that $N$ and $e^{\i  t A}$ commute, for all $t \in \R$, in the strong sense on $\mathcal{D}$, and that there  exist $ 0 \leq p < \infty$ and $C < \infty$ such that
\begin{align}
 \nn{D \psi} &\leq C \nn{N^{1/2} \psi} , \label{thm:virthm:eq1} \\
  \nn{C_1\psi}  &\leq C\nn{ N^p \psi}, \label{thm:virthm:eq2}  \\
  \nn{C_3\psi}  &\leq C \nn{N^{1/2}  \psi}, \label{thm:virthm:eq3}
\end{align}
for all $\psi \in \mathcal{D}$. Then, if $\psi \in \Def(L)$ is an eigenvector of $L$, there is a sequence
of approximating eigenvectors $( \psi_n )_{n \in \mathbb{N}}$ in $\Def(L)\cap \Def(C_1)$ such that $\lim_{n \to \infty} \psi_n = \psi$ in $\HH$, and
\[
\lim_{n \to \infty} \sc{ \psi_n, C_1 \psi_n } = 0 .
\]
\end{thm}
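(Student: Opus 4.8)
The statement is the rigorous form of the classical virial identity. Formally, if $L\psi=E\psi$, then \eqref{eq:defn_Cn} with $\varphi=\psi$, the relation $L\psi=E\psi$ and the self-adjointness of $A$ give
\[
\langle\psi,C_1\psi\rangle=\i\bigl(\langle L\psi,A\psi\rangle-\langle A\psi,L\psi\rangle\bigr)=\i E\bigl(\langle\psi,A\psi\rangle-\overline{\langle\psi,A\psi\rangle}\bigr)=0 ,
\]
since $\langle\psi,A\psi\rangle\in\IR$. The whole difficulty is that $\psi$ is only assumed to lie in $\Def(L)$, so a priori neither $A\psi$ nor $C_1\psi$ is defined and \eqref{eq:defn_Cn} is merely a form identity on $\mathcal D$. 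Replacing $L$ by $L-E$ — which changes neither $D=\i[L,N]$ via \eqref{abstractviri1} nor the $C_k$ via \eqref{eq:defn_Cn} — we may and do assume $E=0$.

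The plan is to produce the approximating sequence inside $\Def(\Lambda)\cap\Def(A)$, a dense subspace containing $\mathcal D$ on which, by (V1) and GJN-density, the identities \eqref{abstractviri1}--\eqref{eq:defn_Cn} extend to genuine form identities and on which any vector automatically lies in $\Def(L)\cap\Def(N)\cap\Def(C_1)$. For such $\psi_n$ the extended form identity \eqref{eq:defn_Cn} yields $\langle\psi_n,C_1\psi_n\rangle=\i\langle L\psi_n,A\psi_n\rangle-\i\overline{\langle L\psi_n,A\psi_n\rangle}=-2\,\Im\langle L\psi_n,A\psi_n\rangle$, whence
\[
\bigl|\langle\psi_n,C_1\psi_n\rangle\bigr|\le 2\,\|L\psi_n\|\,\|A\psi_n\| .
\]
Thus it suffices to construct $\psi_n\in\Def(\Lambda)\cap\Def(A)$ with $\psi_n\to\psi$ in $\H$ and $\|L\psi_n\|\,\|A\psi_n\|\to 0$: then $L\psi_n=(L-E)\psi_n\to 0$, so the $\psi_n$ are approximating eigenvectors in $\Def(L)\cap\Def(C_1)$ with $\langle\psi_n,C_1\psi_n\rangle\to 0$.

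The construction of $(\psi_n)$ is the heart of the matter and is where the remaining hypotheses enter. One regularizes $\psi$ in stages. A smooth cutoff $\chi(N/n)$ in the ``number'' operator $N$ makes the vector approximately annihilated by $L$: since $L\chi(N/n)\psi=[L,\chi(N/n)]\psi$ and $[L,N]=-\i D$ with $\|D\phi\|\le C\|N^{1/2}\phi\|$ by \eqref{thm:virthm:eq1}, a Helffer--Sjöstrand expansion of $\chi(N/n)$ produces a quantitative bound of the type $\|L\chi(N/n)\psi\|=O(n^{-1/2})$. Because $\chi(N/n)\psi$ need not lie in $\Def(A)$ or in $\Def(\Lambda)$, one composes with further cutoffs in $A$ and in $\Lambda$; that $N$ commutes strongly with $e^{\i tA}$ is used here so that the $A$-cutoff commutes with the $N$-cutoff and so that the Taylor expansion $e^{-\i tA}Le^{\i tA}=C_0+tC_1+\tfrac{t^2}{2}C_2+R_3(t)$ survives the $N$-regularization, its remainder being controlled by \eqref{thm:virthm:eq3} together with the $N$--$A$ commutation; the commutator error produced by the $A$-cutoff at scale $m$ is governed through this expansion by $\|C_1\chi(N/n)\psi\|/m$, which \eqref{thm:virthm:eq2} estimates via $\|N^{p}\chi(N/n)\psi\|$, and the effect on $\Def(\Lambda)$ is controlled by the Invariance of Domain bound $\|\Lambda e^{\i tA}\phi\|\le e^{\kappa|t|}\|\Lambda\phi\|$. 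Coupling the cutoff scales appropriately — and, if the $A$-flow is used directly, letting an auxiliary parameter $t=t_n\downarrow0$ at a rate tuned to those scales — forces $\|L\psi_n\|\,\|A\psi_n\|\to 0$ with $\psi_n\to\psi$ inside $\Def(\Lambda)\cap\Def(A)$, and hence $\langle\psi_n,C_1\psi_n\rangle\to 0$. The main obstacle is exactly this last step: the domain bookkeeping that keeps every intermediate vector in the common core, together with the rate balancing that makes the product of the eigenvalue-equation error $\|L\psi_n\|$ and $\|A\psi_n\|$ vanish; it is here that the interplay of all three quantitative bounds \eqref{thm:virthm:eq1}--\eqref{thm:virthm:eq3}, the GJN structure (V1) and (V2), and the strong commutation of $N$ with $e^{\i tA}$ is essential.
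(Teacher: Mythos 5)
The paper does not prove this theorem; it is cited verbatim from \cite{merkli1}, so the only thing to assess is whether your sketch would actually close. It would not, and the obstruction is already visible in the rate bookkeeping you defer to the end.

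Your plan reduces the conclusion to exhibiting $\psi_n\in\Def(\Lambda)\cap\Def(A)$ with $\psi_n\to\psi$ and $\|L\psi_n\|\,\|A\psi_n\|\to0$, and then builds $\psi_n$ from $\psi$ by stacking cutoffs, say $\chi(N/n)$ and an $A$-cutoff at scale $m$. With those cutoffs the best you can hope for, using exactly \eqref{thm:virthm:eq1} and \eqref{thm:virthm:eq2} as you describe, is
\[
\|L\psi_n\|\ \lesssim\ n^{-1/2}\ +\ \frac{\|C_1\chi(N/n)\psi\|}{m}\ \lesssim\ n^{-1/2}+\frac{n^{p}}{m},
\qquad
\|A\psi_n\|\ \lesssim\ m ,
\]
so that $\|L\psi_n\|\,\|A\psi_n\|\lesssim m\,n^{-1/2}+n^{p}$. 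The second summand is independent of the $A$-cutoff scale $m$ and does not vanish as $n\to\infty$ for any $p>0$; the scales would have to satisfy $n^{p}\ll m\ll n^{1/2}$ simultaneously, which is impossible once $p\ge\tfrac12$. Since the theorem allows arbitrary $0\le p<\infty$ (and in the paper's own application to $C_1$ in \eqref{eq:calccomm} one effectively has $p\ge1$ because of the $\Nfh$ term), the Cauchy--Schwarz bound $|\langle\psi_n,C_1\psi_n\rangle|\le 2\|L\psi_n\|\,\|A\psi_n\|$ is too lossy to serve as the pivot of the argument. This is exactly why \cite{merkli1} does not argue through such a product: the expansion
$e^{-\i tA}Le^{\i tA}=C_0+tC_1+\tfrac{t^2}{2}C_2+R_3(t)$, which you relegate to a side role in controlling the $A$-cutoff error, is instead the central computational object, with $\langle\psi_n,C_1\psi_n\rangle$ extracted as a $t$-derivative of a regularized quantity, the $C_2$ contribution handled by an exact cancellation or an $o(1)$ argument rather than a crude norm bound, and the cubic remainder tamed via \eqref{thm:virthm:eq3}. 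Without that refined structure your scheme cannot be completed, so the sketch has a genuine gap at precisely the step you flag as ``the heart of the matter.''

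A secondary, smaller issue: the vectors $\chi(N/n)\psi$ need not lie in $\Def(A)$ or in $\Def(\Lambda)$, and composing with an $A$-cutoff and a $\Lambda$-cutoff creates further commutator errors whose control again invokes \eqref{thm:virthm:eq2} and therefore re-encounters the same $n^{p}$ obstruction. The Invariance-of-Domain theorem and the strong commutation of $N$ with $e^{\i tA}$ are correctly identified as the tools for the domain bookkeeping, but they do not remove this quantitative problem.
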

%

\section{Definition of the Commutator and Verification of the Virial Theorems}
\label{section:verification}
In this section  we introduce generalized eigenstates associated to scattering states of the atomic Hamiltonian $H_\p$. 
Using  these  scattering states  we will then define explicit  realizations for the operators  $L, A, N, D, C_n$, $n \in \{0,1,2,3\}$, of \Autoref{sec:abstract virial}. 
We  then  verify  the assumptions of the abstract virial theorem  in order to obtain a   concrete virial theorem \autoref{th:result virial}.  This theorem 
 will be one of the two ingredients for  the proof of the main result of this paper. In this section we shall always assume that  the potential $V$ satisfies  \ref{ass:h1} and  \ref{ass:h2} and that 
\ref{assumption:I1}--\ref{assumption:I3} hold. 

\subsection{Scattering States}
\label{sec:scattering}
In this part we recall the theory of generalized eigenstates, which are associated 
to scattering states, and their corresponding spectral decomposition.  The scattering states $\phi(k,\cdot)$, $k \in \IR^3$,  can be  defined as generalized eigenvectors,
\[
(-\Delta + V) \phi(k,\cdot) = k^2 \phi(k,\cdot),
\]
or as solutions of the so-called Lippmann-Schwinger equation, 
\begin{align}
\label{eq:lippmann-schwinger}
\phi(k,x) = e^{\i k x} - \frac{1}{4\pi} \int_{\IR^3} \frac{e^{\i \abs k \abs{x-y}}}{\abs{x-y}} V(y) \phi(k,y) \d y.
\end{align}
We discuss their properties in the following proposition which is from \cite{ikebe}, see also 
 \cite[Theorem XI.41]{rs3}   and \cite{newton}. 
In particular,  the scattering functions can be used for a spectral decomposition of the continuous spectrum of $H_\p$. 
\begin{thm}[{ \cite[Theorem XI.41]{rs3},\cite{ikebe},\cite{newton} }]
\label{th:scattering functions properties}
Suppose   \ref{ass:h1} and  \ref{ass:h2} hold.

\begin{enumerate}[label=(\alph*)]
\item For all $k \in \IR^3$ there exists a unique solution  $\phi(k,\cdot)$ of 
    \eqref{eq:lippmann-schwinger} which obeys $|V|^{1/2} \varphi(k ,\cdot ) \in L^2(\R^3)$. Moreover   for every $k \in  \R^3$  the function $x \mapsto \varphi(k, x)$ is continuous. 
\item  For  $f \in L^2(\IR^3)$ the generalized Fourier transform 
\[
(\Vc f)(k) := f^\#(k) := (2\pi)^{-3/2} \liim \int \overline{\phi(k,x)} f(x) \d x,
\]
where $\liim \int g(x) \d x :=  L^2$-$\lim_{R \to \infty} \int_{\abs x < R} g(x) \d x$, exists. 
\item We have $\ran \Vc = L^2(\IR^3)$ and for  $f \in L^2(\IR^3)$
\[
\nn{\Vc f} = \nn{\Pc f}. 
\]
In particular,  $\Vc$ is a partial isometry and  $\Vc|_{\ran \Pc} \: \ran \Pc \rightarrow L^2(\IR^3)$ is a unitary operator, and $\Vc \Vc^* = \Id$. 
\item 
 For  $f \in L^2(\IR^3)$ we have the spectral decomposition
\[
(\Pc f)(x) = \liim (2\pi)^{-3/2} \int f^\#(k) \phi(k,x) \d k .
\]
\item 
If $f \in \Def(H_\p)$, then
\[
(H_\p f)^\#(k) = k^2 f^\#(k),
\]
in other words, $\Vc H_\p \Vc^* = \q^2$. 
\end{enumerate}
\end{thm}
The basic strategy of the  proof of the theorem is to  introduce  the method of modified square integrable scattering functions, which can be found in \cite{ikebe,rs3}, originally developed by \textsc{Rollnik}.
In particular, one introduces  the so-called modified Lippmann-Schwinger equation
\begin{equation} \label{eq:modilip} 
\phit(k,x) = \abs{V(x)}^{1/2} e^{\i k x} + (L_{\abs k} \phit(k,\cdot))(x), \quad k, x  \in \R^3 , 
\end{equation} 
where
\begin{equation} \label{eq:deofL}
L_\kappa \psi(x) :=  - \frac{1}{4\pi} \int \frac{\abs{V(x)}^{1/2} e^{\i \kappa \abs{x-y}} V(y)^{1/2} }{\abs{x-y}} \psi(y) \d y, \quad \kappa \geq 0, 
\end{equation} 
and $V(y)^{1/2} :=   \abs{V(y)}^{1/2} \sgn V(y) $. 
We note that is elementary to see that  by \ref{ass:h1} the operator $L_\kappa$ is a Hilbert-Schmidt operator, see \cite[Theorem 1.22]{simon}. 
 If  for fixed $k \in \IR^3$ the function   $\varphi(k,\cdot)$ obeys  \eqref{eq:lippmann-schwinger} 
and $\tilde{\varphi}(k,\cdot ) := |V|^{1/2} \varphi(k,\cdot)$ is an $L^2$-function, then $\tilde{\varphi}(k,\cdot)$   obeys  \eqref{eq:modilip}, provided  \ref{ass:h1} holds (in fact it  holds for a larger class of potentials  \cite{rs3}).  
   On the other hand,  if for a fixed  $k \in \R^3$, the modified  Lippmann-Schwinger equation \eqref{eq:modilip}  has a unique $L^2$-solution $\tilde{\varphi}(k,\cdot)$, then, as outlined in \cite{rs3}, the original
Lippmann-Schwinger equation  \eqref{eq:lippmann-schwinger}  has a unique solution $\varphi(k,\cdot)$ satisfying $|V|^{1/2} \varphi(k,\cdot)  \in L^2(\R^3)$. It is given by 
\begin{align}
\label{eq:recovery phi}
\phi(k,x) = e^{\i k x} - \frac{1}{4\pi} \int \frac{e^{\i \abs k \abs{x-y}}}{\abs{x-y}} V(y)^{1/2} \phit(k,y) \d y . 
\end{align}

\begin{proof} 
By the assumptions  on the potential the operator $L_\kappa$ defined as in \eqref{eq:deofL} is a Hilbert-Schmidt operator for all $\kappa \geq 0$. 
 Let $\mathcal{E}$ denote the set of all  $\kappa \in (0,\infty)$ such that $\psi = L_\kappa \psi$ has a nonzero
solution in $L^2$.    We claim that    $\mathcal{E}$ is the empty set if  \ref{ass:h1}  holds. 
To this end, let  $\kappa > 0$ and assume   $\tilde{\varphi}  = L_\kappa \tilde{\varphi}$ for some $L^2$-function $\tilde{\varphi}$. Now consider  
\begin{align*}
\phi(x) :=  - \frac{1}{4\pi} \int \frac{e^{\i \kappa \abs{x-y}}}{\abs{x-y}} V(y)^{1/2} \phit(y) \d y  =   - \frac{1}{4\pi} \int \frac{e^{\i \kappa  \abs{x-y}}}{\abs{x-y}} V(y) \phi(y) \d y . 
\end{align*}
It follows that $\varphi(x) = o(|x|^{-1})$ as  $|x| \to \infty$, and that 
$
- \Delta \varphi + V \varphi = \kappa^2 \varphi  . 
$
According \cite{kato1} this implies that $\varphi$ vanishes identically outside 
a sufficiently large sphere. Hence by the unique continuation theorem it follows that $\phi = 0$ and $\tilde{\phi} = \abs{V}^{1/2} \phi = 0$. This is a contradiction, and  
we conclude that  the set $\mathcal{E}$ is empty for potentials which we consider. 
Thus by the Fredholm alternative whenever  $k\neq 0$
there is  a unique $L^2$ solution  $\tilde{\phi}$ of the modified Lippmann-Schwinger equation 
$
 \tilde{\phi}(x) = |V|^{1/2} e^{ikx} + (L_{|k|}\tilde{\phi}      )(x)   
$.
As mentioned above  it  follows that  the original 
Lippmann-Schwinger equation     \eqref{eq:lippmann-schwinger}   has a unique solution $\varphi$ satisfying $|V|^{1/2}\varphi \in L^2$
given by \eqref{eq:recovery phi}.  In the case $k=0$ we argue  analogously using   \ref{ass:h2}. 
This shows the first part of  (a).  The continuity follows in view of \eqref{eq:lippmann-schwinger}   from dominated convergence. 
(b)--(e) now follow  from  \cite[Theorem XI.41]{rs3}, where we have seen in \autoref{th:H implications} that the essential and the absolutely continuous spectrum of $H_\p$ coincide.
\end{proof}

Furthermore, we can extend $\Vc$ to a unitary operator by including the eigenfunctions into consideration. 
For this, we denote by $\phi_n$, $n=1, \ldots, N$, the eigenvectors of $H_\p$.
We define
\[
\Vd \: L^2(\IR^3) \longrightarrow \ell^2(N), \qquad (\Vd\psi)_i := \sc{\phi_i, \psi}. 
\]
Obviously $\Vd|_{\ran \Pd} \: \ran \Pd \rightarrow \ell^2(N) $ is a unitary operator  and $\Vd|_{\ran \Pc}  = 0$. Thus,
\begin{align}
\label{eq:Vcd}
\Vcd := \Vd \oplus \Vc \: L^2(\IR^3) \longrightarrow \ell^2(N) \oplus L^2(\IR^3)
\end{align}
is unitary.

\subsection{Setup for the Virial Theorems}
First, we describe the setting on the particle space $\H_\p$. We consider a dense subspace given by
\begin{equation} \label{eq:deofdomainD} 
\Dps := \Vc^* \Dp \oplus \ran \Pd .
\end{equation} 
Note that $\Dps$ is dense since $\Vc^* \Dp \subseteq \ran \Pc$ is dense in $\ran \Pc$.
Now, based on the definition of the generator of dilations,
\[
\Adil = \frac{1}{4} (\q \po + \po \q),
\]
we define on $\Dps$ the conjugate operator $\Aps$ and a regularized version $\Apse$,
\[
\Aps := \Vc^*  \Adil  \Vc, \qquad\Apse :=  \Vc^*  \ee \Adil \ee  \Vc, 
\]
where
\[
\ee(k) := e^{-\epsilon k^2}.
\]
Note that $\eta_0 \equiv 1$ and $\Aps = A_\p^{(0)}$. 
It is clear that both $H_\p$ and $\Apse$, $\epsilon \geq 0$, leave $\Dps$ invariant. Thus we can define $\ad^{(n)}_{\Apse}(H_\p)$ on $\Dps$ for all $n \in \IN$. Furthermore, the bounding operator is chosen as 
\begin{align*}
\Lps &:= \Vc^* (\q^2  + \po^2) \Vc + \Id_\p.
\end{align*}
Next, on the field space we set
\begin{align}
A_\f &:= \dG(\i \partial_u), \label{eq:defofAf}  \\
\Lambda_\f &:= \dG(\mult{u} ^2 + 1).  \label{eq:defofLambdaf} 
\end{align}
Now, we can define on the dense subspace of the composite space $\H$,
\begin{align}
\Ds &= \Dps \otimes \Dps \otimes  \Df,  \label{eq:defofmathcalD} 
\end{align}
the operators
\begin{align}
\Ls &= \Lps \otimes \Id_\p  \otimes \Id_\f + \Id_\p \otimes \Lps \otimes \Id_\f + \Id_\p \otimes \Id_\p \otimes \Lambda_\f,   \label{eq:defofLambda}   \\
\Ase &= (\Apse \otimes \Id_\p - \Id_\p \otimes \Apse) \otimes \Id_\f + \Id_\p \otimes \Id_\p \otimes A_\f, \qquad \epsilon \geq 0,  \nonumber \\
D &= \i [L_\lambda,\Nfh],  \label{eq:defofD} 
\end{align}
where $\Nfh := \Id_\p \otimes \Id_\p \otimes \Nf$.

For operators $X,Y$ with a dense domain $\D_0$ we define multiple commutators by $\ad^{(0)}_Y(X) = X$ and $\ad^{(n+1)}_Y(X) = \i [\ad^{(n)}_Y(X),Y]$ in the form sense on $\D_0 \times \D_0$, provided the  right-hand side is determined by a densly defined bounded or an essentially self-adjoint operator, in which case we  denote the corresponding  extension by the same symbol. Furthermore, we set $\ad_Y(X) := \ad^{(1)}_Y(X)$.

Now we define for $n \in \{1,2,3\}$, $\epsilon \geq 0$,
\begin{align}
C_n^{(\epsilon)} &:= \ad^{(n)}_{\Ase}(L_\lambda)  \label{eq:defofCnepsilon} \\  &= \delta_{n,1} \Id_\p \otimes \Id_\p \otimes N_\f + \ad_{\Apse}^{(n)}(H_\p) \otimes \Id_\p + (-1)^{n+1} \Id_\p \otimes \ad_{\Apse}^{(n)}(H_\p) + \lambda W_n^{(\epsilon)},  \nonumber \\
\Cf{n} &:= \ad^{(n)}_{\Id_\p \otimes \Id_\p \otimes A_\f}(L_\lambda)  \nonumber \\  &=  \delta_{n,1} \Id_\p \otimes \Id_\p \otimes N_\f  + \lambda W_n^{(\f)} , \label{eq:defofCfn} 
\end{align}
where
\begin{align}
\label{eq:cn_formula 2}
W_n^{(\epsilon)} &:= \ad^{(n)}_{\Ase}(\Phi(I)) =  \Phi( I_n^{(\epsilon)} (u,\Sigma)), \\  
W_n^{(\f)} &:= \ad^{(n)}_{\Id_\p \otimes \Id_\p \otimes A_\f}(\Phi(I)) =  \Phi( I_n^{(\f)} (u,\Sigma)),  \label{eq:cn_formula 3}
\end{align}
with
\begin{align*}
I_n^{(\epsilon)} (u,\Sigma) &:= \sum_{k=0}^n  \binom{n}{k}  \big(  (-\i \partial_u)^k \tau_\beta( \ad^{(n-k)}_{ \Apse }(G) ) \otimes \Id_\p  \\ &\qquad -  (-\i \partial_u)^k  e^{-\beta u  /2}   \Id_\p  \otimes \tau_\beta( \ad^{(n-k)}_{\Apse} (\G^*) )  \big), \\
I_n^{(\f)} (u,\Sigma) &:= \sum_{k=0}^n  \big(  (-\i \partial_u)^k \tau_\beta( G) \otimes \Id_\p    -  (-\i \partial_u)^k  e^{-\beta u  /2}   \Id_\p  \otimes \tau_\beta( \G^*)  \big),
\end{align*}
and we use the shorthand notation $I_n:= I_n^{(0)}$, $W_n := W_n^{(0)}$,  $C_1 := C_1^{(0)}$. We note that 
the above identites follow from a straightforward calculation. 
We will see in \autoref{th:G commutator conditions}  that the expressions in the field operators in \eqref{eq:cn_formula 2} and \eqref{eq:cn_formula 3} are indeed  well-defined and belong to $L^2(\IR \times \IS^2, \L(\H_\p))$. 
Furthermore, it will be proven in \autoref{th:gjn check}, that $C_n^{(\epsilon)}$ and $\Cf{n}$, $n \in \{1,2,3\}$, are in fact  essentially self-adjoint on $\Ds$ and we denote their self-adjoint extensions by the same symbols.
Moreover, it will be shown below in \autoref{th:total estimate steps} that $C_1$ is actually bounded from below. Thus, we can assign to $C_1$ a quadratic form $\qce$.

\subsection{Verification of the Assumptions of the Virial Theorems}
\label{subsec:verification assumptions}

In the given setting just described we can now start to prove the assumptions of the virial theorem \autoref{thm:firstvirial}. Above all, we have to check the GJN condition for the different commutators. The most difficult part will be the discussion of the interaction terms $W_n^{(\f)}$ and $W_n^{(\epsilon)}$,  $n \in \{1,2,3\}$, $\epsilon > 0$, and $W_1$. Here, the expressions in the field operators need to be sufficiently bounded. These bounds will be collected in the following proposition,  which will be poven in  \Autoref{sec:estimates scattering}. 
\begin{prop}
\label{th:G commutator conditions}
Let $\partial_u$ denote the weak  derivative of a $\L(\H_\p)$-valued function in the sense of the strong operator topology. 
For all $m \in \{0,1,2,3\}$, $n \in \IN_0$, $j \in \{1,2,3\}$, and for all $(u,\Sigma)$, $\epsilon \geq 0$, the operators
\begin{enumerate}[label=(\arabic*)]
\item 
\label{item:no derivative}
$\partial_u^m  \ad^{(n)}_{\Apse}(  \tau_\beta(G)(u,\Sigma)  )$,
\item 
\label{item:qj derivative}
$\partial_u^m   \ad_{\Vc^* \q_j \Vc}(  \ad^{(n)}_{\Apse}(  \tau_\beta(G)(u,\Sigma)  ) )$,
\item 
\label{item:pj derivative}
 $ \partial_u^m \ad_{\Vc^* \po_j \Vc}(  \ad^{(n)}_{\Apse}(  \tau_\beta(G)(u,\Sigma)   ) )$, 
\item
\label{item:pj additional}
 $ \partial_u^m  \Vc^* \po_j \Vc    \ad^{(n)}_{\Aps }(  \tau_\beta(G)(u,\Sigma)   )$,
\end{enumerate}
are well-defined, and the corresponding functions $\IR \times \IS^2 \rightarrow \L(H_\p)$ of $(u,\Sigma)$  belong to $L^2(\IR \times \IS^2, \L(H_\p))$.  Moreover, there exists a constant $C$ independent of $\beta$ such that for $n,m,s \in \{0,1\}$,
\begin{equation} \label{eq:mainineqbound} 
\nn{  \partial_u^m  (\Vc^* \po_j \Vc)^s    \ad^{(n)}_{\Aps }(  \tau_\beta(G)  ) }_{L^2(\IR \times \IS^2, \L(H_\p))} \leq C(1+ \beta^{-\frac{1}{2}}).
\end{equation} 
The result also holds true if we replace $G$ by $G^*$.
\end{prop}
With the help of Proposition  \ref{th:G commutator conditions}
we can now verify the necessary GJN conditions.
\begin{prop}
\label{th:gjn check}
The following triples are GJN:
\begin{enumerate}[label=(\arabic*)]
\item $(\Aps, \Lps, \Dps)$, \label{Aps Lps Dps}
\item $(\Apse, \Lps, \Dps)$, $\epsilon > 0$,

\item $(H_\p, \Lps, \Dps)$, \label{Hp Lps Dps}
\item $(L_\lambda, \Ls, \Ds)$, $\lambda \in \IR$,\label{Ll essentially self-adjoint}
\item $(\Nfh, \Ls, \Ds)$, \label{N Ls D}
\item $(D, \Ls, \Ds)$,
\item $(C_i^{(\epsilon)}, \Ls, \Ds)$,  $\epsilon > 0$, $i \in \{1,2,3\}$, \label{Cie Ls D}
\item $(\Cf{i}, \Ls, \Ds)$, $i \in \{1,2,3\}$, \label{Cf Ls D}
\item $(C_1, \Ls, \Ds)$.
\end{enumerate}
In particular, $L_\lambda$ is essentially self-adjoint on $\Ds$ for any $\lambda \in \IR$ due to \ref{Ll essentially self-adjoint}. Moreover, $D$, $C_1^{(\epsilon)}$, $C_3^{(\epsilon)}$, $\epsilon > 0$, and $C_1^{(\f)}$, $C_3^{(\f)}$ are relatively bounded by $\Nfh^{1/2}$. 
\end{prop}
\begin{proof}
\begin{enumerate}[label=(\arabic*),wide,  labelindent=0pt]
\item
Using that $(\Adil, \q^2 + \po^2, \Dp)$ is a GJN triple (cf. \cite{merkli2}) and $\Vc^*$ an isometry, we have, for $\psi \in \Dps$, 
\begin{align*}
\nn{ \Aps \psi} = \nn{\Vc^* \Adil \Vc \psi} = \nn{\Adil \Vc \psi} \leq C \nn{ (\q^2 + \po^2) \Vc  \psi} \leq  C' \nn{ \Lambda_\p \psi},
\end{align*}
for some constants $C,C'$, and 
\begin{align*}
\pm \i &(\sc{\Aps \psi, \Lps \psi} - \sc{\Lps \psi, \Aps \psi})  \\\ &=  \pm \i \left(\sc{\Adil \Vc \psi, (\q^2 + \po^2) \Vc \psi} - \sc{(\q^2 + \po^2)\Vc \psi, \Adil\Vc \psi}  \right)  \\
&\leq C \sc{\Vc \psi, (\q^2 + \po^2) \Vc \psi} \\
&\leq C \sc{\psi,  \Lps \psi}.
\end{align*}

\item
On $\Dps$ we have
\[
\Apse  = \Vc^* \ee \Adil \ee \Vc = \Vc^* \ee^2 \Adil \Vc + \Vc^* \ee [\Adil,\ee] \Vc. 
\]
The operator $\Vc^* \ee^2 \Adil \Vc$ is relatively bounded by $\Vc^* \Adil \Vc$, and thus also by $(\q^2 + \po^2 )\Vc$ as we have already seen in the proof of \ref{Aps Lps Dps}. Furthermore, as derivatives of $\ee$ are also bounded, the operator
\[
[\Adil,\ee] = \frac{1}{2} \sum_{j}  [ \po_j, \ee] \q_j 
\] 
is relatively bounded by $| \q |$, and thus $\ee [\Adil,\ee] \Vc$ is also relatively bounded by $(\q^2 + \po^2 )\Vc$ . This shows the first GJN condition. 

For the second one, we compute
\begin{align*}
\pm \i \left(\sc{\Apse \psi, \Lps \psi} - \sc{ \Lps \psi, \Apse \psi} \right) = \pm \sc{ \Vc \psi, \i [\q^2 + \po^2, \ee \Adil \ee] \Vc \psi}.
\end{align*}
Thus, it suffices to show $ \pm \i [\q^2 + \po^2, \ee \Adil \ee]  \leq C(\q^2 + \po^2)$ for some constant $C$  on $\Cci(\IR^3)$. First, 
\[
[\q^2, \ee \Adil \ee] = \ee [\q^2, \Adil] \ee = \sum_{j} \ee\q_j [\q_j, \po_j] \q_j \ee = -\i \ee \q^2 \ee.
\]
Second, in order to obtain $\pm \i [\po^2, \ee \Adil \ee]  \leq C(\q^2 + \po^2)$, we use the basic operator inequality
\begin{align}
\label{eq:op estimate}
A^*B + B^*A \leq A^* A + B^* B
\end{align}
to show that
\begin{align}
\label{eq:po_j ee Adil ee}
[\po_j, \ee \Adil \ee] =  \ee [\po_j, \Adil] \ee + [\po_j, \ee] \Adil \ee  + \ee \Adil[\po_j, \ee], \qquad j \in \{1,2,3\},
\end{align}
is relatively  bounded by $|\q|$. This is clear for the first term in \eqref{eq:po_j ee Adil ee} as $\ee [\po_j, \Adil] \ee \allowbreak = \frac{\i}{2} \ee^2 \q_j$, and for the two remaining ones it follows analogously to the proof of the first GJN condition, since second derivatives of $\ee$ are bounded.

\item
We have, with regard to the first GJN condition,
\begin{align}
\nn{H_\p \psi} = \nn{ \Vc^* \q^2 \Vc \psi + H_\p \Pd \psi} \leq C \nn{\Vc^* (\q^2 + \po^2) \Vc \psi} + \sup_{\lambda \in \sigma_\d(H_\p)} \abs \lambda \nn{\psi}, \label{ineqproofGJN}
\end{align}
as $\q^2$ is relatively bounded by $\po^2 + \q^2$. Furthermore,
\begin{align*}
&\sc{ H_\p \psi, \Lps \psi} - \sc{ \Lps \psi,   H_\p \psi} \\ &= \sc{\Vc^* \q^2 \Vc \psi, \Vc^* (\po^2 + \q^2) \Vc \psi} - \sc{\Vc^* (\po^2 + \q^2) \Vc \psi,  \Vc^* \q^2 \Vc \psi}  
\\ &= \sc{\q^2 \Vc \psi, \po^2 \Vc \psi} - \sc{ \po^2  \Vc \psi,  \q^2 \Vc \psi} .
\end{align*}
Using now that $\pm \i [\po^2,\q^2] \leq C(\q^2 + \po^2)$ for some constant $C$, we get also the second GJN condition. 

\item
As $H_\p$ is relatively bounded by $\Lps$  in view of  \eqref{ineqproofGJN}, $L_0 = (H_\p \otimes \Id_\p - \Id_\p \otimes H_\p) \allowbreak \otimes \Id_\f$ is relatively bounded by $\Ls$.
Next, by  \ref{assumption:I2}, \ref{assumption:I3}, and   Lemma \ref{lemma:gluing_integrable}  we know that the interaction terms $I^{(\l)}, I^{(\r)}   \in L^2(\IR \times \IS^2)$. Hence $W$ is bounded by $\Nfh^{1/2}$ and thus bounded by $\Id_\p \otimes \Id_\p \otimes \Lambda_\f^{1/2}$. Therefore, the first GJN condition is satisfied.

Next, as $(H_\p, \Lps, \Dps)$ is GJN, we get a constant $C$, such that for all $\psi \in \Ds$,
\[
\pm \i ( \sc{L_0 \psi, \Ls \psi} - \sc{ \Ls  \psi, L_0 \psi}) \leq C\sc{ \psi, (\Lps \otimes \Id_\p  + \Id_\p \otimes \Lps)  \otimes \Id_\f  \psi},
\]
which yields the second GJN condition for $L_0$.

Again by  \autoref{lemma:gluing_integrable} and \ref{assumption:I1}--\ref{assumption:I3} we know that $(u,\Sigma) \mapsto (u^2 + 1) I^{(\l)}(u,\Sigma)$ and $(u,\Sigma) \mapsto (u^2 + 1) I^{(\r)}(u,\Sigma)$  are in $L^2(\IR \times \IS^2, \L( \H_\p \otimes \H_\p))$. We have
\begin{align*}
&\abs{ \sc{ \Phi( I^{(\l)}) \psi ,  \Id_\p \otimes \Id_\p \otimes \Lambda_\f \psi } - \sc{  \Id_\p \otimes \Id_\p \otimes \Lambda_\f \psi , \Phi( I^{(\l)})  \psi } }
\\ &\qquad = \abs{ \sc{\psi,(  a\left(( \mult{u} ^2 + 1) I^{(\l)}) - a^*(( \mult{u} ^2 + 1) I^{(\l)}) \right) \psi} } 
\\ &\qquad\leq C \nn{\Id_\p \otimes \Id_\p \otimes \Nf^{1/2} \psi} \nn{\psi}
\\ &\qquad\leq C' \sc{ \psi, \Ls \psi}
\end{align*}
for some constants $C,C'$. The same thing can be shown for the commutator with $\Phi(I^{(\r)})$. 

It remains to consider the commutator of $W$ with the $\Lps$ terms. One has to show that the  commutators
\[
[\Phi(I^{(\l)} ), \Vc^* (\po^2 + \q^2) \Vc], \qquad [\Phi(I^{(\r)} ), \Vc^* (\po^2 + \q^2) \Vc]
\]
are form-bounded by $\Lps$, that is, in the sense of \eqref{eq:gjn2}. This follows  from \autoref{th:G commutator conditions} since  we can write in the weak sense on $\Ds$,
\[
[\Phi(I^{(\l)} ), \Vc^* \po^2 \Vc] = \sum_{j} [\Phi(I^{(\l)} ), \Vc^* \po_j \Vc]  \Vc^* \po_j \Vc +   \Vc^* \po_j \Vc [\Phi(I^{(\l)} ), \Vc^* \po_j \Vc] 
\]
and analogously for terms involving  $\Phi(I^{(\r)})$ as well as  $\Vc^* \q^2 \Vc$. 

\item

Clearly, $\nn{ \Nf \psi } = \nn{\dG(1) \psi} \leq \nn{\dG(u^2 +  1) \psi}$ for any $\psi \in \Df$, which shows that $\Nfh$ is relatively bounded by $\Ls$ on $\Ds$. Furthermore, $[\Nfh, \Lambda] = 0$ on $\Ds$. 

\item
We have
\[
D =  \frac{\i \lambda}{\sqrt 2} ( a(I) - a^*(I) ).
\]
Thus, the proof works as the one for $L_\lambda$. 
\item
We first consider the  atomic part. One can show by induction that for all $n \in \IN$ there exists $f \in \S(\IR^3)$ such that 
\begin{align}
\label{eq:f_1 P f_2 form}
\ad^{(n)}_{\Apse}(H_\p) = \Vc^* f(\q) \Vc.
\end{align} 
Clearly, for $n=1$,
\begin{equation}    \label{eq:commaphp} 
\ad_{\Apse}(H_\p) = \i \Vc^* [\q^2, \ee \Adil \ee] \Vc = \i \Vc^*  \ee  [\q^2, \Adil ] \ee\Vc = \Vc^*  \ee  \q^2 \ee\Vc, 
\end{equation} 
which has the form \eqref{eq:f_1 P f_2 form}. Next, as $[f(\q), \po_j] = -\i \partial_j f(\q)$,
\begin{align*}
[ \Vc^* f(\q)  \Vc, \Apse] &= \Vc^*\ee  [ f(\q) ,  \Adil]  \ee \Vc = \frac{1}{2} \sum_{j} \Vc^*\ee  [ f(\q),  \po_j ] \q_j  \ee \Vc 
\end{align*}
yields again  the form \eqref{eq:f_1 P f_2 form}.

In particular, \eqref{eq:f_1 P f_2 form} implies that $\ad^{(n)}_{\Apse}(H_\p)$, $n \in \IN$, are bounded and
\[
\pm \i [\ad^{(n)}_{\Apse}(H_\p), \Lambda_\p] \leq C \Lambda_\p
\] 
on $\Dps$ for some constant $C$, since
\[
\pm \i [ f(\q),\q^2 + \po^2] = \pm \i [ f(\q),\po^2] \leq C(\q^2 + \po^2),
\]
because  $[ f(\q),\po_j]$, $j \in \{1,2,3\}$, is bounded.

This together with \ref{N Ls D} implies that $(\ad^{(n)}_\Ase (L_0), \Ls, \Ds)$, $n\in \{1,2,3\}$, are GJN triples.  

It remains to verify the GJN conditions for  $(W_n^{(\epsilon)}, \Ls, \Ds)$, $n\in \{1,2,3\}$. Analogously to the proof \ref{Ll essentially self-adjoint} for the GJN condition of $L_\lambda$ we have to show that the expressions in the field operators and the commutators with $\Vc^* \q_j \Vc$ and $\Vc^* \po_j \Vc$, $j \in \{1,2,3\}$ are integrable. That is, we have to show that
\begin{align*}
(u,\Sigma) &\mapsto  \partial_u^m \tau_\beta(\ad^{(n-m)}_{\Apse} (G))(u,\Sigma), \\
(u,\Sigma) &\mapsto  \partial_u^m [\tau_\beta(\ad^{(n-m)}_{\Apse} (G))(u,\Sigma), X],
\end{align*}
$X \in \{ \Vc^* \q_j \Vc, \Vc^* \po_j \Vc   \: j \in \{1,2,3\}\}$, 
$n \in \{ 1,2,3 \}$, $m \in \{ 0, \ldots,n \}$, 
are in $L^2(\IR \times \IS^2)$. This follows from \autoref{th:G commutator conditions}.

\item
Analogously to the proof of \ref{Cie Ls D} it suffices to show that
\begin{align*}
(u,\Sigma) &\mapsto  \partial_u^n \tau_\beta(G)(u,\Sigma), \\
(u,\Sigma) &\mapsto  \partial_u^n [\tau_\beta(G)(u,\Sigma), X], 
\end{align*}
$X \in \{ \Vc^* \q_j \Vc, \Vc^* \po_j \Vc   \: j \in \{1,2,3\}\}$, 
$n \in \{ 1,2,3 \}$, 
are in $L^2(\IR \times \IS^2)$. This follows again from \autoref{th:G commutator conditions}.

\item
We first consider again the free part. We have on $\Dps$,
\begin{align*} 
\i [H_\p, \Aps] =  \i [\Vc^* \q^2 \Vc, \Aps] =    \i  \Vc^* [\q^2, \Adil] \Vc =  \Vc^* \q^2 \Vc.
\end{align*}
Then we can show as in the proof of \ref{Hp Lps Dps} that also $(\ad_{\Aps}(H_\p), \Lps, \Dps)$, is GJN and so is $(\ad_\As (L_0), \Ls, \Ds)$.

It remains to verify the GJN conditions for  $(W_1, \Ls, \Ds)$. As in \ref{Cie Ls D} and \ref{Cf Ls D} one has to show that
\begin{align*}
(u,\Sigma) &\mapsto  \partial_u^n \tau_\beta(\ad^{(1-n)}_{\Aps} (G))(u,\Sigma), \\
(u,\Sigma) &\mapsto  \partial_u^n [\tau_\beta(\ad^{(1-n)}_{\Aps} (G))(u,\Sigma), X], 
\end{align*}
$X \in \{ \Vc^* \q_j \Vc, \Vc^* \po_j \Vc   \: j \in \{1,2,3\}\}$, 
$n \in \{0,1\}$,
are in $L^2(\IR \times \IS^2)$, which follows again from \autoref{th:G commutator conditions}. \qedhere 
\end{enumerate} 
\end{proof}
The statements of \autoref{th:gjn check} allow the application of the virial theorem \autoref{thm:firstvirial} for the regularized conjugate operator $\Ase$, $\epsilon >0$. In order to remove the regularization and transfer the result to $C_1$, one has to consider the limit $\epsilon \to 0$ for the corresponding quadratic forms $\qcee$ of $C_1^{(\epsilon)}$. This is the content of the following lemma. 
\begin{lemma}
\label{th:W1e lemma}
Assume that $\psi \in \D(\Nfh^{1/2})$ and $\qcee(\psi) \leq 0$ for all $  \epsilon \in (0,1)$. Then $\psi \in \D(\qce)$, and 
\begin{align}
\label{eq:qce to show}
\qce(\psi) = \lim_{\epsilon \to 0} \qcee(\psi).
\end{align}
\end{lemma}
\begin{proof} Let us first recall that by  definition 
\begin{align}
C_1^{(\epsilon)}   &=  \Id_\p \otimes \Id_\p \otimes N_\f + \ad_{\Apse}(H_\p) \otimes \Id_\p +  \Id_\p \otimes \ad_{\Apse}(H_\p) + \lambda W_1^{(\epsilon)} .  \label{defofceeeeee} 
\end{align}
\underline{Step 1:}  We show that 
 \begin{equation} \label{eq:dompropofpsi} 
\psi \in \D(\Vc^* | \q | \Vc \otimes \Id_\p +    \Id_\p \otimes \Vc^* | \q | \Vc) = \D(\qce).
\end{equation} 
Step 1 will follow once we have established  that 
\begin{align}
\label{eq:ee q2 ee uniformly bounded}
\sc{\psi, ( \Vc^* \ee \q^2 \ee \Vc \otimes \Id_\p + \Id_\p \otimes \Vc^* \ee \q^2\ee  \Vc) \otimes \Id_\f \psi} \leq C
\end{align}
for a constant $C$ independent of $\epsilon > 0$.
To this end, we use that by  standard estimates for creation and annihilation operators, we obtain for any $\delta > 0$,
\begin{align}
\label{eq:W1pe estimate}
\pm \i W_1^{(\epsilon)} \leq \frac{1}{\delta}  \Nfh + \delta w_{1}^{(\epsilon)} \otimes \Id_\f
\end{align}
in the form sense on $\D(\Nfh^{1/2})$, where we introduced the  follwoing 
 bounded operators on $\H_\p \otimes \H_\p$,
\begin{align*}
w_{1}^{(\epsilon)} &:=  \int I_1^{(\epsilon)} (u,\Sigma)^* I_1^{(\epsilon)}(u,\Sigma) \d(u, \Sigma), \qquad \epsilon \geq 0 . 
\end{align*}
To estimate this expression we use that 
\[
\int \ad_{\Apse}( \tau_\beta( G) (u,\Sigma))^*   \ad_{\Apse}(  \tau_\beta( G) (u,\Sigma)) \d (u,\Sigma)  \leq  C (  \Vc^* \ee \q^2 \ee \Vc  + \Id) ,  
\]
in the form sense for some constant $C$ independent of $\epsilon$, where we multiplied out the commutators, used \eqref{eq:op estimate} and the fact that the functions $(u,\Sigma) \mapsto \partial_u \tau_\beta( G) (u,\Sigma)$ and  $(u,\Sigma) \mapsto \Vc^* \po_j \Vc \tau_\beta( G)$ belong to $L^2(\IR \times \IS^2, \Lb(\H_\p))$ due to \autoref{th:G commutator conditions}. 
 This yields 
\begin{align}
\label{eq:upper bound}
w_{1}^{(\epsilon)} \leq C(\Vc^* \ee \q^2 \ee \Vc \otimes \Id_\p + \Id_\p \otimes \Vc^* \ee \q^2\ee  \Vc + \Id_\p \otimes \Id_\p).
\end{align}
Now using  \eqref{eq:commaphp} and  \eqref{eq:W1pe estimate}  to estimate  \eqref{defofceeeeee}    we obtain in the form sense on $\D(\Nfh^{1/2})$ for any $\epsilon > 0$ 
\begin{align*}
C_1^{(\epsilon)}  &\geq  (1 - C \abs \lambda \delta) (\Vc^* \ee \q^2 \ee \Vc \otimes \Id_\p + \Id_\p \otimes \Vc^* \ee \q^2 \ee \Vc + \Id_\p \otimes \Id_\p) \otimes \Id_\f \\  &\qquad + \left(1 - \frac{\abs \lambda }{\delta} \right) \Nfh. 
\end{align*}
Making $\delta > 0$ sufficently small such that $C \abs \lambda  \delta < 1$ and using that by assumption  $C_1^{(\epsilon)}  \leq 0$ and $\psi \in \D(\Nfh^{1/2})$, we arrive at \eqref{eq:ee q2 ee uniformly bounded}.

\vspace{0.4cm} 

\noindent 
\underline{Step 2:}  We have  \eqref{eq:qce to show}. 

\vspace{0.4cm} 

\noindent
First observe,  that by dominated convergence, we have 
\[
\nn{| \q |  \ee \Vc \psi} \longrightarrow \nn{ | \q | \Vc \psi}, ~\epsilon \to 0 .
\]
Thus Step 2 will follow once we have shown that   
\begin{align}
\label{eq:second to prove}
\sc{\psi, W_1^{(\epsilon)} \psi} \to \sc{\psi, W_1 \psi}.
\end{align}
Thus we have to show the convergence of  the field operator of a  commutator. To this end, we note that 
from  \autoref{th:G commutator conditions} we know that 
\[
(u,\Sigma) \mapsto  \Vc^* \po_j \Vc  H(u,\Sigma), 
\]
for  $H \in \{ \tau_\beta( G), \partial_u \tau_\beta( G) \}$, 
are in  $L^2(\IR \times \IS^2, \L(\H_\p))$. Now,
\begin{align}
\Apse H(u,\Sigma)  &= \Vc^* \ee \Adil \ee \Vc H(u,\Sigma) \nonumber \\
 &=   \Vc^* \ee \frac{1}{4}(3 \i + 2 \q \po)  \ee \Vc H(u,\Sigma) \nonumber \\
 &= \Vc^* \ee \frac{1}{4}\left(3 \i \ee  + 2 \sum_{j=1}^3 \q_j ( \i  ( \partial_{k_j} \ee )  + \ee \po_j)  \right)  \Vc H(u,\Sigma). \label{eq:Ape H}
\end{align}
Observe that   $\ee$  and  $ \partial_{k_j} \ee $  are  bounded uniformly in $\epsilon \in (0,1)$.  From  \eqref{eq:Ape H} we see  using $\psi \in \D( \Id_\p \otimes \Nf)$  that 
\begin{align*}
& \sc{ \psi ,  a^*( \Apse H )  \psi }  \\
  &= \frac{3}{4}  \sc{ \Vc^*  \eta_\epsilon^2   \Vc \otimes \Id_\f  \psi ,   a^*( \i  H ) \psi} + \frac{1}{2}  \sum_{j=1}^3  \sc{ \Vc^*   ( \partial_{k_j} \ee )  \q_j  \eta_\epsilon   \Vc \otimes \Id_\f \psi , a^*( \i  H) \psi}\\
& \quad  + \frac{1}{2}  \sum_{j=1}^3  \sc{ \Vc^*  \eta_\epsilon  \q_j  \eta_\epsilon   \Vc \otimes \Id_\f \psi , a^*(\Vc^* \po_j \Vc H) \psi }\\
&\to   
 \frac{3}{4}  \sc{ \Vc^*    \Vc \otimes \Id_\f \psi ,   a^*( \i  H ) \psi} 
  + \frac{1}{2}  \sum_{j=1}^3  \sc{ \Vc^*    \q_j    \Vc \otimes \Id_\f \psi , a^*(\Vc^* \po_j \Vc H) \psi } , 
\end{align*}
where for the limit we used that   $ \partial_{k_j} \ee $ tends to zero as $\epsilon \downarrow 0$, 
 \eqref{eq:dompropofpsi},  and dominated convergence. Similarly, we find 
\begin{align*}
& \sc{ \psi ,  a^*( H  \Apse  )  \psi }   \\ &\to   
 \frac{3}{4}  \sc{ a( \i  H )  \psi ,    \Vc^*    \Vc \otimes \Id_\f  \psi} 
  + \frac{1}{2}  \sum_{j=1}^3  \sc{a( H \Vc^* \po_j \Vc )  \psi ,  \Vc^*    \q_j    \Vc  \otimes \Id_\f \psi } , 
\end{align*}
Thus we find 
\begin{align*}
 &\i \sc{ \psi ,  a^*(  \ad_{\Apse}(H)  )  \psi }\\
&\to    \frac{3}{4}  \sc{ \Vc^*    \Vc \otimes \Id_\f \psi ,   a^*( \i  H ) \psi} 
  + \frac{1}{2}  \sum_{j=1}^3  \sc{ \Vc^*    \q_j    \Vc \otimes \Id_\f \psi , a^*(\Vc^* \po_j \Vc H) \psi } \\
& \quad -  \frac{3}{4}  \sc{ a( \i  H )  \psi ,    \Vc^*    \Vc  \otimes \Id_\f \psi} 
  - \frac{1}{2}  \sum_{j=1}^3  \sc{a ( H \Vc^* \po_j \Vc )  \psi ,  \Vc^*    \q_j    \Vc \otimes \Id_\f \psi } \\
& \quad =  \i  \sc{ \psi ,  a^*(  \ad_{\Aps}(H(u,\Sigma))  )  \psi } , 
\end{align*}
where the last  equality follows  by verifying  the 
identiy on the dense space $\Dps$, defined in  \eqref{eq:deofdomainD}, using a straightforward 
calculation,  and  then extending it to $\psi$   using  \autoref{th:G commutator conditions}. This shows \eqref{eq:second to prove} in view of the definition of $W_1^{(\epsilon)}$ and $W_1$, see \eqref{eq:cn_formula 2}.
\end{proof}

Now we can prove the main result of this section, the concrete virial theorem for our setting.
\begin{thm}[Concrete virial theorem]
\label{th:result virial}
Assume there exists $\psi \in \Def(L_\lambda)$ with $L_\lambda \psi = 0$. Then $\psi \in \Def(\qce)$ and $\qce(\psi) \leq 0$.
\end{thm}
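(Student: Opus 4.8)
The plan is to apply the abstract virial theorem \thref{thm:firstvirial} with the concrete choices of $L, \As, N, D, C_n$ made in \Autoref{section:verification}, and then to identify $C_1$ (modulo the $N_\f$-summand) with the positive commutator whose quadratic form is $\qce$. First I would check all the hypotheses of \thref{thm:firstvirial}. Assumption (V1), i.e.\ the GJN condition for $L, N, D, C_n$ with bounding operator $\Ls$, is exactly \thref{th:gjn check} (together with the trivial fact that $N = \Id_\p\otimes\Id_\p\otimes\Nf$ is bounded by $\Ls$); assumption (V2), self-adjointness of $\As$ and invariance of $\Def(\Ls)$ under $e^{\i t\As}$, follows from the invariance-of-domain theorem once we know $(\As,\Ls,\D)$ is GJN, which in turn follows as in part (1) of \thref{th:gjn check}. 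The algebraic identities \eqref{abstractviri1}, \eqref{eq:defn_Cn} hold by construction: $D = \i[L_\lambda, N]$ and $C_{n+1} = \i[C_n, \As]$ in the form sense on $\D$, which is precisely how the $W_n$ and $C_n$ were defined in \eqref{eq:cn_formula 2}. The operator-norm bounds \eqref{thm:virthm:eq1}--\eqref{thm:virthm:eq3} reduce to the statements that $D$ and $C_3$ are $\Nf^{1/2}$-bounded and $C_1$ is $\Nf^p$-bounded (for suitable $p$): the free parts $\ad^{(n)}_{\Aps}(H_\p)\otimes\Id_\p \pm \ldots$ are bounded operators on the particle space, the $\delta_{n,1}\Nf$ term in $C_1$ is handled with $p=1$, and the interaction pieces $\lambda W_n$ are relatively $\Nf^{1/2}$-bounded by \thref{th:G commutator conditions} and \thref{lemma:gluing_integrable}; finally $N$ and $e^{\i t\As}$ commute strongly on $\D$ because $A_\f = \dG(\i\partial_u)$ and $\Nf = \dG(1)$ are both second quantizations of commuting one-particle operators.

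With the hypotheses in place, \thref{thm:firstvirial} gives, for the assumed eigenvector $\psi\in\Def(L_\lambda)$ with $L_\lambda\psi=0$, a sequence $(\psi_n)\subset \Def(L_\lambda)\cap\Def(C_1)$ with $\psi_n\to\psi$ in $\H$ and $\sc{\psi_n, C_1\psi_n}\to 0$. Since $C_1$ is bounded below (by \thref{th:total estimate steps}), the associated form $\qce$ is closed and semibounded, and $\Def(C_1)\subseteq\Def(\qce)$ with $\qce(\chi) = \sc{\chi, C_1\chi}$ for $\chi\in\Def(C_1)$; hence $\qce(\psi_n) = \sc{\psi_n, C_1\psi_n}\to 0$. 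To conclude $\psi\in\Def(\qce)$ and $\qce(\psi)\le 0$ I would use lower semicontinuity of the form $\psi\mapsto\qce(\psi)$ associated to a semibounded self-adjoint operator: if $c$ is a lower bound for $C_1$, the shifted form $\qce(\cdot) + c\nn{\cdot}^2 \ge 0$ is lower semicontinuous with respect to norm convergence on its form domain, and in fact $\liminf_n \qce(\psi_n) \ge \qce(\psi)$ whenever $\psi_n\to\psi$ in $\H$, with the convention that the right-hand side is $+\infty$ if $\psi\notin\Def(\qce)$. Applying this with $\liminf_n\qce(\psi_n) = 0$ forces $\qce(\psi)\le 0 < \infty$, so in particular $\psi\in\Def(\qce)$.

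The main obstacle is the verification that the hypotheses \eqref{thm:virthm:eq1}--\eqref{thm:virthm:eq3} of \thref{thm:firstvirial} genuinely hold for our concrete $D, C_1, C_3$ — more precisely, that the multiple commutators $W_n = \ad^{(n)}_{\As}(\Phi(I))$ stay relatively $\Nf^{1/2}$-bounded after being commuted up to three times with the unbounded dilation generator $\Aps = \Vc^*\Adil\Vc$ on the particle space. This is exactly where the scattering-function estimates of \thref{th:G commutator conditions} enter, and it is the one step I would not treat as routine: one must control $\partial_u^k\,\tau_\beta(\ad^{(n-k)}_{\Aps}(G))$ and its commutators with $\Vc^*\q_j\Vc$, $\Vc^*\po_j\Vc$ in $L^2(\IR\times\IS^2,\L(\H_\p))$, uniformly enough in $\beta$, and then invoke \thref{lemma:gluing_integrable} to pass from these $\L(\H_\p)$-valued $L^2$ bounds to relative $\Nf^{1/2}$-boundedness of the field operators. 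Everything else — the algebraic commutator identities, the strong commutativity of $N$ with $e^{\i t\As}$, and the lower-semicontinuity argument for the form — is comparatively mechanical. I would therefore structure the proof as: (i) cite \thref{th:gjn check} for (V1); (ii) derive (V2) from invariance of domain; (iii) check \eqref{abstractviri1}, \eqref{eq:defn_Cn} by the definitions in \eqref{eq:cn_formula 2}; (iv) check \eqref{thm:virthm:eq1}--\eqref{thm:virthm:eq3} using \thref{th:G commutator conditions}, \thref{lemma:gluing_integrable} and boundedness of the free parts; (v) apply \thref{thm:firstvirial}; (vi) pass to the limit in $\qce$ using semiboundedness (\thref{th:total estimate steps}) and lower semicontinuity.
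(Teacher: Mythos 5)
Your proposal follows the same overall architecture as the paper's proof: verify the hypotheses of the abstract virial theorem \thref{thm:firstvirial}, apply it to produce approximating eigenvectors $(\psi_n)$ with $\sc{\psi_n,C_1\psi_n}\to 0$, and conclude by semiboundedness (\thref{th:total estimate steps}) and lower semicontinuity of the closed form $\qce$. Two points in your verification of the hypotheses, however, diverge from the paper and need repair.

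First, for assumption (V2) you propose to obtain invariance of $\Def(\Ls)$ under $e^{\i t\As}$ from a GJN condition for $(\As,\Ls,\D)$ via the invariance-of-domain theorem, claiming this ``follows as in part (1) of \thref{th:gjn check}''. But \thref{th:gjn check} only establishes that the \emph{particle} triple $(\Aps,\Lps,\Dps)$ is GJN; the full triple $(\As,\Ls,\D)$ is never checked, and cannot be checked this way, since the field part $A_\f=\dG(\i\partial_u)$ is not relatively bounded by $\Lambda_\f=\dG(u^2+1)$. The paper handles this by splitting: invariance of $\Def(\Lambda_\p)$ follows from $(\Aps,\Lps,\Dps)$ being GJN, while invariance of $\Def(\Lambda_\f)$ is established separately by the explicit identity $\Lambda_\f e^{\i tA_\f}=e^{\i tA_\f}(\Lambda_\f+\dG(2\mult{u}t-t^2))$ on $\Df$, extended to $\Def(\Lambda_\f)$ by a core argument. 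Your shortcut does not deliver (V2).

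Second, in verifying the bounds \eqref{thm:virthm:eq1}--\eqref{thm:virthm:eq3} you assert that the free parts $\ad^{(n)}_{\Aps}(H_\p)\otimes\Id_\p\pm\cdots$ are bounded operators on the particle space. This is false: by the computation in \thref{th:gjn check} one has $\ad^{(n)}_{\Aps}(H_\p)=(-1)^n\Vc^*\q^2\Vc=(-1)^n\Pc H_\p$, which is unbounded. In particular, since $C_1$ and $C_3$ act nontrivially on the vacuum sector where $N=\Id_\p\otimes\Id_\p\otimes\Nf$ vanishes, the estimates $\nn{C_1\psi}\le C\nn{N^p\psi}$ and $\nn{C_3\psi}\le C\nn{N^{1/2}\psi}$ cannot hold literally, and your attribution of them to boundedness of the free parts is incorrect. (The paper's own proof of \thref{th:result virial} is silent about \eqref{thm:virthm:eq1}--\eqref{thm:virthm:eq3}, so you are not alone in glossing over this, but the argument you supply to fill the gap does not work.) The remaining ingredients of your write-up --- the GJN verification via \thref{th:gjn check}, the algebraic identities for $D$ and $C_n$, the strong commutativity of $N$ with $e^{\i t\As}$, and the closing lower-semicontinuity argument --- match the paper.
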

\begin{proof} 
\underline{Step 1:}  We have $\psi \in \D(\Nfh^{1/2})$. 

\vspace{0.4cm} 

To show Step 1, we apply the virial theorem  for $\Lambda$ as in \eqref{eq:defofLambda}, 
   $\mathcal{D}$ as in \eqref{eq:defofmathcalD},  as well as the operators 
  $L=L_\lambda$,  $A =  \Id_\p \otimes \Id_\p \otimes A_\f$, $N=\Nfh +1$, $D$ as in \eqref{eq:defofD}, $C_n = \Cf{n}$, $n \in \{0,1,2,3\}$ as in \eqref{eq:defofCfn}, which are symmetric on $\mathcal{D}$
and satisfy \eqref{abstractviri1}--\eqref{eq:defn_Cn} as a consequence of the 
 definition.  Next we  verify the additional  assumptions
of the virial theorem.  
We have shown  in \autoref{th:gjn check} that \ref{vi:all gjn} is satisfied.  Furthermore, on $\Df$, 
\muu
{
\Lambda_\f e^{\i t A_\f} = e^{\i t A_\f} (\Lambda_\f + \dG(2\mult{u}t +  t^2)), \qquad t \in \IR.
}
Thus, for some fixed $t$ there is a constant $C$ such that $\nn{ \Lambda_\f e^{\i t A_\f} \psi} \leq C\nn{ \Lambda_\f \psi}$ for all $\psi \in \Df$. As $\Df$ is a core for $\Lambda_\f$, we find $e^{\i t A_\f} \Def(\Lambda_\f) \subseteq \Def(\Lambda_\f)$.  
Hence, we conclude that $\Def(\Lambda)$ is invariant under the unitary group associated to $A$ and hence \ref{vi:eitA invariant} holds.
 Furthermore, \eqref{thm:virthm:eq1}--\eqref{thm:virthm:eq3}  are satisfied since $D$, $\Cf{1}$ and $\Cf{3}$ are relatively bounded by $\Nf^{1/2}$ (by \autoref{th:gjn check}).
Thus we can apply the virial theorem, and obtain  a sequence $(\psi_n)$ in $\Def(L_\lambda) \cap \Def(\Cf{1})$ such that $\psi_n \to \psi$ and 
\[
\lim_{n \to \infty} \sc{\psi_n, \Cf{1} \psi_n} = 0.
\]
Now,  it follows from lower semi-continuity of closed quadratic forms that $\psi$ is in the form domain of $\Cf{1}$. As $W_1^{(\f)}$ is relatively bounded by $\Nfh^{1/2}$ (as shown in the proof of \autoref{th:gjn check} \ref{Cf Ls D}), we conclude that $\psi \in \D(\Nfh^{1/2})$.

\vspace{0.4cm} 

\noindent 
\underline{Step 2:}  $\psi \in \Def(\qcee)$ and 
$\qcee(\psi) \leq  0 $.

\vspace{0.4cm} 

To show Step 2, we apply the virial theorem  once more with  $\Lambda$, 
   $\mathcal{D}$,   $L$,  and   $N$ as in Step 1. But now with  
$A = \Ase$, $\epsilon > 0$ and    $C_n =  C_n^{(\epsilon)}$, $n \in \{0,1,2,3\}$ as in \eqref{eq:defofCnepsilon}.
Again,  the  operators are  symmetric on $\mathcal{D}$
and satisfy \eqref{abstractviri1}--\eqref{eq:defn_Cn} as a consequence of the 
 definition.  Let us now   verify the additional  assumptions
of the virial theorem.  
 Again, by \autoref{th:gjn check}, \ref{vi:all gjn} and \eqref{thm:virthm:eq1}--\eqref{thm:virthm:eq3} are satisfied. Furthermore, by  \autoref{th:gjn check} we know  that $(\Apse, \Lps, \Dps)$ is a GJN triple. 
 Hence \autoref{thm:invariance_gjn_domain} shows that $\exp( \i t \Apse )$, $t \in \IR$, leaves $\Def(\Lambda_\p)$ invariant. This and the invariance  for  the unitary group generated by  $A_\f$, as already shown in Step 1,  imply   \ref{vi:eitA invariant}.  Thus we can apply the  virial theorem and obtain   
 a sequence $(\psi_n)_{n\in \IN}$ in $\Def(C_1^{(\epsilon)}) \cap \Def(L_\lambda)$ such that $\psi_n \to \psi$, $n \to \infty$, and $\lim_{n\to\infty} \sc{\psi_n, C_1^{(\epsilon)} \psi_n} = 0$. Now, since  $\qcee$ is closed and  thus lower semi-continuous, we obtain $\psi \in \Def(\qcee)$ and 
\[
\qcee(\psi) \leq \lim_{n \to \infty} \qcee(\psi_n) = 0. 
\]

\vspace{0.4cm} 

\noindent 
\underline{Step 3:}   The theorem  follows from  Step 1, Step 2, and  \autoref{th:W1e lemma}.
\end{proof}

\section{Estimates on the Scattering Functions}
\label{sec:estimates scattering}
The aim of this section is to prove  that the commutators of the interaction with the dilation operator in scattering space are sufficiently bounded. To achieve this, we use the Born series expansion of the scattering functions, that is, we expand them using the recursion formula of the Lippmann-Schwinger equation \eqref{eq:lippmann-schwinger}. Then we get the Born series terms,  and a remainder term since we perform only finitely many recursion steps. The idea is that the remainder term decays fast enough for the momentum $\abs k \to \infty$ after  sufficiently many iteration  steps.

\subsection{Born Series Expansion and Technical Preparations}

First we show that that the scattering functions as well as their derivatives with respect to the wave vector $k$  are bounded. For that we use the method of modified square integrable scattering functions as described in \Autoref{sec:scattering}. Remember that $\phi(k,\cdot)$, $k \in \IR^3$, denote the continuous scattering functions on $\IR^3$ and $V$ a potential satisfying the assumptions \ref{ass:h1}, \ref{ass:h2}.  As $V$ is compactly supported we may assume that $\supp V$ is contained in a ball around the origin of radius $R$.

 We now set $\phit(k,x) := \abs{V(x)}^{1/2} \phi(k,x)$. Then $\phit(k,\cdot) \in L^2(\IR^3)$ for all $k$, and
the function satisfies the modified  Lippmann-Schwinger equation
\[
\phit(k,x) = \abs{V(x)}^{1/2} e^{\i k x} + (L_{\abs k} \phit(k,\cdot))(x),
\]
where $L_\kappa$ is defined in  \eqref{eq:deofL}. 
We recall  from \Autoref{sec:scattering} that we can recover the original scattering function from the modified one by
\begin{align}
\label{eq:recovery phi2}
\phi(k,x) = e^{\i k x} - \frac{1}{4\pi} \int \frac{e^{\i \abs k \abs{x-y}}}{\abs{x-y}} V(y)^{1/2} \phit(k,y) \d y .
\end{align}
Now we extend the results of boundedness of the first derivative of the scattering functions in \cite[Lemma 1.1.3]{newton} to derivatives of arbitrary order. 
\begin{prop}
\label{th:phi derivatives bounded}
Assume  \ref{ass:h1} and  \ref{ass:h2}.  
Let $\hDk = \frac{k}{|k|} \nabla_k$ and let $\partial_{k_j}$ be the derivative with respect to the $j$-th component of $k$.
For all $n \in \IN_0$, $m \in \{0,1\}$ there is a polynomial $P$
such that for all $x$ and $k \neq 0$,
\[
\abs{ \partial^m_{k_j} \hDk^n \varphi(k,x) } \leq P(\abs x) .
\]
\end{prop}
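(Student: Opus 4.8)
The plan is to iterate the modified Lippmann–Schwinger equation to obtain a Born-type series for $\phit(k,\cdot)$ and then differentiate term by term. Write $\phit(k,\cdot) = (\Id - L_{\abs k})^{-1} g_k$ with $g_k(x) := \abs{V(x)}^{1/2} e^{\i k x}$, where the inversion is justified by the standard Rollnik/Ikebe argument: $L_\kappa$ is Hilbert–Schmidt on $L^2(\IR^3)$ with norm uniformly bounded in $\kappa \geq 0$, depends continuously (indeed analytically, after the substitution below) on $\kappa$, and $\Id - L_{\abs k}$ is invertible for every $k$ precisely because $H_\p$ has no eigenvalue at $k^2 \geq 0$ and no half-bound state at $k=0$ (this is exactly where \ref{ass:h2} and the Hölder continuity of $V$ enter, as remarked after the scattering theorem in the excerpt). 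The point of passing to $\phit$ rather than $\phi$ is that $g_k$ is supported in the ball of radius $R$, so all the integral kernels act between functions supported in a fixed compact set; the recovery formula \eqref{eq:recovery phi} then transfers bounds on $\phit$ back to bounds on $\phi(k,x)$ with at most polynomial growth in $\abs x$ coming from the free kernel $\frac{1}{\abs{x-y}}$ integrated against an $L^2$ function on a ball.

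The key steps, in order: (i) record that $\kappa \mapsto L_\kappa$ is differentiable in $\kappa \geq 0$ as a map into Hilbert–Schmidt operators, with $\partial_\kappa^j L_\kappa$ having kernel $-\frac{1}{4\pi}\abs{V(x)}^{1/2}(\i\abs{x-y})^{j-1}e^{\i\kappa\abs{x-y}}V(y)^{1/2}$, which is again Hilbert–Schmidt with norm bounded uniformly in $\kappa$ since $\supp V$ is compact (here polynomial factors $\abs{x-y}^{j-1}$ are harmless on the ball); (ii) observe that $\hat D_k$ applied to any function of $k$ through the combination $\abs k$ is just $\partial_{\abs k}$, so $\hat D_k^n$ acting on $L_{\abs k}$ produces $\partial_\kappa^n L_\kappa|_{\kappa = \abs k}$, while $\partial_{k,j}$ of $e^{\i k x}$ brings down a factor $\i x_j$ (bounded on $\supp V$) and $\partial_{k,j}$ through $\abs k$ contributes $\frac{k_j}{\abs k}$, which is bounded by $1$ away from $k=0$; (iii) differentiate the identity $(\Id - L_{\abs k})\phit(k,\cdot) = g_k$ repeatedly, using Leibniz, to get a recursion expressing $\partial^m_{k,j}\hat D_k^n \phit(k,\cdot)$ as $(\Id - L_{\abs k})^{-1}$ applied to a finite sum of terms, each a product of (uniformly bounded) derivatives of $L_{\abs k}$ and lower-order derivatives of $\phit$, plus derivatives of $g_k$; (iv) bound $\nn{(\Id - L_{\abs k})^{-1}}$ uniformly in $k$ — this needs compactness of $k \mapsto (\Id - L_{\abs k})^{-1}$ as $\abs k \to \infty$, where $L_{\abs k} \to 0$ in Hilbert–Schmidt norm by Riemann–Lebesgue on the ball, so the operator norm of the inverse is bounded for large $\abs k$, while continuity and invertibility on the compact set $\abs k \leq M$ handle the rest; (v) induct on $n$ and $m$ to conclude $\nn{\partial^m_{k,j}\hat D_k^n \phit(k,\cdot)}_{L^2}$ is bounded uniformly in $k \neq 0$, and finally feed this into the recovery formula \eqref{eq:recovery phi}, where applying $\partial^m_{k,j}\hat D_k^n$ again produces the free kernel times polynomial-in-$\abs{x-y}$ factors integrated against the bounded $L^2$ functions $V^{1/2}\partial\phit$, giving the claimed polynomial-in-$\abs x$ bound pointwise in $x$.

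The main obstacle I expect is step (iv): the uniform-in-$k$ bound on $\nn{(\Id - L_{\abs k})^{-1}}$ near $k = 0$. Invertibility of $\Id - L_0$ is the statement that there is no zero-energy resonance (half-bound state), which is where assumption \ref{ass:h2} together with the regularity of $V$ is genuinely used — \cite{ikebe} provides this, and the excerpt explicitly flags it in the discussion after the scattering theorem. Granting that single point of invertibility, continuity of $\kappa \mapsto L_\kappa$ in Hilbert–Schmidt norm (hence in operator norm) on $[0,M]$ upgrades it to a uniform bound on a neighborhood of $0$, and the $\frac{k_j}{\abs k}$ factor from $\partial_{k,j}$ is bounded but not continuous at $k = 0$, which is exactly why the statement excludes $k = 0$ and only asserts a bound, not continuity, for the $m = 1$ derivatives. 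Everything else is bookkeeping with Hilbert–Schmidt norms on a fixed compact set and an induction that does not interact with the field variables at all.
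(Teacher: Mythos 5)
Your proposal follows the paper's proof almost step for step: both iterate (i.e.\ differentiate) the modified Lippmann--Schwinger identity $\phit(k,\cdot) = (\Id - L_{\abs k})^{-1}(\abs V^{1/2}e_k)$, exploit that $\hat D_k$ sees $L_{\abs k}$ only through $\abs k$ and that its derivatives are again bounded kernel operators on the fixed ball $\supp V$, obtain uniform $L^2$-bounds on $\partial^m_{k,j}\hat D_k^n\phit(k,\cdot)$, and transfer these to pointwise polynomial bounds on $\phi$ via the recovery formula \eqref{eq:recovery phi}, Cauchy--Schwarz, and \eqref{eq:square V estimate}. One small slip in step (iv): $L_{\abs k}$ does \emph{not} tend to $0$ in Hilbert--Schmidt norm as $\abs k \to \infty$ --- the kernel's modulus $\frac{\abs{V(x)}^{1/2}\abs{V(y)}^{1/2}}{4\pi\abs{x-y}}$ is independent of $k$, so the HS norm is constant; what decays is the \emph{operator} norm (e.g.\ by a stationary-phase/Klein--Zemach estimate on $L_{\abs k}^*L_{\abs k}$), which is what one actually needs, and the paper sidesteps this by just citing \cite[Lemma~1.1.3]{ikebe} for the uniform bound on $(\Id - L_{\abs k})^{-1}$.
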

\begin{proof}
We get by the modified Lippmann-Schwinger equation
\[
\phit(k,\cdot) = (\Id - L_{\abs k})^{-1} ( \abs{V}^{1/2} e_k ),
\]
where $e_k(x) := e^{\i k x}$. First we claim  that $(\Id - L_{\abs k})^{-1}$ is uniformly bounded in $k \in \IR^3$.
For this we note that $\kappa \mapsto L_{\kappa}$ is continuous on $[0,\infty)$, which is easy to see, 	cf.  \cite[Theorem 1.22]{simon}.
Moreover,  $\lim_{\kappa \to 0} \| L_\kappa \| \to 0$  follows from a result of Zemach and Klein, see  \cite[Theorem 1.23]{simon}. Since $L_\kappa$ is Hilbert-Schmidt
and hence compact,  it follows from the Fredholm alternative that  $\Id - L_\kappa$ is invertible, 
provided $\psi =  L_{\kappa} \psi$ has no non-trivial solutions  in $L^2$. 
But as in the proof of Theorem \ref{th:scattering functions properties}, such non-trivial 
solutions are ruled out for all $\kappa \geq 0$. 
Since the inverse is a continuous map on the space of bounded  invertible operators, the claim about the bounded resolvent now follows. 

Observe that $\hDk |k| = 1$ and 
$\hDk (k/|k|) = 0$.  Thus, for any $n \in \IN_0$, the operators  $\hDk^n (\Id - L_{\abs k})^{-1} $ are again uniformly bounded   in $k \not= 0$, since differentiation with $\hDk$ yields just higher powers of $(\Id - L_{\abs k})^{-1}$ and radial derivatives of $L_{\abs k}$, which are again bounded operators since $V$ decays fast enough. 
Similarly one sees that  $\partial_{k_j} \hDk^n (\Id - L_{\abs k})^{-1} $ is 
uniformly bounded   in $k \not= 0$. Note that the expression is not differentiable at 
the origin.  Furthermore, for any $n \in \IN$,
\[
\sup_{k \not= 0}  \nn{ \hDk^n  (\abs{V}^{1/2}  e_k )}_2 < \infty
\]
as $V$ is compactly supported. Thus, we have shown, for all $n \in \IN_0$,
\[
\sup_{k \not= 0} \nn{ \hDk^n  \phit(k,\cdot) }_2 < \infty.
\]
Now we can differentiate \eqref{eq:recovery phi2}, estimate the integral with Cauchy-Schwarz, and use that
\begin{align}
\label{eq:square V estimate}
\int \frac{\abs{V(y)}}{\abs{x-y}^2} \d y \leq \| V \|_\infty \int_{B_R(x)} \frac{\d y}{y^2} 
\leq 
\| V \|_\infty \left( \int_{B_{3R}(0)} \frac{\d y}{y^2} +   \frac{\abs{ B_R(0)}}{R^2}  \right)
 < \infty
\end{align}
is bounded uniformly in $x$, where the second inequality  can be seen by considering the cases $|x| < 2R$ and $|x| \geq 2R$.  
\end{proof}
Next, we  perform the Born series expansion.  Similar to \cite{ikebe} it is convenient to introduce a symbol for the integral operator in the Lippmann-Schwinger equation. We consider a slightly bigger class of operators to cover also derivatives with respect to $k$. 
Let $\Cb(\IR^3)$ denote the bounded continuous functions on $\IR^3$ and $\Pb(\IR^3)$ the polynomially bounded continuous functions, that is,
\[
\Pb(\IR^3) = \{ \psi \in C(\IR^3) \:  \, \exists n \in \IN_0 , \,  \exists C > 0 , \, \forall x  \in \IR^3 \: \abs{\psi(x)} \leq C (1+\abs x)^n \}.
\]
\begin{defn}
\label{def:Tk}
For $\kappa \geq 0$, $\psi \in \Pb(\IR^3)$ and $n \in \IN_0$, we define 
\[
T^{(n)}_{V,\kappa} \psi(x) := \int \frac{e^{\i \kappa \abs{x-y}}  }{\abs{x-y}^{1-n}} V(y) \psi(y) \d y = \int \frac{e^{\i \kappa \abs v}}{\abs v^{1-n}} V(v+x) \psi(v+x) \d v 
\]
(where the second equality follows by a simple change of variables). 
Furthermore, we set $T_{V,\kappa} := T^{(0)}_{V,\kappa}$.
\end{defn}
In the next proposition we collect a few  elementary properties of these operators. 
\begin{prop}
\label{th:Tk properties}
For all $\kappa \geq 0$,
\begin{enumerate}[label=(\alph*)]
\item $T_{V,\kappa}$, $T^{(-1)}_{V,\kappa}$  are bounded operators from $\Cb(\IR^3)$ to $\Cb(\IR^3)$ with an operator norm, which can be  uniformly bounded in $\kappa$, 
\item for all $n \in \IN_0$, $T^{(n)}_{V,\kappa}$ maps $\Pb(\IR^3)$ to $\Pb(\IR^3)$. 
\end{enumerate}
\end{prop}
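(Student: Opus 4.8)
The plan is to exploit that $T^{(n)}_{V,\kappa}$ is the convolution $T^{(n)}_{V,\kappa}\psi=K_n*(V\psi)$ with kernel $K_n(z):=e^{\i\kappa\abs z}\abs z^{n-1}$, which lies in $L^1_{\mathrm{loc}}(\IR^3)$ exactly when $n\ge-1$ (the only singularities to control are $\abs z^{-1}$ and $\abs z^{-2}$, both integrable near the origin in three dimensions), while $V\psi$ is always a bounded, continuous, compactly supported function because $\psi$ is continuous and $\supp V$ is compact. Everything then follows from two standard facts: a convolution of an $L^1_{\mathrm{loc}}$ function with a bounded, compactly supported, continuous function is continuous — by dominated convergence, after restricting the argument $x$ to a bounded neighbourhood so that the relevant part of the domain is bounded and the kernel supplies an integrable majorant — and its size at $x$ is governed by $\nn{V\psi}_\infty\int_{\supp V}\abs{K_n(x-y)}\,\d y$.

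For part (a) ($n\in\{0,-1\}$, $\psi\in\Cb(\IR^3)$) I would write, with $\supp V\subseteq B_R(0)$,
\[
\abs{T^{(n)}_{V,\kappa}\psi(x)}\le\nn V_\infty\nn\psi_\infty\int_{B_R(0)}\abs{x-y}^{n-1}\,\d y=\nn V_\infty\nn\psi_\infty\int_{B_R(x)}\abs z^{n-1}\,\d z,
\]
and observe that the last integral is bounded uniformly in $x$: split the domain into $\{\abs z<R\}$, where one integrates the locally integrable $\abs z^{n-1}$, and $\{R\le\abs z\}\cap B_R(x)$, where $\abs z^{n-1}\le R^{n-1}$ (since $n-1\le0$) and the volume is finite. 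This is precisely the estimate \eqref{eq:square V estimate} for $n=-1$ and its verbatim analogue for $n=0$. Hence $T^{(n)}_{V,\kappa}$ maps $\Cb(\IR^3)$ boundedly to itself once continuity of the image is checked; for the latter, pass to the $v=y-x$ variables as in \thref{def:Tk}, note that for $x$ near a fixed $x_0$ the integrand converges pointwise by continuity of $V$ and $\psi$ and is dominated by $\nn V_\infty\nn\psi_\infty\abs v^{n-1}\ind_{\{\abs v\le R+\abs{x_0}+1\}}\in L^1(\IR^3)$, and apply dominated convergence.

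For part (b), take $\psi\in\Pb(\IR^3)$ with $\abs{\psi(y)}\le C(1+\abs y)^N$. For $n\ge1$ there is no kernel singularity, so for $y\in\supp V\subseteq B_R(0)$ one has $(1+\abs y)^N\le(1+R)^N$ and $\abs{x-y}^{n-1}\le(\abs x+R)^{n-1}$, yielding $\abs{T^{(n)}_{V,\kappa}\psi(x)}\le\nn V_\infty\,C(1+R)^N\abs{B_R(0)}(\abs x+R)^{n-1}$, which is polynomially bounded; continuity follows again by dominated convergence, now with the locally bounded majorant $\abs v^{n-1}\ind_{\{\abs v\le R+\abs{x_0}+1\}}$. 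The remaining case $n=0$ is subsumed under part (a): since $\psi$ is bounded by $C(1+R)^N$ on $\overline{B_R(0)}$, the same pointwise estimate gives $\abs{T^{(0)}_{V,\kappa}\psi(x)}\le\nn V_\infty\,C(1+R)^N\sup_x\int_{B_R(x)}\abs z^{-1}\,\d z$, a uniform (in particular polynomial) bound, and continuity is as before. The only point in the whole argument that is not purely mechanical is the uniform-in-$x$ control of $\int_{B_R(x)}\abs z^{-1}\,\d z$ and $\int_{B_R(x)}\abs z^{-2}\,\d z$, i.e. the three-dimensional local-integrability estimate \eqref{eq:square V estimate}; the rest is crude pointwise bounds together with dominated convergence.
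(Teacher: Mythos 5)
Your proof is correct and rests on the same basic mechanism as the paper's: because $\supp V$ is compact, $T^{(n)}_{V,\kappa}\psi$ is pointwise controlled by an integral of the kernel $|x-y|^{n-1}$ over $\supp V$, and the only serious point is the uniform control of $\int_{B_R(x)}|z|^{-2}\,\mathrm{d}z$ (resp.\ $\int_{B_R(x)}|z|^{-1}\,\mathrm{d}z$), which is precisely \eqref{eq:square V estimate}. Where you diverge from the paper is in two places, both harmless: (i) the paper's estimate for $T_{V,\kappa}$ and for part (b) goes through Cauchy--Schwarz, writing $\|V\|_2\bigl(\int_{B_R(x)}|z|^{2n-2}\,\mathrm{d}z\bigr)^{1/2}$ and estimating the radial integral by $\int_0^{R+|x|}r^{2n}\,\mathrm{d}r$, whereas you bound the kernel and $\psi$ directly in sup-norm; yours is slightly cruder but avoids the square-root bookkeeping. (ii) You explicitly verify that the image functions are continuous via dominated convergence on a bounded neighbourhood of $x_0$ with the locally integrable majorant $|v|^{n-1}\mathds{1}_{\{|v|\le R+|x_0|+1\}}$. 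The paper's proof only establishes the pointwise bounds and does not address continuity of $T^{(n)}_{V,\kappa}\psi$ at all, which is strictly speaking needed for the stated conclusion that the maps land in $\Cb(\IR^3)$ and $\Pb(\IR^3)$; your version fills that (standard, but real) gap. No issues with your argument.
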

\begin{proof}
\begin{enumerate}[label=(\alph*)]
\item
It follows that for $\psi \in \Cb(\IR^3)$, $x\in \IR^3$, $\kappa \geq 0$,
\begin{align*}
\abs{ T^{-1}_{V,\kappa}\psi(x) } &\leq \nn{\psi}_\infty \int \frac{ \abs{ V(y) }  }{\abs{x-y}^2} \d y   \leq  \nn{V}_\infty  \nn{\psi}_\infty  \int_{B_R(0)} \frac{1}{\abs{x-y}^2} \d y,  \\
\abs{ T_{V,\kappa}\psi(x) } &\leq \nn{\psi}_\infty \int \frac{ \abs{ V(y) }  }{\abs{x-y}} \d y  \leq  \nn{V}_2  \nn{\psi}_\infty  \left( \int_{B_R(0)} \frac{1}{\abs{x-y}^2} \d y \right)^{1/2} .
\end{align*}
The integral is bounded independent of $x$, see \eqref{eq:square V estimate}.
\item
There exists a constant $C > 0$ such that for all $x\in \IR^3$, $n \in \IN_0$, $\kappa \geq 0$,
\begin{align}
\abs{ T^{(n)}_{V,\kappa} \psi(x) } &\leq   C \int_{B_R(0)} \frac{ \abs{ V(y) }  }{\abs{x-y}^{1-n}}  (1+ \abs y^m)  \d y    \label{eq:boundonTs} \\
&\leq C (1+ \abs R^m)  \nn{V}_2 \left( \int_{B_R(x)} \frac{1}{\abs{y}^{2-2n}} \d y \right)^{1/2}.  \nonumber 
\end{align}
The last integral can be estimated by
\[
\int_{r=0}^{R+\abs x} r^{2n}  \d r,
\]
which is bounded by a polynomial in $\abs x$. \qedhere
\end{enumerate}
\end{proof}

Using the  notation  introduced above and iterating the Lippmann-Schwinger equation \eqref{eq:lippmann-schwinger} we arrive at the following proposition. 
\begin{prop}\label{prop:expansionofscatter} 
For all $N \in \IN_0$, $k,x \in \IR^3$, we have
\[
\phi(k,x) = \sum_{n=0}^N \phiz^{(n)}(k,x) + \phir^{(N+1)}(k,x),
\]
where we defined  for $n \in \IN_0$
\begin{align*}
\phiz^{(n)}(k,x) &:= (-4\pi)^{-n}  T_{V,\abs k}^n e_k(x), \\
\phir^{(n)}(k,x) &:= (-4\pi)^{-n}  T_{V,\abs k}^n  \phi(k,\cdot) (x),
\end{align*}
where $e_k(x) := e^{\i k x}$.
\end{prop}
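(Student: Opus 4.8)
The statement to be proved is the Born series decomposition
\[
\phi(k,x) = \sum_{n=0}^N \phiz^{(n)}(k,x) + \phir^{(N+1)}(k,x),
\]
with $\phiz^{(n)} = (-4\pi)^{-n} T_{V,\abs k}^n e_k$ and $\phir^{(n)} = (-4\pi)^{-n} T_{V,\abs k}^n \phi(k,\cdot)$. This is a purely algebraic identity obtained by iterating the Lippmann--Schwinger equation \eqref{eq:lippmann-schwinger}, so the proof is a straightforward induction on $N$; the only points requiring a little care are making sure every application of $T_{V,\abs k}$ is legitimate on the relevant function class, which is exactly what \thref{th:Tk properties} is for.

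\emph{Base case $N=0$.} Rewrite \eqref{eq:lippmann-schwinger} in the notation of \thref{def:Tk}. Since
\[
\phi(k,x) = e^{\i k x} - \frac{1}{4\pi}\int_{\IR^3} \frac{e^{\i \abs k \abs{x-y}}}{\abs{x-y}} V(y)\phi(k,y)\d y
= e_k(x) - \frac{1}{4\pi} (T_{V,\abs k}\phi(k,\cdot))(x),
\]
and $T_{V,\abs k}\phi(k,\cdot) = (T_{V,\abs k}^0 \, T_{V,\abs k}) \phi(k,\cdot)$, this is precisely $\phi(k,x) = \phiz^{(0)}(k,x) + \phir^{(1)}(k,x)$, using $T_{V,\abs k}^0 = \Id$ and $(-4\pi)^{-0}=1$. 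Note $\phi(k,\cdot)$ is bounded and continuous by part (a) of the scattering-theory theorem, so $T_{V,\abs k}\phi(k,\cdot)$ is well-defined (indeed again in $\Cb(\IR^3)$) by \thref{th:Tk properties}(a), and likewise $e_k \in \Cb(\IR^3)$.

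\emph{Inductive step.} Assume the formula holds for some $N \geq 0$. The only term to rewrite is the remainder $\phir^{(N+1)}(k,x) = (-4\pi)^{-(N+1)} (T_{V,\abs k}^{N+1}\phi(k,\cdot))(x)$. Apply the base-case identity $\phi(k,\cdot) = e_k - \tfrac{1}{4\pi} T_{V,\abs k}\phi(k,\cdot)$ \emph{inside} $T_{V,\abs k}^{N+1}$. By linearity of $T_{V,\abs k}$ (and hence of its iterates),
\[
T_{V,\abs k}^{N+1}\phi(k,\cdot) = T_{V,\abs k}^{N+1} e_k - \frac{1}{4\pi} T_{V,\abs k}^{N+2}\phi(k,\cdot),
\]
so that
\[
\phir^{(N+1)}(k,x) = (-4\pi)^{-(N+1)}(T_{V,\abs k}^{N+1}e_k)(x) + (-4\pi)^{-(N+2)}(T_{V,\abs k}^{N+2}\phi(k,\cdot))(x) = \phiz^{(N+1)}(k,x) + \phir^{(N+2)}(k,x),
\]
which combined with the inductive hypothesis gives the formula at level $N+1$. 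Here $e_k \in \Cb(\IR^3) \subseteq \Pb(\IR^3)$, so by \thref{th:Tk properties}(b) the iterates $T_{V,\abs k}^j e_k$ are well-defined elements of $\Pb(\IR^3)$ for every $j$, and similarly for the iterates applied to $\phi(k,\cdot)$; thus every term in the manipulation is a legitimate function and all interchanges are justified.

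\emph{Main obstacle.} There is no real obstacle: the content is entirely bookkeeping of the index shift and the sign/normalization factor $(-4\pi)^{-n}$. The one thing worth stating explicitly (and the only place a reader might pause) is that all the integral operators act on function spaces on which they are bona fide bounded operators, so that iterating is meaningful and linearity may be used freely — this is precisely the role of \thref{th:Tk properties}, and invoking it at each step suffices.
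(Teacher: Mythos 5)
Your proof is correct and is exactly the elaboration of what the paper leaves implicit — the paper gives no proof at all, simply stating that the proposition follows by "iterating the Lippmann--Schwinger equation," which is precisely your induction, with the appeal to \thref{th:Tk properties} making the iteration legitimate.
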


As an immediate consequence of  an iterated application of  the first and 
second identity in \autoref{def:Tk} we find 
the following lemma, which we will use. 

\begin{lemma} \label{iteratedformula2} 
Let  $V_1,\ldots, V_p \in \Cci(\IR^3)$, $n_1,\ldots,n_p \in \IN_0$,  and $\psi \in \Pb(\IR^3)$.
 Then for all $k, x \in \R^3$, we have
\begin{align} \nonumber 
(T_{V_1,\abs k}^{(n_1)}  &\cdots   T_{V_p,\abs k}^{(n_p)} \psi )(x_0)   \\ &=   \int 
 \left\lbrace  \prod_{l=1}^p \frac{e^{\i  \abs k  \abs{x_{l-1}-x_l} }}{\abs {x_{l-1}-x_l}^{1-n_l}}  V_l(x_l) \right\rbrace 
\psi( x_p ) 
 \d(x_1, \ldots, x_p) , \label{eq:tk formula0}  
\\ &=   \int  \left\lbrace  \prod_{l=1}^p \frac{e^{\i  \abs k  \abs{u_l} }}{\abs {u_l}^{1-n_l}}  V_l\left(x_0  + \sum_{s=1}^l u_s \right) \right\rbrace 
\psi\left( x_0 + \sum_{s=1}^p u_s \right) 
 \d(u_1, \ldots, u_p)     \label{eq:tk formula-1} , 
\end{align} 
and the special case 
\begin{align} 
(T_{V_1,\abs k}^{(n_1)}  &\cdots   T_{V_p,\abs k}^{(n_p)} e_k )(x) \label{eq:tk formula000}
 =   e^{\i k x} \int  \left\lbrace  \prod_{l=1}^p \frac{e^{\i  ( \abs k  \abs{u_l} + k u_l)}}{\abs {u_l}^{1-n_l}}  V_l\left(x  + \sum_{s=1}^l u_s \right) \right\rbrace 
 \d(u_1, \ldots, u_p).
\end{align}
\end{lemma}

\subsection{Estimates of the  Terms of the Born Series}

In this subsection we prove decay estimates for the inner products of an abstract coupling function $\chi \in \S(\IR^3)$ with (derivatives with respect to $k$ of) the functions $\phiz^{(n)}(k,\cdot)$, $k \in \IR^3$, $n \in \IN_0$. These estimates   will be collected in \thref{th:final:phiz} and \thref{th:final:phiz2}. In fact one can  show an arbitrary fast decay for \textit{any} $n \in \IN$. The main tool will be a standard stationary phase argument as given in the next lemma.  For this recall 
the notation  $\sc{k} = (1 + |k|^2)^{1/2}$.  
\begin{lemma}[Stationary phase]
\label{th:stationary phase}
For any $n \in \IN$ there exists a constant $C$ such that for all $g \in \Cci(\IR^3)$ and $k \in \IR^3$, we have
\[
\abs { \int e^{\i  x k} g(x) \d x  } \leq \frac{C}{\sc{k}^n}  \sup_{\abs \alpha \leq n} \nn{D^\alpha g}_1.
\]
\end{lemma}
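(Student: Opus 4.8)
The plan is to extract all the decay in $k$ by integration by parts, using that the linear phase $x \mapsto x \cdot k$ has no stationary points: each $\partial_{x_j}$ acts on $e^{\i x k}$ as multiplication by $\i k_j$, so every integration by parts trades one factor of $k_j$ for one derivative landing on $g$, with no boundary terms since $g$ is compactly supported. Writing $\sc k = (1+\abs k^2)^{1/2}$ (the usual convention), I would first dispose of the low-frequency region $\abs k \le 1$: there $\sc k \le \sqrt 2$, so the trivial bound $\abs{\int e^{\i x k} g(x)\d x} \le \nn g_1$ is already of the desired form, with the derivative-free term $D^0 g = g$ and constant $2^{n/2}$.

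For $\abs k \ge 1$ I would pick an index $j \in \{1,2,3\}$ with $\abs{k_j}$ maximal, so that $\abs{k_j} \ge \abs k/\sqrt 3 > 0$. Using $\partial_{x_j}^n e^{\i x k} = (\i k_j)^n e^{\i x k}$ and integrating by parts $n$ times in $x_j$ gives
\[
\int e^{\i x k} g(x)\d x = \frac{(-1)^n}{(\i k_j)^n} \int e^{\i x k}\, \partial_{x_j}^n g(x)\d x ,
\]
whence $\abs{\int e^{\i x k} g(x)\d x} \le \abs{k_j}^{-n}\nn{\partial_{x_j}^n g}_1 \le 3^{n/2}\abs k^{-n}\nn{D^{n e_j} g}_1$. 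Since $\abs k \ge 1$ forces $\sc k^2 = 1+\abs k^2 \le 2\abs k^2$, hence $\abs k^{-n} \le 2^{n/2}\sc k^{-n}$, and since the multi-index $n e_j$ has order $n$, this is at most $6^{n/2}\sc k^{-n}\sup_{\abs \alpha \le n}\nn{D^\alpha g}_1$. Taking $C := 6^{n/2}$ covers both regimes at once.

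I do not expect any real obstacle here; this is a textbook non-stationary phase estimate. The only points needing a little care are bookkeeping ones: arranging the Japanese bracket $\sc k$ rather than $\abs k$ on the right-hand side, which is why one splits at $\abs k = 1$ so the bound does not degenerate at $k=0$; and keeping the derivatives of $g$ of order at most $n$, which is why one differentiates $n$ times in a single well-chosen coordinate direction instead of applying $(1-\Delta)^n$ — the latter would be equally valid but would cost derivatives up to order $2n$.
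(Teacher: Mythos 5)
Your proof is correct and uses the same non-stationary-phase integration-by-parts argument as the paper, which simply differentiates the phase once and says ``repeat $n$ times.'' You fill in the two bookkeeping points the paper leaves implicit (the split at $\abs k = 1$ so that $\sc k$ rather than $\abs k$ appears, and differentiating $n$ times in a single well-chosen coordinate so no multi-index of order exceeding $n$ arises), but the method is essentially identical.
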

\begin{proof}
For all $k \in \IR^3$, $j=1, \ldots, n$,
\begin{align*}
\i k_j\int e^{\i xk} g(x) \d x &= \int \partial_j  e^{\i xk} g(x) \d x  \\
&= \lim_{R \to \infty} \int_{S_R(0)} e^{\i xk} g(x) \d x -  \int e^{\i xk} \partial_j   g(x) \d x.  
\end{align*}
The first term clearly vanishes. Now we can repeat this procedure $n$ times. 
\end{proof}
We proceed by computing the derivatives as well as  the effect of  multiple applications of the dilation operator in the variable $k$ acting on the terms of the Born series. The idea is that the application of $k \nabla_k$ or $\nabla_k$ on terms of the form
\begin{align}
\label{eq:T V1 Vp}
T_{V_1,\abs k} \cdots   T_{V_p,\abs k} e_k, \qquad V_1, \ldots, V_p \in \Cci(\IR^3), ~ p \in \IN,
\end{align}
yields again a linear combination of such terms multiplied with polynomials in $x$ and $k$ (see \autoref{th:k nabla lemma} and \autoref{th:k nabla lemma 2}). We  want to  remember that the Born series terms can be written in the form \eqref{eq:T V1 Vp}.
This procedure can be repeated multiple times and the resulting expressions  can then be estimated with the stationary phase argument. 
\begin{lemma}
\label{th:k nabla lemma}
Assume $V_1,\ldots, V_p \in \Cci(\IR^3)$. Then we can write for all $k \in \IR^3$,
\[
k \nabla_k \left( T_{V_1,\abs k} \cdots   T_{V_p,\abs k} e_k \right)
\]
as a sum of
\begin{align}
\label{eq:tk first term}
\i (k \hat {\mathsf x} ) T_{V_1,\abs k}  \cdots   T_{V_p,\abs k} e_k,
\end{align}
where $\hat {\mathsf x}$ denotes the multiplication in $x$, 
and  terms of the form
\begin{align}
\label{eq:tk second term}
\sum_{l=1}^p Q T_{W_1,\abs k} \cdots   T_{W_l,\abs k} e_k,
\end{align}
where $Q$ denotes the multiplication in $x$ with a polynomial of maximal degree  one, and $W_l \in \Cci(\IR^3)$, $l=1, \ldots, p$.
\end{lemma}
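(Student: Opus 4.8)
The plan is to differentiate the product $T_{V_1,\abs k}\cdots T_{V_p,\abs k}e_k$ by the Leibniz rule, treating the $k$-dependence in each factor separately, and then to show that every resulting piece falls into one of the two claimed shapes. I would write $k\nabla_k$ acting on the product as a sum of $p+1$ contributions: one where $k\nabla_k$ hits the plane wave $e_k$, and for each $i\in\{1,\dots,p\}$ one where it hits the kernel of $T_{V_i,\abs k}$ (the potentials $V_l$ themselves are $k$-independent, so there are no further terms). The first contribution is immediate: since $k\nabla_k e_k(x)=k\nabla_k e^{\i kx}=\i(kx)e^{\i kx}$, and the operators $T_{V_l,\abs k}$ are $k$-independent as integral operators (only their exponential kernels carry $k$, but here we are differentiating $e_k$, not them), this gives exactly the term \eqref{eq:tk first term} with $\hat{\mathsf x}$ the multiplication by $x$.

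For the $i$-th contribution, the only $k$-dependence in $T_{V_i,\abs k}$ sits in the kernel factor $e^{\i\abs k\abs{x-y}}/\abs{x-y}$. Here I would use the key identity $k\nabla_k\,\abs k=\abs k$, so that
\[
k\nabla_k\,\frac{e^{\i\abs k\abs{x-y}}}{\abs{x-y}}=\i\,\abs k\,e^{\i\abs k\abs{x-y}}=\i\,\abs k\,\frac{e^{\i\abs k\abs{x-y}}}{\abs{x-y}}\cdot\abs{x-y}.
\]
The cancellation of one power of $\abs{x-y}$ in the denominator converts $T^{(0)}_{V_i,\abs k}=T_{V_i,\abs k}$ into $T^{(1)}_{V_i,\abs k}$, i.e. the integral operator with kernel $e^{\i\abs k\abs{x-y}}V_i(y)$ (no singular factor). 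The price is a prefactor $\i\abs k$. The remaining task is to reabsorb both the $\abs k$ prefactor and the upgraded operator $T^{(1)}_{V_i,\abs k}$ into the stated normal form. For the operator part I would use the algebraic identity $\abs{x-y}=$ (something expressible via multiplications in $x$ and in $y$ composed around the kernel); more precisely, writing $v=x-y$, one has $T^{(1)}_{V_i,\abs k}\psi(x)=\int e^{\i\abs k\abs v}V_i(x+ v)\psi(x+ v)\,\d v$, which by Plancherel-type manipulation or by the elementary observation that $\abs v^2=\abs x^2-2x\cdot y+\abs y^2$ can be rewritten as a finite sum of terms $Q\,T_{W,\abs k}(\cdots)$ with $Q$ a polynomial of degree $\le 1$ in $x$ and $W$ again a Schwartz (indeed compactly supported smooth) function — here one should be slightly careful and perhaps iterate, or instead use the cruder route of absorbing the extra factors $\abs v$ as derivatives, but since the statement only asks for $Q$ of degree at most one and $W_l\in\Cci$, the bookkeeping should close. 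The factor $\i\abs k$ is harmless because $\abs k$ commutes with everything and can be tracked as a scalar multiple, ultimately reabsorbed once one recalls that in the intended applications these expressions are paired against a Schwartz function and estimated via the stationary phase lemma \ref{th:stationary phase}, where powers of $k$ are tamed by $\sc k^{-n}$; but strictly within this lemma the $\abs k$ prefactor must be carried along, which suggests that the precise statement intends $Q$ to possibly involve $\abs k$ as well, or that one exploits $\abs k\,e^{\i\abs k\abs v}=-\i\partial_{\abs v}e^{\i\abs k\abs v}$ and integrates by parts in $v$ to trade $\abs k$ for a derivative falling on $V_i$ and $\psi$, producing precisely new $W_l\in\Cci$ and polynomials of low degree.

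The main obstacle I anticipate is exactly this last reabsorption: converting $\i\abs k\,T^{(1)}_{V_i,\abs k}(\cdots)$ back into the canonical form $\sum Q\,T_{W_1,\abs k}\cdots T_{W_p,\abs k}e_k$ without leaving a residual power of $\abs k$ outside. The clean way is integration by parts in the radial variable: since $\i\abs k\,e^{\i\abs k\abs v}=\partial_{\abs v}e^{\i\abs k\abs v}$ (up to sign), one integrates by parts, the boundary terms vanish by compact support of $V_i$, and the derivative lands on $V_i(x+v)\psi(x+v)$, producing a sum of integrals of exactly the same type $T_{W,\abs k}$ with $W\in\Cci(\IR^3)$ (namely $W$ built from $V_i$ and its first derivatives and low-degree polynomials from the Jacobian), and the $\psi$-dependence likewise gets at most one derivative and one polynomial factor of degree $\le 1$. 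Then, unwinding $v=x-y$ and distributing derivatives across $\psi=T_{V_{i+1},\abs k}\cdots e_k$, one sees the whole thing is a finite sum of terms of the shape \eqref{eq:tk second term}. I would present the argument as: (i) Leibniz expansion; (ii) the identity $k\nabla_k\abs k=\abs k$ on each kernel; (iii) radial integration by parts to eliminate the $\abs k$ prefactor; (iv) change of variables $v=x-y$ and collection of terms, citing \ref{th:Tk properties} to ensure everything stays in $\Pb(\IR^3)$ so the manipulations are justified.
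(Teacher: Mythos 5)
Your Leibniz-rule decomposition contains an error that vitiates the first step of the argument: the contribution where $k\nabla_k$ falls on $e_k$ is
\[
T_{V_1,\abs k}\cdots T_{V_p,\abs k}\bigl(\i(k\hat{\mathsf x})\,e_k\bigr),
\]
\emph{not} $\i(k\hat{\mathsf x})\,T_{V_1,\abs k}\cdots T_{V_p,\abs k}e_k$ as in \eqref{eq:tk first term}. Multiplication by $x$ does not commute with the integral operators $T_{V_l,\abs k}$ (their kernels depend on $x-y$), so the factor $\i(kx)$ you pick up from $k\nabla_k e^{\i kx}$ sits at the innermost argument $y_p$ of the iterated integral, i.e.\ you obtain $\i(ky_p)$ inside the integrand, not $\i(kx)$ outside it. Already for $p=1$,
\[
\int \frac{e^{\i\abs k\abs{x-y}}}{\abs{x-y}}\,V(y)\,\i(ky)\,e^{\i ky}\,\d y
\;\neq\;
\i(kx)\int \frac{e^{\i\abs k\abs{x-y}}}{\abs{x-y}}\,V(y)\,e^{\i ky}\,\d y,
\]
and the difference carries a factor $\i\,k\cdot(y-x)$. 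A second, correlated omission is in your radial integration by parts on the kernel terms: when the $\rho$-derivative lands on $\psi(x+v)=T_{V_{i+1},\abs k}\cdots T_{V_p,\abs k}e_k(x+v)$, it does not only generate derivatives of the $V_l$; propagated to the innermost factor it eventually hits $e_k$ and produces a factor $\i k\cdot(v/\abs v)$. In fact these two missing pieces are precisely what cancel each other, so your decomposition can be repaired, but as written the accounting is wrong at the first step and the correction terms are silently dropped.

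The paper sidesteps all of this by first performing the change of variables $u_l = y_l - y_{l-1}$ (with $y_0=x$), which pulls out $e^{\i kx}$ as a genuine external factor and rewrites the remaining integrand in terms of the combined phases $e^{\i(ku_l+\abs k\abs{u_l})}/\abs{u_l}$; see \eqref{eq:tk formula}. Then $k\nabla_k$ produces $\i(kx)$ outside (this \emph{is} \eqref{eq:tk first term}) plus the multiplicative factors $\i(ku_{l'}+\abs k\abs{u_{l'}})$, which are converted to the dilation operators $(u_{l'}\nabla_{u_{l'}}+1)$ acting on the combined kernel; after integration by parts the derivatives land only on the product of the $V_l$'s. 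This keeps every factor of $k$ out of the second family of terms and yields \eqref{eq:tk second term} directly. To salvage your version you would have to (i) track the $\i k\cdot(y_p-x)$ correction from the $e_k$ term, (ii) track the $\i k\cdot\omega$ factor produced by your radial integration by parts, and (iii) exhibit the cancellation; the change of variables and the dilation identity do all three in one stroke and are the cleaner route.
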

\begin{proof}
Recall the formula \eqref{eq:tk formula000} in \autoref{iteratedformula2}, 
\[
(T_{V_1,\abs k}  \cdots   T_{V_p,\abs k} e_k )(x) =  e^{\i k x} \int  \left\lbrace  \prod_{l=1}^p \frac{e^{\i (\abs k  \abs{u_l} + k u_l)}}{\abs {u_l}}  V_l\left(x  + \sum_{s=1}^l u_s \right) \right\rbrace  \d(u_1, \ldots, u_p). \]
Differentiation with respect to the first factor on the right-hand side
 yields \eqref{eq:tk first term}. Under the integral we use 
\begin{align}
k \nabla_k   \prod_{l=1}^p \frac{e^{\i ( k u_l +  \abs k  \abs{u_l})}}{\abs {u_l}} &= \i \sum_{l'=1}^p ( k u_{l'} + \abs k \abs{u_{l'}} )  \prod_{l=1}^p \frac{e^{\i ( k u_l +  \abs k  \abs{u_l})}}{\abs {u_l}} \nonumber \\
&=  \sum_{l'=1}^p (u_{l'}  \nabla_{u_{l'}} +1)  \prod_{l=1}^p \frac{e^{\i ( k u_l +  \abs k  \abs{u_l})}}{\abs {u_l}}. \label{eq:k nabla partial}
\end{align}
We can now use  in \eqref{eq:tk formula000}  integration by parts   to shift the derivatives to the $V_l$ terms. Any boundary terms vanish as we consider compactly supported functions. 
Thus, we arrive at
\begin{align*}
k \nabla_k &\int \left\lbrace \prod_{l=1}^p \frac{e^{\i (\abs k  \abs{u_l} + k u_l)}}{\abs {u_l}}   V_l\left(x  + \sum_{s=1}^l u_s \right) \right\rbrace  \d(u_1, \ldots, u_p) \\
&= - \int \left\lbrace \prod_{l=1}^p \frac{e^{\i (\abs k  \abs{u_l} + k u_l)}}{\abs {u_l}} \right\rbrace   \sum_{l'=1}^p (u_{l'}  \nabla_{u_{l'}} +2)   \prod_{l=1}^p V_l\left(x  + \sum_{s=1}^l u_s \right)  \d(u_1, \ldots, u_p)\\
&= - \int \left\lbrace   \prod_{l=1}^p \frac{e^{\i (\abs k  \abs{u_l} + k u_l)}}{\abs {u_l}}  \right\rbrace \sum_{l'=1}^p \left\lbrace \prod_{\substack{l=1 \\ l\not= l'}}^p   V_l\left(x  + \sum_{s=1}^l u_s \right)  \right\rbrace        \\ &\qquad  \left(  \sum_{s=1}^{l'} u_{s}  \nabla + 2 \right) V_{l'}\left(x  + \sum_{s=1}^{l'} u_s \right) \d(u_1, \ldots, u_p) , 
\end{align*}
where the last equality follows by  calculating the derivatives by means of the product 
and chain rule and by reordering the summation. 
We can now write, with $W_{l'}(y) := y \nabla V_{l'}(y)$,
\begin{align*}
 \left( \sum_{s=1}^{l'} u_{s}  \nabla \right) V_{l'}\left(x  + \sum_{s=1}^{l'} u_s \right)   = W_{l'}\left(x  + \sum_{s=1}^l u_s \right) - x \nabla V_{l'}\left(x  + \sum_{s=1}^l u_s \right).
\end{align*}
Since  $W_{l'}$ and the derivatives of $V_{l'}$ are again in $\Cci(\IR^3)$ for all $l'$, we obtain expressions of the form \eqref{eq:tk second term}.
\end{proof}
\begin{lemma}
\label{th:k nabla lemma 2}
Let $p \in \IN$. 
Assume that $V_1,\ldots, V_p \in \Cci(\IR^3)$, $n_1, \ldots, n_p \in \IN_0$. Then for $j \in \{1,2,3\}$, $k \not= 0$,
\begin{align*}
\partial_{k_j} \left( T^{(n_1)}_{V_1,\abs k}   \cdots    T^{(n_p)}_{V_p,\abs k} e_k \right) &= \i  T^{(n_1)}_{V_1,\abs k}   \cdots    T^{(n_p)}_{ \hat{\mathsf x}_j  V_p,\abs k}e_k + \i \frac{k_j}{\abs k } \sum_{i=1}^p X_1^{(i)}   \cdots   X_p^{(i)} e_k, \\
\hDk \left( T^{(n_1)}_{V_1,\abs k}   \cdots    T^{(n_p)}_{V_p,\abs k} e_k \right) &= \i T^{(n_1)}_{V_1,\abs k}   \cdots    T^{(n_p)}_{ \frac{k \hat{\mathsf x}  }{\abs k}  V_p,\abs k} 
 e_k + \i \sum_{i=1}^p X_1^{(i)}   \cdots   X_p^{(i)} e_k, 
\end{align*}
where $\hat{\mathsf x}_j$ stands for multiplication by $x_j$ and   
\[
X_l^{(i)} := \begin{cases} T^{(n_l)}_{V_l,\abs k}, & i \not= l, \\ T^{(n_l+1)}_{V_l,\abs k}, & i=l. \end{cases}
\]
\end{lemma}
\begin{proof}
This follows by direct computation of the derivative  by means of the product rule 
of the expression \eqref{eq:tk formula000}.
\end{proof}
Finally, we use  the previous estimates for  the following two propositions. 
\begin{prop}
\label{th:final:phiz}
 For all $s \in \IN_0$, $p,m,n \in \IN$, $X \in \{ \Id, \nabla_k, \nabla_{k'} \}$, $Y \in \{ k \nabla_k + k' \nabla_k', \eta(k) k  \nabla_k \}$, where  $\eta \in \S(\IR^3)$, 
 there are constants $n_1, n_2 \in \IN$, $C$,  such that for all $k,k' \not= 0$, $\chi \in \S(\IR^3)$,
\begin{align*}
\abs{  X  Y^s \sc{ \phi_0^{(p)}(k,\cdot), \chi \phi_0^{(m)}(k',\cdot) } } \leq \frac{C}{1+ \abs{k-k'}^n} \sup_{\abs \alpha \leq n_1} \nn{\sc{\cdot}^{n_2} \partial_x^\alpha \chi}_1.
\end{align*}
\end{prop}
\begin{proof}
First, let $Y = k \nabla_k + k' \nabla_k'$. Using an induction argument in $s$ 
we obtain from  \thref{th:k nabla lemma} that  we can write 
\[
(k \nabla_k + k' \nabla_k')^s \sc{ \phi_0^{(p)}(k,\cdot), \chi \phi_0^{(m)}(k',\cdot) }
\]
as linear combination of terms of the form
\begin{align}
\label{eq:k k prime expressions}
 (k-k')^{\alpha} \sc{ T_{{V}_1,\abs k}   \cdots    T_{{V}_p,\abs k} e_k , P \chi T_{{W}_1,\abs {k'}}   \cdots    T_{{W}_m,\abs {k'}} e_{k'} },
\end{align}
for some polynomial $P$, multi-index $\alpha$, ${V}_1, \ldots, {V}_p$, ${W}_1, \ldots, {W}_m \in \Cci(\IR^3)$.
Then we obtain the desired estimate for $X =  \Id$ by the stationary phase argument of \thref{th:stationary phase}, which  can  be seen using  
\eqref{eq:tk formula000}
and observing that $\chi$ is a Schwartz function and that the potential $V$ has compact support.  For $X \in \{  \nabla_k ,  \nabla_{k'}\}$  we apply  \thref{th:k nabla lemma 2} 
to the expressions in 
\eqref{eq:k k prime expressions}, with the result that 
we can write 
\[
X (k \nabla_k + k' \nabla_k')^s \sc{ \phi_0^{(p)}(k,\cdot), \chi \phi_0^{(m)}(k',\cdot) }
\]
 for $k,k' \not= 0$ as linear combinations of terms
\[
 (k-k')^{\alpha} f(k,k') \sc{ T_{{V}_1,\abs k}^{(n_1)}   \cdots    T_{{V}_p,\abs k}^{(n_p)}  e_k , P \chi T_{{W}_1,\abs {k'}}^{(n'_1)}   \cdots    T_{{W}_m,\abs {k'}}^{(n'_m)} e_{k'} },
\]
with now a bounded function $f$ on $\IR^3 \times \IR^3$ and $n_1 + \ldots + n_p$, $n'_1 + \ldots + n'_m \in \{0,1\}$. The desired estimate in this case  follows now from 
the same stationary phase argument  using \eqref{eq:tk formula000}   as before.  Finally, 
for $Y = \eta(k) k \nabla_k$ one proceeds similarly  but now using   only \autoref{th:k nabla lemma 2}. 
\end{proof}

\begin{prop}
\label{th:final:phiz2}
 For all $s \in \IN_0$, $p,n \in \IN$, there exist constants $n_1, n_2 \in \IN$, $C$, such that for all $k$, $X \in  \{ \Id, \nabla_k \}$,  and   $\chi \in \S(\IR^3)$
 we have 
\begin{align*}
\abs{ X  (k \nabla_k)^s \sc{ \phi_0^{(p)}(k,\cdot), \chi } } \leq \frac{C}{1+ \abs{k}^n} \sup_{\abs \alpha \leq n_1} \nn{\sc{\cdot}^{n_2} \partial_x^\alpha \chi}_1.
\end{align*}
\end{prop}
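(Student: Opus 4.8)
The plan is to mirror the proof of \thref{th:final:phiz}, with the simplification that here only one momentum enters (the variable $k'$ does not occur on either side of the inequality, so we read the claim as a statement about $k\in\IR^3$; for $X=\nabla_k$ and $p\ge1$ the function is differentiable only for $k\neq0$, but the bound $\frac{C}{1+\abs k^n}$ is $\approx C$ near the origin and that is all that is needed there). Since $\phiz^{(p)}(k,\cdot)=(-4\pi)^{-p}T_{V,\abs k}^p e_k$ and $\sc{\cdot,\cdot}$ is conjugate-linear in its first entry with $k$ real,
\[
X(k\nabla_k)^s\sc{\phiz^{(p)}(k,\cdot),\chi}=(-4\pi)^{-p}\sc{X(k\nabla_k)^s\big(T_{V,\abs k}^p e_k\big),\,\chi}.
\]
First I would apply \thref{th:k nabla lemma} $s$ times to $(k\nabla_k)^s(T_{V,\abs k}^p e_k)$: each application turns a product $T_{W_1,\abs k}\cdots T_{W_p,\abs k}e_k$ into $\i(k\cdot\hat{\mathsf x})$ times the same product plus finitely many products of the same shape with the $W_l\in\Cci(\IR^3)$ replaced by other functions in $\Cci(\IR^3)$ and multiplied by a constant-coefficient polynomial in $x$ of degree $\le1$. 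Iterating, $(k\nabla_k)^s(T_{V,\abs k}^p e_k)$ becomes a finite linear combination of terms $(k\cdot\hat{\mathsf x})^a\,P_0(\hat{\mathsf x})\,T_{W_1,\abs k}\cdots T_{W_p,\abs k}e_k$ with $0\le a\le s$, $P_0$ a constant-coefficient polynomial of $x$-degree bounded in terms of $s$, and $W_1,\dots,W_p\in\Cci(\IR^3)$. For $X=\nabla_k$ I would then apply \thref{th:k nabla lemma 2} once more; it again produces terms of this structure, the only new features being an extra factor $\hat{\mathsf x}_j$ in one summand and a factor $\tfrac1{\abs k}$ in front of a product of operators $T^{(0)}_{W_l,\abs k}$ and one $T^{(1)}_{\partial_j W_l,\abs k}$ in the others.

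Next I would move the real-valued polynomial prefactors into the second slot of the inner product, writing each term as $\sc{T_{W_1,\abs k}\cdots T_{W_p,\abs k}e_k,\ \chi_k}$ with $\chi_k:=(k\cdot\hat{\mathsf x})^a P_0(\hat{\mathsf x})\chi\in\S(\IR^3)$, whose weighted $L^1$-seminorms are bounded by $C(1+\abs k)^a\sup_{\abs\alpha\le n_1}\nn{\sc{\cdot}^{n_2}D^\alpha\chi}_1$ for suitable $n_1,n_2\in\IN$ depending only on $p$, $s$, and the decay exponent $n$. By the explicit formula \eqref{eq:tk formula}, $T_{W_1,\abs k}\cdots T_{W_p,\abs k}e_k=e_k\,g_k$, and by \thref{th:Tk properties}\,(a) the function $g_k(x)=e^{-\i kx}(T_{W_1,\abs k}\cdots T_{W_p,\abs k}e_k)(x)$ is bounded uniformly in $k$; since differentiating \eqref{eq:tk formula} in $x$ only replaces some $W_l$ by a derivative $\partial_j W_l\in\Cci(\IR^3)$, all $x$-derivatives of $g_k$ are likewise bounded uniformly in $k$. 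Hence each term equals $\int e^{-\i kx}h_k(x)\,\d x$ with $h_k:=\overline{g_k}\,\chi_k$ a Schwartz function all of whose Schwartz seminorms are bounded, uniformly in $k$, by $C(1+\abs k)^s\sup_{\abs\alpha\le n_1}\nn{\sc{\cdot}^{n_2}D^\alpha\chi}_1$.

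Finally I would extract the decay by integrating by parts exactly as in the proof of \thref{th:stationary phase} — legitimate since $h_k$ is Schwartz, so the boundary terms vanish — to get $\abs{k_j}^{n+s}\abs{\int e^{-\i kx}h_k\,\d x}=\abs{\int e^{-\i kx}\partial_{x,j}^{n+s}h_k\,\d x}\le\nn{\partial_{x,j}^{n+s}h_k}_1$ for $j=1,2,3$, estimate the right-hand side by the Leibniz rule together with the uniform bounds on the derivatives of $g_k$, and combine with the trivial bound $\abs{\int e^{-\i kx}h_k\,\d x}\le\nn{h_k}_1$; the factor $(1+\abs k)^s$ is absorbed by the surplus in the decay exponent, which is allowed because the estimate is claimed for \emph{every} $n\in\IN$. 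Summing the finitely many terms then yields the asserted bound.

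The one point I expect to need real care is the case $X=\nabla_k$: one must verify that the factor $\tfrac1{\abs k}$ produced by \thref{th:k nabla lemma 2} does not spoil the estimate near $k=0$. It does not, since the product it multiplies vanishes to first order as $\abs k\to0$; equivalently, $\partial_{k,j}\big(T_{W_1,\abs k}\cdots T_{W_p,\abs k}e_k\big)$ is, like its first summand $\i\hat{\mathsf x}_j T_{W_1,\abs k}\cdots T_{W_p,\abs k}e_k$, bounded together with all its $x$-derivatives uniformly for $k$ near the origin — as one reads off from \eqref{eq:tk formula} after the substitution $v_l:=x+u_1+\cdots+u_l$, using that the $\partial_j W_l$ stay in $\Cci(\IR^3)$ and the phase differences $\abs{v_l-v_{l-1}}$ stay controlled on the supports of the $W_l$ — so that $\tfrac1{\abs k}$ times that product remains bounded near $0$; for $\abs k$ bounded away from $0$ the factor is harmless. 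The remaining steps — iterating \thref{th:k nabla lemma} and \thref{th:k nabla lemma 2}, tracking the polynomial degrees, and the Leibniz estimate of $\nn{\partial_{x,j}^{n+s}h_k}_1$ — are routine.
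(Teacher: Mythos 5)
Your proposal is correct and follows essentially the same path as the paper's (very terse) proof: apply \thref{th:k nabla lemma} $s$ times to rewrite $(k\nabla_k)^s\sc{\phiz^{(p)}(k,\cdot),\chi}$ as a sum of terms $k^\alpha\sc{T_{V_1,\abs k}\cdots T_{V_p,\abs k}e_k,\,P\chi}$, use \thref{th:k nabla lemma 2} to handle $X=\nabla_k$, and finish by stationary phase (\thref{th:stationary phase}), absorbing the polynomial growth in $k$ into the arbitrary decay order. Your extra care about the $\tfrac1{\abs k}$ factor from \thref{th:k nabla lemma 2} near $k=0$ is a point the paper leaves implicit, and your indirect boundedness argument for it is sound.
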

\begin{proof}
Analogously to the proof of  \autoref{th:final:phiz} we  inductively apply    \autoref{th:k nabla lemma} to find  that
\[
(k \nabla_k)^s \sc{ \phi_0^{(p)}(k,\cdot), \chi }
\]
can be written as
\begin{align}
\label{eq:k k prime expressions 2}
k^{\alpha} \sc{ T_{{V}_1,\abs k}   \cdots    T_{{V}_p,\abs k} e_k, P \chi},
\end{align}
for some polynomial $P$, multi-index $\alpha$, ${V}_1, \ldots, {V}_p \in \Cci(\IR^3)$. Then after inserting  \eqref{eq:tk formula000} and
 using \thref{th:k nabla lemma 2} with  $X= \nabla_k$,  the stationary phase argument yields again the desired estimate. 
\end{proof}

%
%
%
\subsection{Estimates of the Remainder Terms}

Now we prove arbitrarily  fast polynomial decay for the remainder terms of sufficiently high order. We obtain results for remainder terms in \thref{th:final:phir} and scalar products of Born series terms with remainder terms in \thref{th:phir phi0}. The main tool will be the following lemma, where the basic idea is due to \textsc{Klein} and \textsc{Zemach} (cf. \cite{ZemachKlein}). It 
essentially  follows   from  a stationary phase argument together with a suitable coordinate transformation. 
\begin{lemma}
\label{th:eve lemma}
For $V \in \Cci(\IR^3)$, $n_1, n_2 \in \IN_0$, $R > 0$ such that $\supp V \subseteq B_R(0)$, there exists a constant $C$ such that for all $\kappa \geq 0$,
\begin{align}
\label{eq:eve lemma}
\sup_{\abs x, \abs{x'} \leq R} \abs{ \int \frac{e^{ \i \kappa \abs{x-y}}}{\abs{x-y}^{1-n_1}}  V(y)  \frac{ e^{  \i \kappa \abs{x'-y}}}{\abs{x'-y}^{1-n_2}} \d y } \leq  \frac{C}{1 + \kappa}.
\end{align}
\end{lemma}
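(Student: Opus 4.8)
The plan is to reduce to the regime $\kappa$ large, and then to pass to coordinates adapted to the two ``foci'' $x,x'$, in which the phase $r_1+r_2$ (where $r_1:=\abs{x-y}$, $r_2:=\abs{x'-y}$) becomes an affine function of a single variable. One integration by parts in that variable then produces the factor $\kappa^{-1}$.

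First observe that, since $V$ is compactly supported and $\abs x,\abs{x'}\le R$, the $y$-integration runs over a fixed bounded set on which $r_1,r_2$ and $\xi:=r_1+r_2$ are bounded by a constant $M=M(V,R)$; moreover the amplitude $V(y)\,r_1^{n_1-1}r_2^{n_2-1}$ is integrable in $y$ with a bound uniform in $x,x'$ (the only genuine singularities, occurring when $n_i=0$, are the locally integrable $r_i^{-1}$, and $\int_{\text{bdd}}\abs{x-y}^{-1}\abs{x'-y}^{-1}\d y$ is bounded uniformly for $x,x'$ in a compact set). Hence the left-hand side of \eqref{eq:eve lemma} is bounded by a constant, which already gives the claim for $\kappa\le 1$; it remains to gain a factor $\kappa^{-1}$ for $\kappa\ge 1$.

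For that, I would introduce confocal prolate spheroidal coordinates with foci $x,x'$: the ``elliptic'' variable $\xi=r_1+r_2\in[d,\infty)$ with $d:=\abs{x-x'}$, the ``hyperbolic'' variable $\eta=r_1-r_2\in[-d,d]$, and the azimuth $\phi\in[0,2\pi)$ around the segment $[x,x']$ (the degenerate case $d=0$ is handled by ordinary polar coordinates about $x$, which is the limit of the same computation). A standard computation gives $r_1=\tfrac12(\xi+\eta)$, $r_2=\tfrac12(\xi-\eta)$ and $\d y=\tfrac1{8d}(\xi^2-\eta^2)\,\d\xi\,\d\eta\,\d\phi$; since $\xi^2-\eta^2=4r_1r_2$, the singular weight cancels exactly, and
\[
\int \frac{e^{\i\kappa(r_1+r_2)}}{r_1^{1-n_1}r_2^{1-n_2}}V(y)\,\d y=\frac{1}{2^{n_1+n_2+1}d}\int_d^\infty e^{\i\kappa\xi}\,G(\xi)\,\d\xi,\quad G(\xi):=\int_{-d}^{d}\!\!\int_0^{2\pi}\! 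V\big(y(\xi,\eta,\phi)\big)(\xi+\eta)^{n_1}(\xi-\eta)^{n_2}\,\d\phi\,\d\eta .
\]
The decisive point is that $e^{\i\kappa\xi}$ depends on $\xi$ only. Integrating by parts once in $\xi$, the boundary term at $\xi=\infty$ vanishes because $V$ has compact support, leaving a boundary term at $\xi=d$ together with $\int_d^\infty e^{\i\kappa\xi}G'(\xi)\,\d\xi$, each with the prefactor $\kappa^{-1}$.

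It then remains to bound $\tfrac1d\big(\abs{G(d)}+\int_d^\infty\abs{G'(\xi)}\,\d\xi\big)$ uniformly in $x,x'$. The apparent $1/d$ is harmless because the $\eta$-integration is over an interval of length $2d$: one has $\abs{G(\xi)}\le 4\pi d\,\|V\|_\infty(2M)^{n_1+n_2}$, which also bounds $\abs{G(d)}=\lim_{\xi\downarrow d}\abs{G(\xi)}$. Differentiating the polynomial weight in $G'$ produces a term of the same size; the remaining term is $\int \nabla V(y)\cdot\partial_\xi y\,(\cdots)$, and since the $\xi$-scale factor of these coordinates is $\abs{\partial_\xi y}=\tfrac12\big((\xi^2-\eta^2)/(\xi^2-d^2)\big)^{1/2}\le \tfrac12\,\xi(\xi^2-d^2)^{-1/2}$, substituting $\eta=d\nu$ gives $\int_d^M\!\int_{-d}^d\abs{\partial_\xi y}\,\d\eta\,\d\xi\le d\int_d^M \xi(\xi^2-d^2)^{-1/2}\d\xi=d\sqrt{M^2-d^2}\le Md$, so $\int\abs{G'}$ is also $O(d)$, uniformly down to $d=0$. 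Collecting the estimates yields the bound $C/\kappa$ for $\kappa\ge 1$, and with the trivial estimate for $\kappa\le 1$ this is \eqref{eq:eve lemma}. The one point that requires care — and the reason a naive divergence-form integration by parts directly in $y$ does \emph{not} work (it would force one to integrate $\abs{\nabla(r_1+r_2)}^{-3}$, which diverges near the focal segment) — is exactly this uniform control of $\partial_\xi y$ near the degenerate ellipsoid $\xi=d$ and as $d\to0$; the coordinate change is precisely what integrates out the two ``flat'' directions $\eta,\phi$ first. Everything else is routine bookkeeping.
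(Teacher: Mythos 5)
Your proof is correct and takes essentially the same route as the paper: you pass to confocal prolate spheroidal coordinates, observe that the Jacobian $r_1 r_2$ exactly cancels the $r_1^{-1}r_2^{-1}$ singularities so the phase $\kappa(r_1+r_2)$ depends on the single variable $\xi$, integrate by parts once, and control the scale factor $\lvert\partial_\xi y\rvert \lesssim \xi(\xi^2-d^2)^{-1/2}$, which is integrable in $\xi$. The only difference is a normalization choice ($\eta\in[-d,d]$ instead of the paper's $\eta\in[-1,1]$), which just shifts the $1/d$ from the Jacobian into the length of the $\eta$-interval; your explicit handling of $\kappa\le 1$ and of the $d\to 0$ degeneracy is implicit in the paper's version.
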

\begin{proof}
For the proof we use Prolate Spheroidal coordinates, see \cite[appendix]{ZemachKlein} and \cite[p. 661]{MorseFeshbach}. Let $D = \frac{1}{2} \abs{x-x'}$. For
\[
\xi \in [D,\infty), \qquad \eta \in [-1,1], \qquad \phi \in [0,2\pi),
\]
we set
\[
\Phi(\xi,\eta, \phi) = \frac{1}{2}(x + x') +  \mathcal R \mm{ \sqrt{(\xi^2 - D^2)(1-\eta^2)} \cos \phi \\ \sqrt{(\xi^2 - D^2)(1-\eta^2)} \sin \phi \\ \xi \eta },
\]
where $\mathcal R$ is the rotation matrix transforming $e_3$ into $\frac{x - x'}{\abs{x - x'}}$. A straightforward computation then shows that 
\begin{align*}
& \xi = \frac{1}{2}\left( \abs{x - \Phi(\xi,\eta, \phi)}  + \abs{x' - \Phi(\xi,\eta, \phi)}\right) ,\\
& \eta = \frac{1}{2D} \left( \abs{x - \Phi(\xi,\eta, \phi)}  - \abs{x' - \Phi(\xi,\eta, \phi)}\right), \\
& \det \Phi(\xi, \eta, \phi) = (\xi + D \eta) (\xi - D \eta).
\end{align*}
Thus, by change of coordinates,
\begin{align*}
\int \frac{e^{ \i \kappa \abs{x-y}}  V(y)  e^{  \i \kappa \abs{x'-y}} }{\abs{x-y}^{1-n_1} \abs{x'-y}^{1-n_2} } \d y  &= \int e^{2\i \kappa \xi} V( \Phi(\xi, \eta, \phi)) (\xi + D \eta)^{n_1} (\xi - D \eta)^{n_2} \d (\xi,\eta,\phi).
\\ &= \int_D^{\infty} e^{2\i \kappa \xi} h(\xi)  \d \xi,
\end{align*}
where $h(\xi) := \int V( \Phi(\xi, \eta, \phi))  (\xi + D \eta)^{n_1} (\xi - D \eta)^{n_2} \d (\eta, \phi)$. Let $E := \frac{1}{2} \abs{x+x'}$. Notice that by direct computation, for $\xi \geq D + E$, 
\[
\abs{ \Phi(\xi, \eta, \phi) } \geq \xi - D - E.
\]
Thus, we get that $h(\xi) = 0$ for $\xi \geq R + D + E$. Then, by integration by parts, 
\begin{align*}
\int_D^{\infty} e^{2\i \kappa \xi} h(\xi) \d \xi &= \frac{1}{2 \i \kappa}  \int_D^{R + D + E} \partial_\xi  \left( e^{2\i \kappa \xi}\right) h(\xi) \d \xi  \\
&= \frac{1}{2 \i \kappa}  \bigg(  - h(D)  e^{2\i \kappa D} - \int_D^{R + D + E}  e^{2\i \kappa \xi} \partial_\xi h(\xi) \d \xi  \bigg).
\end{align*}
As $D,E \leq R$ are bounded, so is the first term. For the second one notice that
\begin{align}
\partial_\xi h(\xi) &= \int \sc{ \nabla V( \Phi(\xi, \eta, \phi)), \partial_\xi \Phi(\xi,\eta,\phi)}  (\xi + D \eta)^{n_1} (\xi - D \eta)^{n_2} \d (\eta, \phi) \label{eq:xih first} \\
\qquad &+ \int V( \Phi(\xi, \eta, \phi)) \partial_\xi ( (\xi + D \eta)^{n_1} (\xi - D \eta)^{n_2} ) \d (\eta, \phi). \label{eq:xih second}
\end{align}
The term \eqref{eq:xih second} is clearly bounded by a constant depending only on $R$. The term \eqref{eq:xih first} is bounded up to a constant by
\begin{align*}
\sup_{\eta,\phi} \abs{ \partial_\xi \Phi(\xi,\eta, \phi) } \leq C\left( 1 + \frac{\xi}{\sqrt{\xi^2 - D^2}} \right),
\end{align*}
for some constant $C$. This is integrable and the integral is also bounded by a constant only depending on $R$:
\[
\int_D^{R+D+E} \frac{\xi}{\sqrt{\xi^2 - D^2}} \d \xi = \sqrt{(R+D+E)^2 - D^2}. 
\qedhere
\]
\end{proof}

\begin{lemma}
\label{th:pm star}
Let $p \in \IN$, $V_1, \ldots, V_{p} \in \Cci(\IR^3)$ and  $n_1, \ldots, n_p \in \IN_0$. Then there exists a constant $C$ such that for all $k,x \in \IR^3$, continuous bounded functions $\psi$  on $\IR^3$,
\begin{align*}
\abs{ (T^{(n_1)}_{V_1,\abs k}   \cdots    T^{(n_p)}_{V_p,\abs k} \psi )(x)  } \leq  \frac{C  (1+ \sc x^{n_1-1}) \nn{\psi}_\infty}{1+ \abs k^{\lfloor \frac{p-1}{2} \rfloor }}.
\end{align*}
\end{lemma}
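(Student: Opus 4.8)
The plan is to reduce the bound to an iterated application of the Klein--Zemach lemma (\autoref{th:eve lemma}), which pairs up two consecutive kernels and produces one power of $(1+\abs k)^{-1}$. Fix $R>0$ with $\supp V_j\subseteq B_R(0)$ for all $j$ and write $\kappa:=\abs k$. In the iterated integral defining $T^{(n_1)}_{V_1,\kappa}\cdots T^{(n_p)}_{V_p,\kappa}\psi$ every integration variable is forced into $B_R(0)$ by the corresponding $V_j$; the only point that can be large is the external variable $x$, sitting in the first kernel $\abs{x-y_1}^{-(1-n_1)}$. That kernel therefore contributes the factor $1+\sc x^{n_1-1}$ and \emph{cannot} be paired with anything (since $x$ is unbounded), whereas the remaining $p-1$ kernels, all evaluated at points of $B_R(0)$, can be grouped into $\lfloor(p-1)/2\rfloor$ pairs, each yielding a factor $(1+\kappa)^{-1}$.

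First I would prove the auxiliary statement: for $W_1,\dots,W_q\in\Cci(\IR^3)$, $m_1,\dots,m_q\in\IN_0$ and $R>0$ there is a constant $C$ with
\[
\sup_{\abs y\le R}\abs{\big(T^{(m_1)}_{W_1,\kappa}\cdots T^{(m_q)}_{W_q,\kappa}\psi\big)(y)}\le\frac{C\,\nn{\psi}_\infty}{1+\kappa^{\lfloor q/2\rfloor}}
\]
for all $\kappa\ge0$, $\psi\in\Cb(\IR^3)$, by induction on $q$. For $q\in\{0,1\}$ this is elementary: when $q=1$ and $\abs y\le R$ one has $\abs{y-y'}\le 2R$ on $\supp W_1$, so the kernel is bounded for $m_1\ge1$, and for $m_1=0$ one uses $\sup_{\abs y\le R}\int_{B_R(0)}\abs{y-y'}^{-1}\d y'<\infty$, proved as in \eqref{eq:square V estimate}. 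For the step $q\ge2$, set $G:=T^{(m_3)}_{W_3,\kappa}\cdots T^{(m_q)}_{W_q,\kappa}\psi$, write the head of the chain as a double integral, use Fubini to integrate the variable shared by the first two kernels, and apply \autoref{th:eve lemma} with $V=W_1$ to get, uniformly for $\abs y,\abs{y_2}\le R$,
\[
\abs{\int\frac{e^{\i\kappa\abs{y-y_1}}}{\abs{y-y_1}^{1-m_1}}W_1(y_1)\frac{e^{\i\kappa\abs{y_1-y_2}}}{\abs{y_1-y_2}^{1-m_2}}\,\d y_1}\le\frac{C}{1+\kappa}.
\]
Combining this with the induction hypothesis $\sup_{\abs{y_2}\le R}\abs{G(y_2)}\le C\nn{\psi}_\infty(1+\kappa^{\lfloor q/2\rfloor-1})^{-1}$ and $\int\abs{W_2}<\infty$ gives the bound $C\nn{\psi}_\infty(1+\kappa)^{-1}(1+\kappa^{\lfloor q/2\rfloor-1})^{-1}$, and the elementary inequality $(1+\kappa)(1+\kappa^{j-1})\ge1+\kappa^{j}$ for $\kappa\ge0$, $j\ge1$, closes the induction.

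To finish, write $F:=T^{(n_2)}_{V_2,\abs k}\cdots T^{(n_p)}_{V_p,\abs k}\psi$, so that by \autoref{def:Tk},
\[
\big(T^{(n_1)}_{V_1,\abs k}\cdots T^{(n_p)}_{V_p,\abs k}\psi\big)(x)=\int\frac{e^{\i\abs k\abs{x-y_1}}}{\abs{x-y_1}^{1-n_1}}V_1(y_1)\,F(y_1)\,\d y_1.
\]
The auxiliary statement with $q=p-1$ gives $\abs{F(y_1)}\le C\nn{\psi}_\infty(1+\abs k^{\lfloor(p-1)/2\rfloor})^{-1}$ on $\supp V_1$. For $n_1\ge1$ we bound $\abs{x-y_1}^{n_1-1}\le(\abs x+R)^{n_1-1}\le C(1+\sc x^{n_1-1})$ there and integrate $\abs{V_1}$; for $n_1=0$ we leave the kernel inside and use $\sup_x\int_{B_R(0)}\abs{x-y_1}^{-1}\d y_1<\infty$ (again as in \eqref{eq:square V estimate}), together with $1+\sc x^{-1}\ge1$. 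In both cases we obtain the asserted bound. The only real difficulty is organizational — arranging the induction so that exactly $\lfloor(p-1)/2\rfloor$ powers of $\abs k$ emerge, which amounts to recognizing that the external kernel must be spent on the $\sc x^{n_1-1}$ growth rather than on $\abs k$-decay — and keeping all constants uniform in $x$, $k$ and $\psi$.
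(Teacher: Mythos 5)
Your proof is correct and follows essentially the same approach as the paper: isolate the first, $x$-dependent kernel to absorb the $\sc x^{n_1-1}$ growth, then pair consecutive kernels of the remaining chain and apply the Klein--Zemach lemma (\autoref{th:eve lemma}) to each pair to extract one factor of $(1+\abs k)^{-1}$ per pair. The only organizational difference is that you phrase the pairing as an induction on the tail length $q$, whereas the paper writes out the grouped product directly for odd $p$ and handles even $p$ with the remaining unpaired kernel via \autoref{th:Tk properties}.
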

\begin{proof}
First we assume that $p = 2p^*+1$. Then by Lemma \ref{iteratedformula2} 
\begin{align}
(T^{(n_1)}_{V_1,\abs k} &  \cdots    T^{(n_p)}_{V_p,\abs k} \psi )(x) \nonumber \\
&= \int  \frac{e^{\i \abs k \abs{x- y_1}}}{\abs{x - y_{1}}^{1-n_{1}}} V_{1}(y_1) \label{eq:first integral} \\
&\qquad \left\lbrace \prod_{l=1}^{p^*} \frac{e^{\i \abs k \abs{y_{2l-1} - y_{2l}}}}{\abs{y_{2l-1} - y_{2l}}^{1-n_{2l}}} V_{2l}(y_{2l})  \frac{e^{\i \abs k \abs{y_{2l} - y_{2l+1}}}}{\abs{y_{2l} - y_{2l+1}}^{1-n_{2l+1}}}  V_{2l+1}(y_{2l+1}) \right\rbrace \label{eq:next integrals} \\ &\qquad \psi( y_p) \d(y_1, y_2, y_3, \ldots, y_p). \nonumber
\end{align}
In the following let $C$ denote different constants depending only on $V_l$ and $n_l$, $l=1,\ldots,p$.
We estimate the terms in \eqref{eq:next integrals} for $l=1,\ldots, p^*$ by
\[
\abs{ \int \frac{e^{\i \abs k \abs{y_{2l-1} - y_{2l}}}}{\abs{y_{2l-1} - y_{2l}}^{1-n_{2l}}} V_{2l}(y_{2l})  \frac{e^{\i \abs k \abs{y_{2l} - y_{2l+1}}}}{\abs{y_{2l} - y_{2l+1}}^{1-n_{2l+1}}}  \d y_{2l} } \leq \frac{C}{1 + \abs k}
\]
using \thref{th:eve lemma}, the term \eqref{eq:first integral} by
\[
 \abs{ \int  \frac{e^{\i \abs k \abs{x- y_1}}}{\abs{x - y_{1}}^{1-n_{1}}} V_{1}(y_1) \d y_1 } \leq C (1+ \sc x^{n_1-1}) ,
\]
and thus we find 
\begin{align*}
&\abs{ (T^{(n_1)}_{V_1,\abs k}   \cdots    T^{(n_p)}_{V_p,\abs k} \psi)(x) } \
\\ &\qquad \leq \frac{C (1+ \sc x^{n_1-1})  }{(1+\abs k)^{p^*}} \int \left\lbrace \prod_{l=0}^{p^*-1} \abs{ V_{2l+2}(y_{2l+2}) } \right\rbrace \psi( y_p) \d(y_2, y_4, \ldots, y_p) \\
  &\qquad  \leq  \frac{ C (1+ \sc x^{n_1-1})  \nn{\psi}_\infty }{(1+\abs k)^{p^*}}.
\end{align*}
In case $p$ is even, we  estimate the first $p-1$ factors as in the odd case   
and the remaining expression we estimate using  inequality  \eqref{eq:boundonTs},
which implies  that 
there is a constant $C$ independent of $k$ such that 
$\| \ind_{B_R(0)}     T^{(n_p)}_{V_p,\abs k} \psi \|_\infty \leq C \| \psi \|_\infty$. 
\end{proof}

\begin{prop}
\label{th:final:phir}
Let $n \in \IN_0$, $m \in \{0,1\}$, $j \in \{1,2,3\}$. 
\begin{enumerate}[label = (\alph*)]
\item \label{eq:linear combination}  For all $k \not= 0$, the expression 
\begin{align*}
\partial^{m}_{k_j} \hDk^n  T_{V,\abs k}   \cdots    T_{V,\abs k} \phi(k,\cdot)
\end{align*}
can be written as linear combination of terms 
\begin{align}
\label{eq:tk r derivative}
f(k) T^{(n_1)}_{V,\abs k}   \cdots    T^{(n_p)}_{V,\abs k} \partial^{m'}_{k_j} \hDk^{n'} \phi(k,\cdot),
\end{align}
where $f$ is a bounded function on $\IR^3 \setminus \{ 0 \}$, $0 \leq n' \leq n$, $0 \leq m' \leq m$, and $n_1+ \cdots + n_p + m' + n' = m+n$. 
\item \label{eq:estimate after linear combination}
For any $p \in \IN$, there exists a constant $C$ such that we have for all $k \not= 0$, $x \in \IR^3$,
\[
\abs{ \partial^{m}_{k_j} \hDk^n   \phir^{(p)}(k,x) } \leq \frac{C  (1+\sc{x}^{n+m-1}) }{1+ \abs k^{\lfloor \frac{p-1}{2} \rfloor }}.
\]
\end{enumerate}
\end{prop}
\begin{proof} Part 
\ref{eq:linear combination} follows by  the product rule from \autoref{iteratedformula2}.  Part  \ref{eq:estimate after linear combination} follows from  \ref{eq:linear combination},  \thref{th:pm star}, \thref{th:phi derivatives bounded}, and   the fact that $V$ has compact support. 
\end{proof}

%
\begin{lemma}
\label{th:phir phi0}
Let $p,m,r \in \IN$ with $m \geq 2r+4$, $V_1, \ldots, V_p$, $W_1, \ldots, W_m \in \Cci(\IR^3)$, and $n_1, \ldots, n_p$,  $n'_1, \ldots, n'_m \in \IN_0$. Then there exists a constant $C$, $n_0 \in \IN_0$, such that for all $k$, $k'$, continuous bounded functions $\psi$, and $\chi \in \S(\IR^3)$,
\begin{align}
\nonumber
&\abs{ \sc{T^{(n_1)}_{V_1,\abs k}   \cdots    T^{(n_p)}_{V_p,\abs k} e_k ,\chi   T^{(n'_1)}_{W_1,\abs {k'}}   \cdots    T^{(n'_m)}_{W_m,\abs {k'}}\psi    }  } \\ & \qquad  \leq \frac{C \sup_{|\alpha|\leq r}  \nn{ (1+ \sc{\cdot}^{n_0}) \partial^\alpha \chi  }_1   \nn{\psi}_\infty}{(1+ \abs k^{\lfloor \frac{p-1}{2} \rfloor + r} )(1+ \abs {k'}^{\frac{m}{2} -2 -r }) }. \label{eq:k k prime bound}
\end{align}
\end{lemma}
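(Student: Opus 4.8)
The plan is to unfold both compositions of operators into a single finite-dimensional oscillatory integral and then to harvest the two decay factors by a stationary-phase bookkeeping, the basic building block being \thref{th:eve lemma}. First, using \thref{def:Tk} I would expand the two strings, so that the inner product becomes an integral over $x$ together with the intermediate variables $y_1,\dots,y_p$ of the left string and $z_1,\dots,z_m$ of the right string. It takes the form of a broken chain of Klein--Zemach kernels $e^{\i\abs k\abs{\cdot}}\abs{\cdot}^{n_i-1}$ on the $y$-side and $e^{\i\abs{k'}\abs{\cdot}}\abs{\cdot}^{n'_j-1}$ on the $z$-side, joined at $x$, carrying the weight $\chi(x)$ at the joint, the terminal oscillation $\overline{e_k(y_p)}=e^{-\i k y_p}$ at the far end of the left string, and the bounded weight $\psi(z_m)$ at the far end of the right string. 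Since every $V_i$ and every $W_j$ has compact support, each $y_i$ and each $z_j$ stays in a fixed ball and only $x$ roams over $\IR^3$, which is harmless because of the Schwartz factor $\chi$.

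For the factor $(1+\abs{k'}^{\lfloor (m-1)/2\rfloor - r})^{-1}$ I would split the right string after its $2r$-th operator and put $\psi' := T^{(n'_{2r+1})}_{W_{2r+1},\abs{k'}}\cdots T^{(n'_m)}_{W_m,\abs{k'}}\psi$; estimating this chain of $m-2r\ge 1$ operators (here the hypothesis $m\ge 2r+1$ enters) by \thref{th:pm star} gives $\abs{\psi'(w)}\le C(1+\sc{w}^{n'_{2r+1}-1})\nn{\psi}_\infty(1+\abs{k'}^{\lfloor (m-1)/2\rfloor - r})^{-1}$. The argument in \thref{th:pm star} is insensitive to replacing the bounded weight by a polynomially bounded one, at the cost of raising the $\sc{\cdot}$-exponent, and in the broken chain this weight is evaluated at the confined variable $z_{2r}$, so it contributes only a harmless constant; thus the full $\abs{k'}$-power is pulled out front. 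What is left is to bound $\sc{g_k,\chi\,T^{(n'_1)}_{W_1,\abs{k'}}\cdots T^{(n'_{2r})}_{W_{2r},\abs{k'}}\psi'}$, with $g_k = T^{(n_1)}_{V_1,\abs k}\cdots T^{(n_p)}_{V_p,\abs k}e_k$, by $C(\nn{\sc{\cdot}^{n_0}\chi}_1+\nn{\sc{\cdot}^{n_0}\chi'}_1)(1+\abs k^{\lfloor (p-1)/2\rfloor + r})^{-1}$ times the $\psi'$-bound.

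For the $\abs k$-decay, $\lfloor (p-1)/2\rfloor$ powers come exactly as in \thref{th:pm star}, by applying \thref{th:eve lemma} to $\lfloor (p-1)/2\rfloor$ disjoint consecutive pairs of $\abs k$-kernels on the left string. The remaining $r$ powers I would obtain by running, $r$ times, a ``junction step'': the outermost surviving $\abs k$-kernel and an outermost surviving $\abs{k'}$-kernel, which share a variable across the joint, are paired by the prolate-spheroidal stationary-phase computation underlying \thref{th:eve lemma}, now with the two foci at the neighbouring $y$- and $z$-variables; the resulting phase oscillates at rate $\abs k+\abs{k'}\ge \abs k$, so integrating out the joint variable against $\chi$ and then the two compactly supported weights attached to the foci gains one factor $(1+\abs k)^{-1}$, the integration by parts in the angular variable being responsible for the $\chi'$-term. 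Each junction step consumes one $W_j$-kernel, so the $2r$ reserved operators supply the $r$ required steps; collecting the accumulated polynomial-in-$x$ weights against $\chi$ and $\chi'$ then gives the stated constant with some $n_0\in\IN_0$.

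The main obstacle is precisely this last point: organizing the $r$ junction iterations so that each genuinely delivers a clean factor $(1+\abs k)^{-1}$, uniformly in all frozen variables, while tracking the growing $\sc{\cdot}$-weights and the order of derivatives landing on $\chi$. In particular one has to verify that the mixed-frequency stationary-phase estimate degenerates only on the set where the two foci lie within $O((\abs k+\abs{k'})^{-1})$ of each other --- of measure $O((\abs k+\abs{k'})^{-3})$ --- so that the ensuing integration over the two compact supports still recovers the full power $(1+\abs k)^{-1}$, and that after peeling off one junction pair the remaining configuration is of the same shape, so that the induction on $r$ closes.
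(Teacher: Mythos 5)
Your approach is genuinely different from the paper's, and it has a gap that is not merely technical. The paper works in Cartesian coordinates: it extracts the factor $e^{-\i k x}$ from $\overline{T^{(n_1)}_{V_1,\abs k}\cdots T^{(n_p)}_{V_p,\abs k}e_k(x)}$, writes $k\,e^{-\i kx}=\i\nabla_x e^{-\i kx}$, integrates by parts in the joint variable $x$, and then uses $\nabla_x=-\nabla_{x_1}$ on the first right-hand kernel to propagate the derivative term by term down the right chain until it lands on the last kernel, producing $\i\abs{k'}T_{W_m,\abs{k'}}-T^{(-1)}_{W_m,\abs{k'}}$. One pass trades one power of $\abs{k'}$ for one of $\abs k$ while staying entirely inside the product-kernel class, so \thref{th:pm star} applies term by term and the procedure can simply be repeated $r$ times. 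Your scheme — collecting $\lfloor(m-1)/2\rfloor-r$ powers of $\abs{k'}$ from a tail of length $m-2r$ via \thref{th:pm star}, then winning the extra $r$ powers of $\abs k$ by a mixed-frequency prolate-spheroidal pairing across the joint — is a different mechanism and is not what the paper does.

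The critical problem is the one you flag yourself: the junction step does not iterate. After one step the joint variable $x$ has been integrated out, leaving a bridge $F(y_1,z_1)$ which satisfies the right pointwise bound but carries no oscillatory kernel structure. The next two candidate kernels, $e^{\i\abs k\abs{y_1-y_2}}/\abs{y_1-y_2}^{\cdot}$ on the left and $e^{\i\abs{k'}\abs{z_1-z_2}}/\abs{z_1-z_2}^{\cdot}$ on the right, no longer share an integration variable, so there is no single variable in which to pass to prolate-spheroidal coordinates, and no analogue of \thref{th:eve lemma} applies. Doing all $r$ integrations by parts in $\xi$ inside one junction step is not a way out either: the $\xi$-derivatives of the pushed-forward integrand are singular at $\xi=D$, and the proof of \thref{th:eve lemma} only controls a single such derivative (it already needs the integrable factor $\xi/\sqrt{\xi^2-D^2}$); higher derivatives worsen the singularity. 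You also glide over the extra plane wave $e^{-\i kx}$ sitting in the joint integral, which is not adapted to the spheroidal frame, so even the single junction step is not a direct application of \thref{th:eve lemma} but requires a separate phase argument. For $r=1$ your power count happens to come out right — the first left kernel you spend on the junction is precisely the one that never contributed to the $\lfloor(p-1)/2\rfloor$ pairs in \thref{th:pm star} — but for $r\ge2$ the proposal is incomplete, and the fix would require a genuinely new inductive device rather than details.
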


\begin{proof}
To shorten the notation we assume that $n_1 = \cdots = n_p = n'_1 = \cdots = n'_m  = 0$. The proof works   for the other cases   with the obvious change of notation (in fact the bounds in the proof given 
below only improve). 
The bound for $r=0$ follows from Lemma 
\ref{th:pm star}. Now let us consider  $r \geq 1$. For this we show the following identity. 
Fix $j=1,2,3$. 
 We claim that for each $n \in \IN_0$ there  exist coefficients $c(\cdots )$ such that for all
 ${V}_1, \ldots, {V}_{p}$, ${W}_1, \ldots, {W}_{m} \in \Cci(\IR^3)$,  $\psi \in \Cb(\IR^3)$,
 and $\chi \in \mathcal{S}(\IR^3)$ we have 
\begin{align}
k_j^n &\sc{T_{{V}_1,\abs k}   \cdots    T_{{V}_{{p}},\abs k} e_k ,\chi   T_{{W}_1,\abs {k'}}   \cdots    T_{{W}_{{m}},\abs {k'}} {\psi} } \label{explicitformuladecay} 
\\ &= \sum_{l=1}^{m+1} \sum_{\substack{ \underline{\mu} \in  \IN_0^m   }}  \sum_{\substack{ \underline{\nu} \in \{0,1\}^m 
 }}   
\sum_{\underline{s} \in \IN_0^p  , \underline{t} \in \IN_0^m , u \in \IN_0 } \nonumber  \\
& 1_{  \nu_i = \mu_i =  0 , \forall  i <  l } \ 1_{ \mu_l + \nu_l \geq  1 }  \ 1_{ \nu_i = 1 ,  \forall i  >  l }  \
 1_{|\underline{\mu}| + |\underline{\nu}| + |\underline{s}| + |\underline{t}| + u = n } 
 \label{coeffcond}  \\
&  c(n, l,\underline{\nu},\underline{\mu},\underline{s},\underline{t},u)  \nonumber  \\
& 
\sc{ \left\{ \prod_{a=1}^p  T_{{V}_a^{[s_a]},\abs k} \right\}   e_k ,\chi^{[u]}  
 \left\{ \prod_{b=1}^{m}
 S_{W_{b}^{[t_{b}]},\abs {k'}}^{(\mu_{b},\nu_b)} \right\}  \psi } \nonumber 
\end{align}
where we used the  multi-index notation $\underline{s} = (s_1,\ldots,s_p)$,  denoted by  $f^{[n]}$   the $n$-th partial derivative in the $j$-th coordinate direction, 
and defined 
$
S^{(\mu,\nu)}_{W,\kappa}$ as  the  operator with integral kernel
$$
S^{(\mu,\nu)}_{W,\kappa}(x,y) := ( \i \kappa)^\mu  e^{\i \kappa \abs{x -y}} \partial_{x_j}^\nu \left(  \frac{1}{\abs{x-y}} \left( \frac{x_j-y_j}{|x-y|} \right)^\mu \right)W(y)    , \quad x, y \in \R^3 .
$$
Oberve that $S^{(0,0)}_{W,\kappa} = T_{W,\kappa}$. 
We prove   \eqref{explicitformuladecay}   by induction in $n$. If $n=0$, we only need to  consider $l=m+1$
and $\underline{\mu}, \underline{\nu}, \underline{s}, \underline{t}, u$ equal zero. 
Suppose \eqref{explicitformuladecay} holds for $n$. We want to show, that it then also holds for $n+1$. 
For this we  fix an   $l$ and assume that  \eqref{coeffcond} is nonzero. 
This implies in particular that $(\mu_1,\nu_1), \ldots, (\mu_{l-1},\nu_{l-1})$ all equal  $(0,0)$
and $\mu_l + \nu_l \geq  1 $.  Using  \eqref{eq:tk formula-1} we find 
\begin{align}
& k_j
\sc{ \left\{ \prod_{a=1}^p  T_{{V}_a^{[s_a]},\abs k} \right\}   e_k ,\chi^{[u]}  
 \left\{ \prod_{b=1}^{m}
 S_{W_{b}^{[t_{b}]},\abs {k'}}^{(\mu_{b},\nu_b)} \right\}  \psi } \label{PIinteginduc1} 
\\
& \int \frac{e^{-\i \abs k \abs{v_1}}}{\abs{v_1}} V_1^{[s_1]}(v_1 + x) \frac{e^{-\i\abs k \abs{v_2}}}{\abs{v_2}} V_2^{[s_2]}(v_2 + v_1 + x) \ldots  \frac{e^{-\i \abs k \abs{v_p}}}{\abs{v_p}} V_p^{[s_p]}\left(\sum_{l=1}^p v_l + x \right) \nonumber  \\
&\qquad  e^{-\i k \sum_{l=1}^{p} v_l} ( \i \nabla_{x} e^{- \i k x} ) \chi^{[u]}(x)  \frac{e^{\i \abs {k'} \abs{x-x_1}}}{\abs{x-x_1}} W_1^{[t_1]}(x_1) \frac{e^{\i \abs {k'} \abs{x_1-x_2}}}{\abs{x_1-x_2}} W_2^{[t_2]}(x_2)  \nonumber 
\\ &\qquad \ldots  \frac{e^{\i \abs {k'} \abs{x_{l-2}-x_{l-1}}}}{\abs{x_{l-2}-x_{l-1}}} W_{l-1}(x_{l-1}) 
S_{W_{l}^{[t_{l}]},\abs {k'}}^{(\mu_{l},\nu_l)}(x_{l-1},x_l) 
\left(  \left\{ \prod_{b=l+1}^m  S_{W_{b}^{[t_{b}]},\abs {k'}}^{(\mu_{b},\nu_b)}
\right\}  \psi \right)(x_l)  \nonumber  \\
& \qquad  \d (v_1, \ldots, v_p , x, x_1, \ldots, x_l) .   \nonumber 
\end{align}
We now use integration by parts with respect to $x$.  
By the product rule, we have a linear combination of several different terms. The ones with derivatives of the potentials $V_1^{(s_1)}, \ldots, V_p^{(s_p)}$ and $\chi^{(u)} $ will simply increase exactly
one of the indices $s_1,\ldots,s_p,u$ by one. 
 For the remaining  term containing $x$ we use 
\[
\nabla_x  \frac{e^{\i \abs {k'} \abs{x-x_1}}}{\abs{x-x_1}} = \nabla_{x_1} \frac{e^{\i \abs {k'} \abs{x-x_1}}}{\abs{x-x_1}}.
\]
Then we use again  integration  by parts and obtain  a term 
involving $\nabla W_1$, which can be treated  as  before, and a term involving 
\[
\nabla_{x_1}  \frac{e^{\i \abs {k'} \abs{x_1-x_2}}}{\abs{x_1-x_2}}.
\]
We  repeat this procedure  until we arrive at the right-hand side at the term 
\[
\nabla_{x_{l-2}}  \frac{e^{\i \abs {k'} \abs{x_{l-2}-x_{l-1}}}}{\abs{x_{l-2}-x_{l-1}}}.
\]
If $\nu_l=1$, we use 
\[
\partial_{x_j}  \frac{e^{\i \abs {k'} \abs{x -y}}}{\abs{x  - y }} W(y) = 
 S_W^{(1,0)}(x,y) + S_W^{(0,1)}(x,y)  , 
\]
which follows by the product rule. 
If $\nu_l = 0$, then we do once more integration by parts and use the relation, which is 
straightforward to verify, 
\[
 \partial_{x_j}  S_W^{(\mu_l,0)}(x,y) = S_W^{(\mu_l+1,0)}(x,y) + S_W^{(\mu_l,1)}(x,y) . 
\]
We conclude that there exist coefficients $d(\cdots)$ such that 
\begin{align}
 & \text{l.h.s. of } \eqref{PIinteginduc1}  \nonumber   \\
& =  \sum_{\tilde{l}=l-1}^{l}   \sum_{\substack{ \tilde{\underline{\mu}} \in  \IN_0^m   }}  \sum_{\substack{ \tilde{\underline{\nu}} \in \{0,1\}^m 
 }} 
\sum_{\tilde{\underline{s}} \in \IN_0^p  , \tilde{\underline{t}} \in \IN_0^m , \tilde{u} \in \IN_0 } \nonumber  \\
& 1_{  \tilde{\nu}_i = \tilde{\nu}_i  = 0  , \forall  i  < \tilde{ l} } \ 1_{ \tilde{\mu}_{\tilde{l}} + \tilde{\nu}_{\tilde{l}} \geq  1 }  \  1_{ \tilde{\nu}_i = 1 ,  \forall i > \tilde{ l}   }  \
 1_{|\tilde{\underline{\mu}}| + |\tilde{\underline{\nu}}| + |\tilde{\underline{s}}| + |\tilde{\underline{t}}| +
\tilde{ u}  = n+1 }    \nonumber  \\
&  d(n, l,\underline{\nu},\underline{\mu},\underline{s},\underline{t},u;  \tilde{l},\tilde{\underline{\nu}},\tilde{\underline{\mu}},\tilde{\underline{s}},\tilde{\underline{t}},\tilde{u}  )  \nonumber  \\
& 
\sc{ \left\{ \prod_{a=1}^p  T_{{V}_a^{[\tilde{s_a}]},\abs k} \right\}   e_k ,\chi^{[\tilde{u}]}  
 \left\{ \prod_{b=1}^{m}
 S_{W_{b}^{[\tilde{t_{b}}]},\abs {k'}}^{(\tilde{\mu_{b}},\tilde{\nu_b})} \right\}  \psi } . \nonumber
\end{align} 
Using  the above relation for every summand, we find that  \eqref{explicitformuladecay} 
holds for $n+1$.  

Now using \eqref{explicitformuladecay}, we  show 
\eqref{eq:k k prime bound}. For all $W$ and  $\mu$,  $\nu \in \{0,1\}$ there exists 
a constant $C$ such that  
\begin{equation} 
\label{eq:PIinductyy}  
\| S_{W,\kappa}^{(\mu,\nu)}  \psi \|_\infty   \leq C \kappa^\mu  \| \psi \|_\infty  
\end{equation} 
for all $\kappa \geq 0$ and $\psi \in \Cb(\IR^3)$.  This follows from \autoref{th:Tk properties} (a) 
by observing that   for each $\mu \in \IN_0$ there exists as constant $C_\mu$ such that 
for all $y \neq x$ we have $$\left| \partial_{x_j} \left(  \frac{1}{\abs{x-y}} \left( \frac{x_j-y_j}{|x-y|} \right)^\mu \right) \right| \leq C_\mu |x-y|^{-2} . $$
Now using \autoref{th:pm star} together with  \eqref{eq:PIinductyy}, we find  for fixed $l$  a constant $C$, such that  whenever   \eqref{coeffcond} equals one  we have 
\begin{align*}
& \left| 
\sc{ \left\{ \prod_{a=1}^p  T_{{V}_a^{[s_a]},\abs k} \right\}   e_k ,\chi^{[u]}  
 \left\{ \prod_{b=1}^{m}
 S_{W_{b}^{[t_{b}]},\abs {k'}}^{(\mu_{b},\nu_b)} \right\}  \psi } \right|  \\
& \leq  \frac{C   \sup_{|\alpha| \leq n} \| ( 1+ \sc x^{n_0} ) \partial^\alpha \chi  \|_1  \nn{\psi}_\infty}{(1+ \abs k^{\lfloor \frac{p-1}{2} \rfloor} )(  1+ \abs{ k'}^{\lfloor \frac{l-2}{2} \rfloor })    } 
 |k'|^{|\underline{\mu}|}  .
\end{align*}
Now since  \eqref{coeffcond} equal one implies  $|\underline{\nu}| \geq m-l$ and 
$
|\underline{\mu}|  + |\underline{\nu}| \leq n 
$
we find  that 
$$
|\underline{\mu}|  \leq n - m  +  l  .
$$
Thus we find  for all $l=1,\ldots,m+1$  
$$
\left\lfloor \frac{l-2}{2} \right\rfloor - |\underline{\mu}|  
\geq     \frac{m}{2} - 2 - n  .
$$
Since $j = 1,2,3$ was arbitrary, the  claimed inequality now follows.
\end{proof}

\begin{prop}
\label{th:phir phi0 final}
Let $s \in \IN_0$, $n \in \IN$,  $\eta \in \S(\IR^3)$, $X \in\{  \Id, \nabla_k, \nabla_{k'} \}$,   $Y \in \{ k \nabla_k + k' \nabla_k', \eta(k) k \nabla_k \}$. Then  there exists a  constant $m_0 \in \IN$, such for all $m \geq m_0$ and $p \in \IN$, there are  $n_1, n_2 \in \IN$, $C$, such that for all  $\chi \in \S(\IR^3)$, $k,k' \not= 0$ 
\begin{align*}
&\abs{ X  Y^s \sc{ \phiz^{(p)}(k,\cdot), \chi \phir^{(m)}(k',\cdot) } } \leq \frac{C  \sup_{\abs \alpha \leq n_1}  \nn{\sc{\cdot}^{n_2} \partial^\alpha \chi}_1   }{(1+\abs{k}^n)(1+\abs{k'}^n)} . 
\end{align*}
\end{prop}
\begin{proof}
By applying \autoref{th:k nabla lemma 2} and \eqref{eq:tk r derivative} for the left and right part of the inner product, respectively,  we can write
\[
X Y^s \sc{ \phiz^{(p)}(k,\cdot), \chi \phir^{(m)}(k',\cdot) }
\]
for all given $X$ and $s$ as a linear combination of expressions
\[
k^\alpha (k')^\beta  f(k,k') \sc{T^{(n_1)}_{V_1,\abs k}   \cdots    T^{(n_p)}_{V_p,\abs k} e_k ,\chi   T^{(n'_1)}_{W_1,\abs {k'}}   \cdots    T^{(n'_m)}_{W_m,\abs {k'}} \phi(k',\cdot)    } ,
\]
where $\alpha,\beta$ are multi-indices with $\abs \alpha, \abs \beta \leq s$, $f$ is a bounded function on $\IR^3 \times \IR^3$,  $V_1, \ldots, V_p$, $W_1, \ldots, W_m \in \Cci(\IR^3)$, and $n_1, \ldots, n_p$,  $n'_1, \ldots, n'_m \in \IN_0$. Now we can estimate these expressions with \autoref{th:phir phi0}. 
\end{proof}


\subsection{Commutator with the Interaction}
\label{subsec:commutator with interaction}
This part provides the key for the proof of \thref{th:G commutator conditions}.  In the following we omit for the moment the regularity function $\kappa$  of the coupling and work with multiplication operators $H(\omega,\Sigma)$, $(\omega,\Sigma) \in \IR_+ \times \IS^2$. Throughout this section 
we shall always assume 
 \begin{equation}\label{eq:defofH}
H(\omega,\Sigma)(x) = \chi(x)  \tilde H(\omega,\Sigma)(x)
\end{equation}   where 
$\chi \in \S(\IR^3)$ and 
$\tilde{H}$ is a function on $I \times \IS^2 \times \IR^3$, where $I = (0,\infty)$ or $I = [0,\infty)$, such that  for some $s \in \IN_0$ the following holds.
\begin{itemize}
\item[($\text{J}_s$)] \label{Js}
 For all  $n \in \{0,\ldots, s\}$
and $\alpha \in \IN_0^3$   the partial derivatives  
 $ \partial_x^\alpha  \partial^n_\omega \tilde{H}$ exist  and are continuous on $I \times \IS^2  \times \IR^3$,
and 
 there exists a polynomial $P$ and $M \in \IN_0$ such that
\[
\abs{  \partial_x^\alpha  \partial^n_\omega \tilde{H}(\omega,\Sigma)(x) } \leq P(\omega) \langle x \rangle^M , \qquad (\omega,\Sigma,x) \in I \times \IS^2 \times  \IR^3. 
\]
\end{itemize} 
To show that a commutator $[T,\Apse]$, for a bounded  operator $T$ on $L^2(\IR^3)$,  is bounded, we  shall make of use the following decomposition on $\ell^2(N) \oplus L^2(\IR^3)$,
\[
\Vcd [T,\Apse] \Vcd^* = \mm{  0 &  \Vd T \Vc^* \ee \Adil \ee \\ -\ee \Adil \ee \Vc T \Vd^* & [\Vc T \Vc^*, \ee \Adil \ee]  },
\]
where $\Vcd$ is the unitary operator defined in \eqref{eq:Vcd} and $N \in \IN$ is the number of linearly 
independent eigenfunctions of $H_\p$. We treat the off-diagonal terms in \thref{th:psi_d bounded} and the term on the diagonal in \thref{th:psi_c bounded}.
\begin{prop}
\label{th:psi_d bounded} Suppose $\tilde{H}$ satisfies \refJ{0}. Then 
for all $n \in \IN_0$,  $j \in \{1,2,3\}$, $(\omega,\Sigma)$, the operators
\begin{enumerate}[label=(\arabic*)]
\item \label{pdsic_op1}
 $  \Adil ^n \Vc \Hos \Pd $,
\item \label{pdsic_op2}
 $\q_j \Adil ^n \Vc \Hos \Pd $,
\item \label{pdsic_op3}
 $ \po_j  \Adil ^n \Vc \Hos \Pd$,
\end{enumerate}
are well-defined,  their norms  can be estimated  uniformly in $\Sigma$ by a polynomial in $\omega$, 
and they are  continuous  in $(\omega,\Sigma)$.  Furthermore, if  $\tilde{H}$ satisfies \refJ{s},
then  \ref{pdsic_op1}--\ref{pdsic_op3} are $s$ times continuously differentiable with respect to $\omega$ in the operator norm  topology 
and  
\begin{align*} 
 \partial_\omega^s  \Adil ^n \Vc \Hos \Pd   & =   \Adil ^n \Vc  \partial_\omega^s \Hos \Pd  , \\
  \partial_\omega^s \q_j \Adil ^n \Vc \Hos \Pd   & = \omega \q_j \Adil ^n \Vc \partial_\omega^s  \Hos \Pd , \\
    \partial_\omega^s  \po_j  \Adil ^n \Vc \Hos \Pd  & =   \po_j  \Adil ^n \Vc  \partial_\omega^s  \Hos \Pd .
\end{align*} 
\end{prop}
\begin{proof}  Let $m \in \{0,1\}$ and $n \in \IN_0$. 
Choose $N$ big enough so that we find by means of \thref{th:final:phir} a constant $C$  and an $n_0  \in \IN_0$ such
that  
\begin{align}
\label{eq:d2:pre}
\abs{ \int  \partial^{m}_{k_j} \hDk^n     \phir^{(N)}(k,x) f(x) \psi_\d(x) \d x } \leq \frac{C \| \langle \cdot \rangle^{n_0} f \| \nn{\psi_\d}}{1+ \abs k^6}, 
\end{align}
for all  $f \in \mathcal{S}(\IR^3)$ ,  $\psi_\d \in \ran \Pd$ and $k \not= 0$.
Expanding $\phi(k,x)$ using \autoref{prop:expansionofscatter}  we obtain for $\psi_\d \in \ran \Pd$, $k \not= 0$,
\begin{align} \label{eq:forpointwisecont1}
\Vc \Hos \psi_\d (k) &=  (2\pi)^{-3/2}  \int \overline{\phi(k,x)} H(\omega,\Sigma)(x) \psi_\d(x)  \d x \\
&= T_0(\omega,\Sigma, k) + T_{\mathsf{R}}(\omega,\Sigma,k) , \nonumber 
\end{align} 
where 
\begin{align} 
 T_0(\omega,\Sigma; k)   & :=  (2\pi)^{-3/2}   \sum_{l=0}^{N-1} \int \overline{\phiz^{(l)}(k,x)} H(\omega,\Sigma)(x) \psi_\d(x) \d x \label{eq:d1} , \\
T_{\mathsf{R}}(\omega,\Sigma; k)  & := 
 (2\pi)^{-3/2} \int \overline{\phir^{(N)}(k,x)}H(\omega,\Sigma)(x) \psi_\d(x) \d x  \label{eq:d2} . 
\end{align}
The terms which appear if we apply $\Adil^n$, $\q_j$, $\po_j$, $n \in \IN_0$,  $j \in \{1,2,3\}$  to 
\eqref{eq:d1}  can be estimated by means of   \thref{th:final:phiz2} with the result 
 that for some constant $C$ and $n_1, n_2 \in \IN_0$ 
\begin{align} \label{eq:ineqdomconv1}
& |  \Adil^n  T_0(\omega,\Sigma;  k) |  , \  | \q_j \Adil^n  T_0(\omega,\Sigma ;  k)   | , \    | \po_j  \Adil^n T_0(\omega,\Sigma ;  k) | \  \\
& \quad  \leq     \frac{ C  \sup_{\abs \alpha \leq n_1} \nn{\sc{\cdot}^{n_2} \partial_x^\alpha ( H(\omega,\Sigma) \psi_\d)}_1  }{1 + \abs k^2}   \nonumber 
\end{align}
for all  $(\omega, \Sigma)$,   $k \neq 0$, and  $\psi_\d \in \ran \Pd$. 
 The terms  coming from \eqref{eq:d2} can be estimated using  \eqref{eq:d2:pre} such that for some constant $C$ and $n_1 \in \IN_0$,
\begin{align} \label{eq:ineqdomconv2}
& |  \Adil^n  T_{\mathsf{R}}(\omega,\Sigma ;  k) |  , \  | \q_j   \Adil^n  T_{\mathsf{R}}(\omega,\Sigma ;  k)   | , \    | \po_j  \Adil^n  T_{\mathsf{R}}(\omega,\Sigma ; k) | \ \\
& \quad   \leq       \frac{C \| \sc{\cdot}^{n_1} H(\omega,\Sigma)  \| \nn{\psi_\d}  }{1+ \abs k^2} 
\nonumber  
\end{align}
for all  $(\omega, \Sigma)$,   $k \neq 0$, and  $\psi_\d \in \ran \Pd$. 
Now observe that by elliptic regularity (cf. \cite[IX.6]{rs2}) we have $\psi_\d \in C^\infty(\IR^3)$ and $\partial^\alpha \psi_\d \in L^2(\IR^3)$  for all $\alpha \in \IN_0^3$. Thus, as  $\tilde{H}$ satisfies \refJ{0}, it follows that for fixed $\psi_\d$  and $H$  there exists a polynomial $P$ and $n_1, n_2 \in \IN_0$ such that for all $(\omega,\Sigma)$ 
\begin{align*}
   \sup_{\abs \alpha \leq n_1} \nn{\sc{\cdot}^{n_2} \partial_x^\alpha ( H(\omega,\Sigma) \psi_\d)}_1   , \ 
 \| \langle \cdot \rangle^r H(\omega,\Sigma)  \| 
\leq P(\omega) ,
\end{align*} 
using Cauchy-Scharz and standard estimates involving Schwartz functions. 
Collecting esimates and  using that the discrete spectrum is finite we see  that the operators \ref{pdsic_op1}, \ref{pdsic_op2} and \ref{pdsic_op3} are well-defined and their norms can be estimated by a polynomial in $\omega$.  Continuity in $(\omega,\Sigma)$ with respect to the operator norm topology 
now follows from  linearity, the bounds \eqref{eq:ineqdomconv1} and \eqref{eq:ineqdomconv2}, and the fact that   $\tilde{H}$ satisfies \refJ{0} (and again  standard estimates involving  Schwartz functions). 
If $s=1$, an analogous argument implies differentiability in $\omega$ 
with the derivative given by replacing $H$ by $\partial_\omega H$. Now the claim for arbitrary $s$ 
follows by induction. 
\end{proof}
\begin{lemma}
\label{th:integral kernels bounded} Suppose $\tilde{H}$ satisfies \refJ{0}.  Then 
for all $(\omega,\Sigma)$ and $k,k' \in \IR^3$ let
\begin{align}
\label{eq:K defn}
K_{\omega,\Sigma}[H](k,k') := \int \overline{\phi(k,x)} \Hos(x) \phi(k',x) \d x .
\end{align}
Then for all  $Z \in \{ k \nabla_k + k' \nabla_k', \eta_1(k)  k \nabla_k + \eta_2(k) + 
\eta_1(k') k' \nabla_{k'} + \eta_2(k')   \}$, where $\eta_1, \eta_2  \in \mathcal{S}(\IR^3)$,  $j \in \{1,2,3\}$, and $s \in \IN_0$ 
there exists a polynomial $P$ such that  the absolute values of
\begin{enumerate}[label=(\arabic*)]
\item \label{kn_first} $Z^s  K_{\omega,\Sigma}[H](k,k')$,
\item \label{kn_lastg} $\partial_{k_j} Z^s  K_{\omega,\Sigma}[H](k,k')$, $\partial_{k'_j} Z^s  K_{\omega,\Sigma}[H](k,k')$,
\item \label{kn_last}  $ (k_j - k_j') Z^s  K_{\omega,\Sigma}[H](k,k')$,
\end{enumerate}

are  bounded from above by
\begin{align}
\label{eq:integral kernel estimate} 
P(\omega) \left( \frac{1}{(1+\abs k^2)(1+ \abs {k'}^2)} + \frac{1}{1+\abs{k-k'}^4} \right) 
\end{align}
 for all   $(\omega,\Sigma)$,  $k,k' \not=0$ .  Furthermore the following is satisfied.
\begin{enumerate}[label=(\alph*)]
\item \label{partaofintegralkerest}  For fixed $k,k' \not= 0$, the functions $\IR_+ \times \IS^2 \rightarrow \IC$ mapping $(\omega,\Sigma)$ to the expressions \ref{kn_first}--\ref{kn_last}, are continuous.
If  $\tilde{H}$ satisfies \refJ{s}, these functions  are  $s$ times continuously differentiable  in $\omega$ and  the $s$-th partial
derivative  with respect to $\omega$ is obtained  by replacing $ H$     by   $  \partial_\omega^s H$. 
\item  \label{partbofintegralkerest}   The  integral kernels  \ref{kn_first}--\ref{kn_last} define bounded operators in $L^2(\R^3)$ whose  norms are uniformly bounded in $\Sigma$ by a polynomial
in $\omega$. With respect to the operator norm toplogy the following holds. These  operators depend continuously on $(\omega,\Sigma)$. If  $\tilde{H}$ satisfies \refJ{s}, these operators   are  $s$ times continuously differentiable  in $\omega$ and  the $s$-th partial
derivative  with respect to $\omega$ is obtained  by replacing $ H$     by   $  \partial_\omega^s H$. 
\end{enumerate}
\end{lemma}
\begin{proof} 
 Let $X \in \{ \Id , \partial_{k_j}, \partial_{k'_j}, k_j - k_j'\}$.  Assume first that
 \[Y \in \{ k \nabla_k + k' \nabla_k', \eta_1(k)  k \nabla_k  \}.\]
 Fix $s \in \IN_0$. 
Using \autoref{prop:expansionofscatter}  we write 
\begin{align} \label{eq:forpointwisecont2}
K_{\omega,\Sigma}[H](k,k') &= \int \overline{\phi(k,x)} \Hos(x) \phi(k',x) \d x \\ 
&= \sum_{l,l'=0}^{N-1} \int  \overline{ \phiz^{(l')}(k,x) } \Hos(x) \phiz^{(l)}(k',x) \d x \label{eq:t1} \\
&\qquad + \sum_{l=0}^{N-1} \int \overline{ \phir^{(N)}(k,x)} \Hos(x) \phiz^{(l)}(k',x) \d x \label{eq:t2:1} \\ &\qquad +\sum_{l=0}^{N-1}  \int \overline{ \phiz^{(l)}(k,x)} \Hos(x)  \phir^{(N)}(k',x)  \d x   \label{eq:t2:2}  \\
&\qquad + \int \overline{ \phir^{(N)}(k,x) } \Hos(x) \phir^{(N)}(k',x) \d x.  \label{eq:t3} 
\end{align}
 By  \thref{th:phir phi0 final} we can choose    $N$ large enough such that   there exist constants $ n_1, n_2 \in \IN$, $C$, such that for all  $f \in \S(\IR^3)$ and $k,k' \not= 0$, and $p=1,\ldots,N$,
\begin{align}   \label{eq:intkerest0} 
&\abs{ X  Y^s \sc{ \phiz^{(p)}(k,\cdot), f  \phir^{(N)}(k',\cdot) } } \leq \frac{C  \sup_{\abs \alpha \leq n_1}  \nn{\sc{\cdot}^{n_2} \partial^\alpha f}_1   }{(1+\abs{k}^2)(1+\abs{k'}^2)} ,  
\end{align}
which  implies that for all $(\omega,\Sigma)$ 
\begin{align}  \label{eq:intkerest1} 
| X Y^s  \eqref{eq:t2:1} | , \ | X Y^s  \eqref{eq:t2:2} | 
\leq  \frac{C  \sup_{\abs \alpha \leq n_1}  \nn{\sc{\cdot}^{n_2} \partial^\alpha H(\omega,\Sigma)}_1  }{(1+|k|^2)(1+|k'|^2)}  , 
\end{align} 
Moreover, by   \thref{th:final:phiz}   there are constants $n_1, n_2 \in \IN$, $C$,  such that for all $k,k' \not= 0$,
\begin{align}  \label{eq:intkerest2} 
| X Y^s \eqref{eq:t1} | \leq    \frac{ C \sup_{\abs \alpha \leq n_1} \nn{\sc{\cdot}^{n_2} \partial_x^\alpha H(\omega,\Sigma)}_1}{1+\abs{k-k'}^4}  . 
\end{align} 
Finally, using   \thref{th:final:phir} we see, by possibly making $N$ 
larger, that there exist constants $ n_1 \in \IN$ and  $C$ such that 
\begin{align}  \label{eq:intkerest3} 
  | X Y^s  \eqref{eq:t3} | \
\leq  \frac{C \| \langle \cdot  \rangle^{n_1}  H(\omega,\Sigma) \|_1 }{(1+|k|^2)(1+|k'|^2)}  . 
\end{align} 
On the other hand since $\tilde{H}$ satisfies \refJ{0} and $H = \chi \tilde{H}$,  there exists for each $n_1 \in \IN_0$ and $\alpha \in \IN_0^3$ a polynomial $P$ such that 
\begin{equation} \label{eqestonkernel55} 
\nn{  \langle \cdot \rangle^{n_1} \partial_x^{\alpha} \Hos ) }_1 \leq P(\omega), \qquad (\omega,\Sigma) \in \IR_+ \times \IS^2 . 
\end{equation} 
It follows as a consequence of   \eqref{eq:intkerest0}--\eqref{eqestonkernel55} that 
\begin{align} \label{basickernelest1}
| X Y^s K_{\omega,\Sigma}[H](k,k')  | \leq \text{ r.h.s. of 
 \eqref{eq:integral kernel estimate} }  . 
\end{align} 
This shows  \ref{kn_first}--\ref{kn_last} in case $Z = k \nabla_k + k' \nabla_{k'} $. We note that 
 $Y =  \eta_1(k)  k \nabla_k $ will be used below. 

Let us now assume \begin{align} \label{eq:assumpforZ}
 Z = \eta_1(k)  k \nabla_k + \eta_2(k) + 
\eta_1(k') k' \nabla_{k'} + \eta_2(k') .
\end{align} 
  To estimate derivatives acting on both sides 
of \eqref{eq:forpointwisecont2} 
we  use \autoref{th:phi derivatives bounded}, with the result  that for all $r,r' \in \{0,1\}$ and $s,s' \in \IN_0$  there exist  $n_1 \in \IN_0$ and  $C$ such that 
for all nonzero $k,k'$ 
\begin{align} \label{eq:forpointwisecont2b}
|\partial_{k_j}^r ( k \nabla_k)^s \partial_{k'_j}^{r'} (k' \nabla_{k'})^{s'}K_{\omega,\Sigma}[H](k,k') | 
  \leq  C \| \langle \cdot \rangle^{n_1} H(\omega,\Sigma) \|_1 \langle  k \rangle^{s} \langle k' \rangle^{s'}  . 
\end{align}
To estimate the norm occurring on the right-hand side we shall use that for  $\tilde{H}$ satisfying \refJ{0} and $n_1 \in \IN_0$ there exists  a  polynomial $P$ such that 
\begin{align} \label{eq:forpointwisecont2c}
 \| \langle \cdot \rangle^{n_1} H(\omega,\Sigma) \|_1  \leq P(\omega)   . 
\end{align}
Let $W(k) = \eta_1(k)  k  \nabla_k  + \eta_2(k) $.   
Then by the binomial theorem 
\begin{align} 
 Z^n K_{\omega,\Sigma}[H](k,k')   & = \sum_{l=0}^n   \binom n l   W(k)^{l} W(k')^{n-l} K_{\omega,\Sigma}[H](k,k')  .  \label{eq:derofbound222}
\end{align} 
We see, after  commuting Schwartz functions to the left, that  for each $l \geq 1$ there exist functions $\eta^{(l,s)},  \tilde{\eta}^{(l,s)} \in \mathcal{S}(\IR^3)$, $0 \leq s \leq l$, such that 
\begin{align} \label{eq:commidenity}  
 W(k)^{l} = \sum_{s=0}^l \eta^{(l,s)} (\eta(k) k \nabla_k)^s  = \sum_{s=0}^l \tilde{\eta}^{(l,s)} (k \nabla_k)^s .
\end{align} 
Let us first consider the terms  in  \eqref{eq:derofbound222} for $l=0$ and $l=n$.  Using the first equality in \eqref{eq:commidenity} 
and  \eqref{basickernelest1}   (as well as its adjoint)  for $Y =  \eta_1(k)  k \nabla_k $,  we find 
\begin{align} \label{basickernelest12}
| X W^n(k)  K_{\omega,\Sigma}[H](k,k')  | , \  | X W^n(k')  K_{\omega,\Sigma}[H](k,k')  |  \leq \text{ r.h.s. of  \eqref{eq:integral kernel estimate} }  .
\end{align} 
The terms in \eqref{eq:derofbound222} for $l \in \{1,\ldots,n-1\}$ are estimated using  \eqref{eq:forpointwisecont2b},  the second equality
in   \eqref{eq:commidenity}   controlling the growth in $k$ and $k'$, and finally \eqref{eq:forpointwisecont2c}. Thus  we find  with  \eqref{basickernelest12}
\begin{align*} 
| X  Z^n K_{\omega,\Sigma}[H](k,k')  |   \leq \text{ r.h.s. of  \eqref{eq:integral kernel estimate} }  .
\end{align*} 
This shows  \ref{kn_first}--\ref{kn_last} in  the case  \eqref{eq:assumpforZ}. It remains to prove \ref{partaofintegralkerest} and \ref{partbofintegralkerest}.
\begin{enumerate}[label=(\alph*),wide,  labelindent=0pt]
\item
The continuity property in $(\omega,\Sigma)$  for fixed nonzero $k, k'$ can be seen from the  integral  \eqref{eq:forpointwisecont2},
using dominated convergence with the property that $\tilde{H}$ satisfies \refJ{0}. To this end, we note that the integrand
contains a Schwartz function  and that  the derivatives of the scattering functions are bounded by polynomials,
as shown in  \thref{th:phi derivatives bounded}.
If $s=1$, we conclude analogously differentiability in $\omega$, and furthermore, that the derivative 
is given by replacing $H$ with $\partial_\omega H$. 
For arbitrary  $s$  the claim then   follows by induction.
\item We first note, that operators with integral kernels satisfying  a bound  \eqref{eq:integral kernel estimate} are bounded by $P(\omega)$.
To this end, observe that an integral operator $T$ with integral kernel
\[
t  \: \quad (k,k') \mapsto \frac{1}{(1+\abs k^2)(1+ \abs {k'}^2)} 
\]
is Hilbert-Schmidt and its norm is estimated by  $\| T \| \leq \| t \|_2$, and that  an operator $S$  with integral kernel 
\[
 \quad (k,k') \mapsto \frac{1}{1+\abs{k-k'}^4} =: s(k-k') 
\]
is bounded by Young's inequality for convolutions: $\| S \psi \|_2  = \nn{s * \psi}_2 \leq \nn{s}_1 \nn{\psi}_2$, $s \in L^1(\IR^3)$, $\psi \in L^2(\IR^3)$. 
In view of this, continuity in $(\omega,\Sigma)$ with respect to the operator norm topology 
now follows from  linearity, the bounds   \eqref{eq:intkerest1}--\eqref{eq:intkerest3}
as well as    \eqref{eq:forpointwisecont2c}, and the fact that   $\tilde{H}$ satisfies \refJ{0} (and a standard estimate 
involving   Schwartz functions). If $s=1$, we conclude analogously differentiability in $\omega$, and   that the derivative 
is given by replacing $H$ with $\partial_\omega H$. 
For arbitrary  $s$  the claim then  follows by induction. \qedhere  
\end{enumerate}
\end{proof}
\begin{rem}  We note that for the proof of the main theorem we  will only use  
 Part \ref{partbofintegralkerest} of \autoref{th:integral kernels bounded}  and 
Part \ref{partaofintegralkerest} will not be needed.   We  nevertheless included   Part \ref{partaofintegralkerest} in \autoref{th:integral kernels bounded}, since in principle  we could work 
with a weaker topology.
\end{rem}

In the following lemma we estimate the coupling functions first in  ``scattering space''.

\begin{lemma}
\label{th:psi_c bounded} Suppose $\tilde{H}$ satisfies \refJ{0}. 
Then for all $\epsilon \geq 0$, $n \in \IN_0$,  $j \in \{1,2,3\}$,  $(\omega,\Sigma)$
\begin{enumerate}[label=(\arabic*)]
\item 
\label{item:psi_c Apn}
 $ \ad^{(n)}_{\ee \Adil \ee }(  \Vc \Hos   \Vc^* ) $,
\item 
\label{item:psi_c qj}
 $ \ad_{\q_j} \left(\ad^{(n)}_{\ee \Adil \ee }(  \Vc  \Hos  \Vc^* ) \right) $,
\item 
\label{item:psi_c pj}
 $ \ad_{\po_j} \left(\ad^{(n)}_{ \ee \Adil \ee }(  \Vc \Hos \Vc^*  )\right) $,
\item
\label{item:psci_c xj}
 $  \po_j   \ad^{(n)}_{\Adil }(  \Vc  \Hos  \Vc^*  ) $,
\end{enumerate}
are well-defined bounded operators in  $L^2(\R^3)$  and we can estimate their norms uniformly in $\Sigma$ by a polynomial in $\omega$.  With respect to the operator norm toplogy the following holds. 
\ref{item:psi_c Apn}--\ref{item:psci_c xj}
are continuous  $\L(L^2(\IR^3))$-valued functions of  $(\omega,\Sigma)$. Moreover if  $\tilde{H}$ satisfies \refJ{s}, then the functions 
\ref{item:psi_c Apn}--\ref{item:psci_c xj}  are $s$ times continuously differentiable with respect to $\omega$ and     the $s$-th partial
derivative of   \ref{item:psi_c Apn}--\ref{item:psci_c xj}  with respect to $\omega$ is obtained  by replacing $ H$    by     $  \partial_\omega^s H$. 
\end{lemma}

\begin{proof}
 From  \autoref{th:scattering functions properties} we see that  for all $(\omega,\Sigma)$,
\[
\Vc \Hos   \Vc^* \psi(k) = (2\pi)^{-3} \int K_{\omega,\Sigma}[H](k,k') \psi(k') \d k',
\]
with  $K_{\omega,\Sigma}$ defined  in \eqref{eq:K defn}.  Thus the lemma 
follows  directly from 
\autoref{th:integral kernels bounded}, observing that 
$\eta_\epsilon A_D \eta_\epsilon =
\frac{\i \eta_\epsilon^2(k)}{2}  k \nabla_k + \frac{ \i \eta_\epsilon(k)}{4} ( 2 k \nabla_{k} \eta_\epsilon(k) + 3 \eta_\epsilon(k) )$.   
\end{proof}

Let us now prove the  central proposition of this section, which can be thought of as a preliminary  version of \thref{th:G commutator conditions} but  without the cutoff  function $\kappa$. 
 For the proof  
we need the following auxiliary lemma.  
\begin{lemma} \label{th:ad n Hos transform}  Let $\HH_0$ and $\HH_1$ be Hilbert spaces.  Let $B$ be a bounded operator in $\HH_0$ and let $V \: \HH_0 \to \HH_1$ a partial isometry with  $\ran V = \HH_1$.
Let $P$ be the orthogonal projection onto the kernel of $V$. Suppose  $A$ is a self-adjoint operator in $\HH_1$  
such that   
for all $j=1,\ldots,n$   the set  $\ran  V (V^* A V)^{j-1} B P$ is contained in the domain of $A$ 
and the operators   $\ad_A^{(j)}(V B V^*)$ and  $ (V^* A V)^j    B P$ are  bounded. 
Then $\ad_{V^* A V}^{(n)}(B)$ is a bounded operator     on $\HH_0$   and  
\begin{align*}
\ad_{V^* A V}^{(n)}(B) =  V^* \ad_{ A }^{(n)}(V B V^*) V +   ( \i V^* A V )^n B  P +  P B ( -\i V^* A V )^n   .
\end{align*} 
\end{lemma}
\begin{proof} This follows by induction in $n$ and a straightforward calculation. 
\end{proof}

\begin{prop}
\label{th:G commutator conditions 2} Suppose $\tilde{H}$ satisfies \refJ{s}.  Then 
for all $\epsilon \geq 0$, $n \in \IN_0$, $j \in \{1,2,3\}$,  $(\omega,\Sigma)$,  $r=0,\ldots,s$ the operators
\begin{enumerate}[label=(\arabic*)]
\item 
\label{it:Aps}
 $ \ \partial_\omega^r \ad^{(n)}_{\Apse}(  \Hos  )$,
\item 
\label{it:Aps:ad qj}
 $ \  \partial_\omega^r  \ad_{\Vc^* \q_j \Vc}\left( \ad^{(n)}_{\Apse}(  \Hos  ) \right)$,
\item  
\label{it:Aps:ad pj} 
 $ \  \partial_\omega^r  \ad_{\Vc^* \po_j \Vc}\left( \ad^{(n)}_{\Apse}(  \Hos  ) \right)$,
\item
\label{it:Aps:pj}
 $ \  \partial_\omega^r \Vc^* \po_j \Vc   \ad^{(n)}_{\Aps}(  \Hos  ) $ and  $   \partial_\omega^r \ad^{(n)}_{\Aps}(  \Hos  )  \Vc^* \po_j \Vc   $,
\end{enumerate} 
where the derivative $\partial_\omega$ is understood with respect to the operator norm topology, 
are well-defined, bounded, and we can estimate their norms uniformly in $\Sigma$  by a polynomial in $\omega$. The operators \ref{it:Aps}--\ref{it:Aps:pj}
depend continuously on $(\omega,\Sigma)$  with respect to the operator norm topology.
\end{prop}
\begin{proof}
Follows directly from \autoref{th:psi_d bounded}, \autoref{th:psi_c bounded}  and
an application of 
\autoref{th:ad n Hos transform}, with $V = V_\c$, $P = P_{\mathrm{ disc}}$,
$A =  \Adil$, and $B = H(\omega,\Sigma)$. 
\end{proof}

\begin{proof}[Proof of \thref{th:G commutator conditions}] To show the proposition  we will use   \thref{th:G commutator conditions 2} and \autoref{lemma:gluing_integrable}.
First we consider  the  case where  \ref{I3:first}  of   \ref{assumption:I3} holds.
Let $F, \tilde{F}  \: \IR_+ \times \IS^2 \rightarrow \Lb(\H_\p)$, where  
  $F(\omega,\Sigma) =  \kappa(\omega) \tilde F(\omega,\Sigma)$ and 
$\tilde{F} $ is one of the functions
\begin{equation}
\begin{aligned}
\label{eq:adn operators}
&\ad^{(n)}_{\Apse}( \chi \tilde{G}(\cdot) ), ~\ad_{\Vc^* \q_j \Vc}\left(\ad^{(n)}_{\Apse}( \chi \tilde{G}(\cdot) )\right),~ \ad_{\Vc^* \po_j \Vc}\left(\ad^{(n)}_{\Apse}( \chi \tilde{ G}(\cdot))\right), 
 \\  &\Vc^* \po_j \Vc    \ad^{(n)}_{\Aps }( \chi \tilde{G}(\cdot) )  , 
\end{aligned}
\end{equation}
for  $j \in \{1,2,3\}$. Since  \ref{assumption:I1} implies that $\tilde{G}$ satisfies \refJ{3} 
we find that  the $ \Lb(\H_\p)$-valued functions 
\eqref{eq:adn operators} are   well-defined by \thref{th:G commutator conditions 2} as 
well as their first three partial   derivatives with respect to  $\omega \in \IR_+$. From   Leibniz' rule we find  for $m \leq 3$ 
\begin{align} \label{eq:leibniz4p2}
\partial_\omega^m F(\omega,\Sigma) = \sum_{l=0}^m \binom{m}{l} \partial_\omega^{l} \kappa(\omega)   \partial_\omega^{m-l} \tilde F(\omega,\Sigma)  . 
\end{align}
By  \thref{th:G commutator conditions 2}  there exists a 
 polynomial $P$ such that for all $l = 0, \ldots, m$, $(\omega,\Sigma)$,
\begin{align}
\label{eq:ft P}
\nn{   \partial_\omega^{m-l} \tilde F(\omega,\Sigma) } \leq P(\omega).
\end{align}
 Now Condition \ref{gluing:uv condition} of  \autoref{lemma:gluing_integrable} holds for $F$  by  \eqref{eq:ft P},  \eqref{eq:leibniz4p2}  and  \ref{assumption:I2}.  Condition \ref{gluing:ir condition}  of 
 \autoref{lemma:gluing_integrable} is seen to  hold  by  \eqref{eq:ft P},  \eqref{eq:leibniz4p2}  and   \ref{I3:first}  of   \ref{assumption:I3}.
Thus, by    \autoref{lemma:gluing_integrable}   $(u,\Sigma) \mapsto\partial^m_u \tau_\beta(F)(u,\Sigma)$ belongs to $L^2(\IR \times \IS^2, \Lb(H_\p))$ for all $0 \leq m \leq 3$, and moreover 
\eqref{eq:mainineqbound} holds.  Hence we have shown  \thref{th:G commutator conditions}
in    case  \ref{I3:first}  of   \ref{assumption:I3} holds. 

Let us now assume the case  where  \ref{I3:second}   of \ref{assumption:I3} holds.  To this end, 
let $F_0, \tilde{F}_0  \:  [0,\infty) \allowbreak  \times \IS^2 \rightarrow \Lb(\H_\p)$, where  
  $F_0(\omega,\Sigma) =  \kappa_0(\omega) \tilde F_0(\omega,\Sigma)$ and 
$\tilde{F}_0 $ is one of the functions
\begin{equation}
\begin{aligned}
\label{eq:adn operators2}
&\ad^{(n)}_{\Apse}( \chi \tilde{G}_0(\cdot) ), ~\ad_{\Vc^* \q_j \Vc}\left(\ad^{(n)}_{\Apse}( \chi \tilde{ G}_0(\cdot) )\right),~ \ad_{\Vc^* \po_j \Vc}\left(\ad^{(n)}_{\Apse}( \chi \tilde{G}_0(\cdot))\right), 
 \\  &
 \Vc^* \po_j \Vc    \ad^{(n)}_{\Aps }(\chi \tilde{ G}_0 )  +   \ad^{(n)}_{\Aps }( \chi \tilde{G}_0 )  \Vc^* \po_j \Vc  
\end{aligned}
\end{equation}
for  $j \in \{1,2,3\}$. The verification of  Assumption \ref{gluing:uv condition}   of  \autoref{lemma:gluing_integrable} for $F_0$ 
is analogous to the first case. 
Now  \ref{I3:second}   of \ref{assumption:I3} implies 
that $\tilde{G}_0$ satisfies \refJ{s} where $s = \max\{0,3-J\}$   
we find that  the $ \Lb(\H_\p)$-valued functions 
\eqref{eq:adn operators} are   well-defined and continuous  by \thref{th:G commutator conditions 2} as 
well as their first $s$  partial   derivatives with respect to  $\omega \in [0,\infty)$. 
By assumption \ref{I3:second}   of \ref{assumption:I3} it is straightforward to verify  that $F_0$ satisfies the
 Assumption 
\ref{alternative ir}  of  \autoref{lemma:gluing_integrable}, 
 noting 
 that  the  adjoints  are obtained by  replacing  $\kappa_0 \chi \tilde{G}_0$   by $\overline{ (\kappa_0 \chi \tilde{G}_0)}$. 
 Thus \thref{th:G commutator conditions}
now follows  by \autoref{lemma:gluing_integrable}, observing that 
for    \ref{item:pj additional}  we use the identity 
\begin{align*} 
&  \Vc^* \po_j \Vc    \ad^{(n)}_{\Aps }( \tau_\beta(G )) \\
& \quad = \frac{1}{2} \left( \Vc^* \po_j \Vc    \ad^{(n)}_{\Aps }( \tau_\beta(G ) ) +   \ad^{(n)}_{\Aps }( \tau_\beta(G )) \Vc^* \po_j \Vc    - \i \ad_{\Vc^* \po_j \Vc}(\ad^{(n)}_{\Aps}( \tau_\beta( G) )) \right) .
\end{align*} 

Finally, note that the proposition still holds true if we replace $G$ by $G^*$ since the conditions \ref{assumption:I1}--\ref{assumption:I3} obviously follow for $G^*$.
 \qedhere 
\end{proof}

\section{Proof of Positivity and of the Main Theorem}
\label{sec:positivity}
In this section the main estimates and the positivity proof of the commutator are discussed. First we introduce the two terms $A_0$ and $\CQ{Q}$ which we add to $C_1$ as already mentioned in the overview of the proof. Then we show how the main theorem is proven given that we know that the sum of all three terms is positive (\thref{th:positive}). Subsequently, in \Autoref{subsec:error} we show how we estimate these three terms separately and which error terms occur. With that we conclude by proving \thref{th:positive}.

In this section we  assume that  \ref{ass:h1},  \ref{ass:h2},  \ref{assumption:I1}--\ref{assumption:I3} 
hold, i.e., the assumptions  of \autoref{th:main} are satisfied.

%
%
\subsection{Putting Things together, Proof of the Main Theorem}
\label{subsec:main proof}
Remember that we have on $\Ds$, 
\begin{equation} \label{eq:calccomm} 
C_1 = \Vc^* \q^2 \Vc  \otimes \Id_\p \otimes \Id_\f +  \Id_\p \otimes  \Vc^* \q^2 \Vc    \otimes \Id_\f + \Nfh + \lambda W_1 , 
\end{equation} 
where $W_1$ was the commutator with the interaction, cf. \eqref{eq:cn_formula 2}. Obviously, $C_1$ is strictly positive on the orthogonal complement of the vacuum subspace for $\lambda = 0$ and its first two terms are positive on $(\ran (\Pd \otimes \Pd))^\perp \otimes \FF$. 

On the space $\ran \Pi$ we use the Fermi Golden Rule and introduce the corresponding conjugate operator $A_0$ in order to obtain a positive expression in \thref{th:fgr} as a commutator with $L_\lambda$.  Such an operator was  introduced  for zero temperature systems in \cite{a0firsttime} and later adapted to the positive temperature setting in \cite{merkli_positive}.
It is a bounded self-adjoint operator on $\H$, given by
\begin{equation} \label{eq:defofAzero} 
A_0 := \i \lambda (\Pi W R_\varepsilon^2 \oPi - \oPi R_\varepsilon^2 W \Pi),
\end{equation} 
where  $R_\varepsilon^2 := (L_0^2 + \varepsilon^2)^{-1}$ and $\varepsilon > 0$. One can show that $\ran A_0 \subseteq \Def(L_\lambda)$, since $\ran A_0 \subseteq \H_\p \otimes \H_\p \otimes \FF_\fin$ (cf. \autoref{th:a0 welldefined}). Furthermore, 
\muun
{
\label{eq:L A0}
\i [L_\lambda, A_0] = - \lambda [L_\lambda, \Pi W R_\varepsilon^2 \oPi - \oPi R_\varepsilon^2 W \Pi]
}
is bounded and extends to a self-adjoint operator as well, and we have \[\Pi \i[  L_\lambda, A_0] \Pi > 0\]for suitable $\varepsilon > 0$, see \thref{th:fgr}.

To obtain positivity on the remaining space $(\ker L_\p)^\perp \otimes \ran P_\Omega$ we introduce the bounded self-adjoint operator 
\begin{align}
\label{eq:CQ}
\CQ{Q} := Q \otimes P_\Omega    +   \frac{\lambda}{2}    W    ( L_\p^{-1}  Q \otimes  P_\Omega )   + \frac{\lambda}{2}    ( W ( L_\p^{-1}  Q \otimes P_\Omega)   )^*
\end{align}
on $\H$, where $L_\p^{-1}$ is to be understood in the sense of functional calculus as an unbounded operator, and
\begin{align}
\label{eq:Q defn}
Q := L_\p^2 \ind_{[-1,1]}(L_\p) + \ind_{(-\infty,-1) \cup (1,\infty)}(L_\p).
\end{align}
Notice that by construction $\ran Q \subseteq \Def(L_\p^{-1})$, $L_\p^{-1} Q$ is bounded and self-adjoint, so the definition of $\CQ{Q}$ makes sense. Furthermore, the first summand $ Q \otimes P_\Omega $  is indeed positive on $(\ker L_\p)^\perp \otimes \ran P_\Omega$. 

The goal is to show that the sum of the three operators \eqref{eq:calccomm}, \eqref{eq:L A0} and \eqref{eq:CQ} is  positive and has zero expectation with any element of the kernel of $L_\lambda$. 
To this end we define for $\psi \in \Def(\qce) \cap \Def(L_\lambda)$ and some $\theta > 0$, 
\begin{equation} \label{defoffinalcommutator} 
\qtot(\psi) := \qce(\psi) +  \theta \sc{\psi,\i [L_\lambda, A_0] \psi} + \sc{\psi,\CQ{Q} \psi},
\end{equation} 
where $\qce$ denoted the form corresponding to $C_1$. 
By the virial theorem for $C_1$ and by construction of the other two terms, this form is actually non-positive for any $\psi \in \ker L_\lambda$. This is the content of the following proposition. 
\begin{prop}
\label{th:virial}
For arbitrary $\lambda$, $\varepsilon$, $\theta$, and $\psi \in \ker L_\lambda$ we have $\psi \in \Def(\qce)$ and $\qtot(\psi) \leq 0$.
\end{prop}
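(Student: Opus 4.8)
The plan is to read off the first half of the statement from the concrete virial theorem and to show that the remaining two summands of $\qtot(\psi)$ vanish identically on $\ker L_\lambda$, so that $\qtot(\psi)=\qce(\psi)$ regardless of the choice of $\lambda$, $\epsilon$ and the coefficient $\theta$ entering $\qtot$. Indeed, $\psi\in\Def(\qce)$ together with $\qce(\psi)\le 0$ is exactly \thref{th:result virial} applied to the eigenvector $\psi\in\Def(L_\lambda)$ of $L_\lambda$ at the eigenvalue $0$. So everything reduces to the two identities
\[
\sc{\psi,\i[L_\lambda,A_0]\psi}=0,\qquad \sc{\psi,\CQ{Q}\psi}=0,\qquad \psi\in\ker L_\lambda .
\]

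For the first, recall (cf. \thref{th:a0 welldefined}) that $A_0$ is a bounded operator with $\ran A_0\subseteq\H_\p\otimes\H_\p\otimes\FF_\fin\subseteq\Def(L_\lambda)$ and that $\i[L_\lambda,A_0]$ extends to a bounded operator. For $\psi\in\ker L_\lambda\subseteq\Def(L_\lambda)$ one then has $\i[L_\lambda,A_0]\psi=\i L_\lambda A_0\psi-\i A_0 L_\lambda\psi=\i L_\lambda A_0\psi$, since $L_\lambda\psi=0$, and therefore $\sc{\psi,\i[L_\lambda,A_0]\psi}=\i\sc{L_\lambda\psi,A_0\psi}=0$ by self-adjointness of $L_\lambda$. (This is merely the statement that $A_0$ enters as a commutator with $L_\lambda$.)

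For the second, set $B:=L_\p^{-1}Q\otimes P_\Omega$, which is bounded and self-adjoint since $L_\p^{-1}Q$ is. Two elementary facts drive the argument: $B$ commutes with $L_0$, because $L_\p^{-1}Q$ is a bounded Borel function of $L_\p$ and $\dG(\mult u)P_\Omega=P_\Omega\dG(\mult u)=0$; and $L_0 B=Q\otimes P_\Omega$, because $L_\p\cdot L_\p^{-1}Q=Q$ ($Q$ vanishing at the origin) and again $\dG(\mult u)P_\Omega=0$. Moreover $\ran B$ lies in the vacuum sector $\H_\p\otimes\H_\p\otimes\IC\Omega$, with particle part in $\Def(L_\p)$, hence in $\Def(L_\lambda)$ by the same reasoning as for $\ran A_0$, and $L_\lambda$ acts on these vectors as $L_0+\lambda W$ with $W=\Phi(I)$; in particular $WB$ and $BW$ are bounded since $W$ raises the boson number boundedly. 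Hence the bounded operator defined in \eqref{eq:CQ} can be rewritten, using $L_0 B=B L_0$, as
\[
\CQ{Q}=Q\otimes P_\Omega+\tfrac{\lambda}{2}(WB+BW)=L_0 B+\tfrac{\lambda}{2}(WB+BW)=\tfrac12(L_\lambda B+B L_\lambda) .
\]
Since $\psi$ and $B\psi$ both lie in $\Def(L_\lambda)$, this gives $\sc{\psi,\CQ{Q}\psi}=\tfrac12\sc{L_\lambda\psi,B\psi}+\tfrac12\sc{\psi,B L_\lambda\psi}=0$. Combining the three contributions yields $\qtot(\psi)=\qce(\psi)\le 0$.

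The step I expect to be the only real (and still mild) obstacle is the domain bookkeeping used twice above: verifying $\ran A_0,\ran B\subseteq\Def(L_\lambda)$ and that $L_\lambda$ acts on these vectors as $L_0+\lambda\Phi(I)$. This rests on \thref{th:a0 welldefined} and on the observation that these bounded operators produce states with only finitely many bosons — for $B$, with none at all — on which the interaction $\Phi(I)$ adds a single boson boundedly, so that a graph-norm approximation by elements of the core $\D$ is available. Once that is granted the two vanishing identities are purely algebraic, and they simply encode the design of the correction terms: $A_0$ is a commutator with $L_\lambda$, and $T=\CQ{Q}-Q\otimes P_\Omega$ was engineered precisely so that $\CQ{Q}$ becomes the symmetrized product $\tfrac12(L_\lambda B+B L_\lambda)$.
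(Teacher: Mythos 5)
Your proof is correct and follows essentially the same route as the paper's. The first half is the same application of \thref{th:result virial}; the $A_0$ term vanishes for the reason you give (the paper leaves this as ``clear''); and for the $\CQ{Q}$ term you simply package the paper's computation $\sc{(L_\p^{-1}Q\otimes P_\Omega)L_\lambda\psi,\psi}=\sc{\psi,(Q\otimes P_\Omega)\psi+\lambda W(L_\p^{-1}Q\otimes P_\Omega)\psi}$ into the cleaner identity $\CQ{Q}=\tfrac12(L_\lambda B+BL_\lambda)$, which is algebraically the same observation.
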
 
\begin{proof}
Let $\psi \in \ker L_\lambda$. By \thref{th:result virial} we know that $\psi \in \Def(\qce)$, and
\[
\qce(\psi) \leq 0.
\]
It is clear that  $\sc{\psi,\i [L_\lambda, A_0] \psi} = 0$. Furthermore, by construction, the last term vanishes:
\begin{align*}
0 &= \sc{ (L_\p^{-1} Q \otimes P_\Omega)  L_\lambda \psi, \psi} \\
&= \sc{ \psi, L_\lambda  (L_\p^{-1} Q \otimes P_\Omega)  \psi}\\
&= \sc{ \psi, (L_\p \otimes \Id_\f + \lambda W) (L_\p^{-1} Q \otimes P_\Omega)  \psi} \\
&= \sc{ \psi,  (Q \otimes P_\Omega) \psi  + \lambda W  (L_\p^{-1} Q \otimes P_\Omega )  \psi }.
\end{align*}
Then $\sc{\psi,\CQ{Q} \psi} = 0$ follows by adding the complex conjugate term. Thus,
\[
\qtot(\psi) = \qce(\psi) \leq 0.  \qedhere
\]
\end{proof}
We prove in the next subsection the following proposition which states that $\qtot$ is in fact positive. Remember that $\gb(\varepsilon)$ is the constant appearing in \ref{fgrc}.
\begin{prop}
\label{th:positive}
Let $\beta_0 > 0$ and $\varepsilon > 0$. Then there exists a constant $C > 0$ such that for all $\beta \geq \beta_0$ and $0 < \abs \lambda <  C \min\{ 1 , \gb^2(\varepsilon)\}$, we have $\qtot > 0$. That is, $\qtot \geq 0$ and $\qtot(\psi) = 0$ for some $\psi \in \Def(\qce) \cap \Def(L_\lambda)$ implies $\psi = 0$. 
\end{prop}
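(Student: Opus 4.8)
The goal is to show $\qtot\ge 0$ and that $\qtot(\psi)=0$ forces $\psi=0$. The plan is to decompose $\H$ orthogonally according to \eqref{eq:ranpi}, refining the third summand via \eqref{eq:Lp not zero}, and to exhibit on each of the resulting blocks a strictly positive contribution to $\qtot$ which, for $\lambda$ small enough, dominates all error and off‑diagonal terms. Write accordingly $\psi=\psi^\perp+\psi_\Pi+\psi_{\mathrm{cc}}+\psi_{\mathrm{cd}}+\psi_{\mathrm{dc}}+\psi_{\mathrm{dd}}$, where $\psi^\perp\in\ran\Poh^\perp$, $\psi_\Pi\in\ran\Pi$, and $\psi_{\mathrm{cc}},\psi_{\mathrm{cd}},\psi_{\mathrm{dc}},\psi_{\mathrm{dd}}$ lie in the ranges of $\Pc\otimes\Pc\otimes P_\Omega$, $\Pc\otimes\Pd\otimes P_\Omega$, $\Pd\otimes\Pc\otimes P_\Omega$ and $\ind_{\IR\setminus\{0\}}(L_\p)(\Pd\otimes\Pd)\otimes P_\Omega$, respectively.

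On the diagonal blocks one has: on $\ran\Poh^\perp$, $\Nfh\ge\Poh^\perp$ together with \eqref{eq:calccomm} and the relative $\Nfh^{1/2}$‑boundedness of $W_1$ give $\sc{\psi^\perp,C_1\psi^\perp}\ge(1-c\abs\lambda)\sc{\psi^\perp,\Nfh\psi^\perp}$; on $\ran\Pi=\ran(\ind_{\{0\}}(L_\p)\otimes P_\Omega)$ the $(\Vc^*\q^2\Vc)$‑terms and $\Nfh$ of $C_1$ kill $\psi_\Pi$, the number‑shifting field operators occurring in $W_1$ and in the interaction part of $\CQ{Q}$ have vanishing vacuum expectation, and $Q$, $L_\p^{-1}Q$ annihilate $\ker L_\p$, so only the Fermi Golden Rule term survives and \thref{th:fgr} gives $\theta\sc{\psi_\Pi,\i[L_\lambda,A_0]\psi_\Pi}\ge c\,\theta\lambda^2\gb\nn{\psi_\Pi}^2$; on $\ran(\Pc\otimes\Pc\otimes P_\Omega)$ the operator $\Vc^*\q^2\Vc\otimes\Id_\p + \Id_\p\otimes\Vc^*\q^2\Vc$ coming from $C_1$ is a strictly positive form, dominating by Hardy's inequality (\cite[X.2]{rs2}) one quarter of the corresponding $\po^{-2}$ operator; on $\ran(\Pc\otimes\Pd\otimes P_\Omega)$ and $\ran(\Pd\otimes\Pc\otimes P_\Omega)$ the single surviving $\Vc^*\q^2\Vc$ factor is likewise strictly positive; and on the $\Pd\otimes\Pd$‑block $L_\p$ has modulus at least $\delta_0:=\min\{\abs{E-E'}:E\ne E'\text{ in }\sigma_\d(H_\p)\}>0$, so $Q\otimes P_\Omega\ge\min\{1,\delta_0^2\}$ there. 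On all four $P_\Omega$‑blocks the diagonal expectations of $\lambda W_1$, of the interaction part of $\CQ{Q}$, and of $\theta\,\i[L_\lambda,A_0]$ vanish, again by the vacuum/$\ker L_\p$ structure.

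It remains to absorb the error and off‑diagonal terms, which originate only from $\lambda W_1$, the interaction part of $\CQ{Q}$, and the off‑diagonal of $\theta\,\i[L_\lambda,A_0]$. On the $\Pc\otimes\Pc$‑, $\Pc\otimes\Pd$‑ and $\Pd\otimes\Pc$‑blocks the commutators of $W_1$ with $\Vc^*\q_j\Vc$ and $\Vc^*\po_j\Vc$ are $\Nfh^{1/2}$‑bounded with $L^2$‑norms controlled by \thref{th:G commutator conditions}, whose item (4) carries the $\beta$‑uniform bound $C(1+\beta^{-1/2})\le C(1+\beta_0^{-1/2})$; hence these errors are dominated by $\abs\lambda$ times the $\po^{-2}$ operators on the particle factors, and $\q^2-c\abs\lambda\,\po^{-2}>0$ for $\abs\lambda$ small absorbs them. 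The off‑diagonal terms joining $\ran\Poh^\perp$ to the $P_\Omega$‑blocks carry a factor $\nn{\Nfh^{1/2}\psi^\perp}$ and are disposed of by Cauchy--Schwarz and Young's inequality, absorbing one part into $(1-c\abs\lambda)\sc{\psi^\perp,\Nfh\psi^\perp}$ and the other into the strictly positive $\q^2$‑ or $Q\otimes P_\Omega$‑forms. The genuinely delicate couplings are those to $\ran\Pi$: there the available positivity is merely $c\,\theta\lambda^2\gb\nn{\psi_\Pi}^2$, whereas the interaction couples $\ran\Pi$ to its complement at order $\abs\lambda$; choosing the weight $\theta$ and the approximation parameter $\epsilon$ of \ref{fgrc}/\thref{th:fgr} appropriately and invoking the quadratic smallness $\abs\lambda<C\gb^2$ makes the Young residues falling on $\ran\Pi$ strictly smaller than this quantity.

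Collecting everything, $\qtot(\psi)$ is bounded below by a sum of strictly positive quadratic forms in the six components of $\psi$; hence $\qtot\ge 0$, and $\qtot(\psi)=0$ forces every component to vanish, i.e.\ $\psi=0$. The main obstacle is precisely this last balancing: on $\ran\Pi$ the only positive quantity is the second‑order Fermi Golden Rule expression, of size $\sim\lambda^2\gb$, so beating the $O(\abs\lambda)$ couplings of $\ran\Pi$ to the surrounding subspaces requires simultaneously the quantitative form of \thref{th:fgr}, a careful choice of $\theta$ and $\epsilon$, and the hypothesis $\abs\lambda<C\gb^2$ — all while keeping the $\Nfh^{1/2}$‑ and $\po^{-2}$‑estimates of \thref{th:G commutator conditions} uniform in $\beta\ge\beta_0$.
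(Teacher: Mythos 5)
Your approach is essentially the paper's: decompose along \eqref{eq:ranpi} and \eqref{eq:Lp not zero}, obtain strict positivity blockwise (Fermi Golden Rule on $\ran\Pi$, $\Vc^*\q^2\Vc$ together with Hardy's inequality on the scattering pieces, a spectral gap for $Q\otimes P_\Omega$ on the discrete piece, $\Nfh\geq\PP{\Poh}$ off the vacuum), and absorb the off‑diagonals by Young. The paper organizes this through the separate lower bounds of \thref{th:total estimate steps} rather than a literal six‑block expansion, but that is only a bookkeeping difference.

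The genuine gap is that you flag ``a careful choice of $\theta$ and $\epsilon$'' as the crux and then do not make that choice. This is exactly where the hypothesis $\abs\lambda<C\gb^2$ has to emerge, and it does not follow from hand‑waving. The constraints on $\theta$ pull in opposite directions: on $\ran\Pi$ the positive FGR contribution is $\sim\theta\lambda^2\gb/\epsilon$, and after Young with an $O(1)$ parameter the residues dumped on $\ran\Pi$ (from $\lambda W_1$, from $C_1$'s $W_{1,\p}$ error, and from $\theta\,\oPi\i[L_\lambda,A_0]\Pi$) are $\sim(1+\theta)\lambda^2$, so one needs $\theta\gb\gtrsim 1$; at the same time the $\theta\i[L_\lambda,A_0]$ bound pushes an error $\sim\theta\abs\lambda$ onto $\PP{\Poh}$ (see \thref{th:total estimate steps}(b)), which must stay below the fixed constant $c_2$ in the $C_1$ bound, forcing $\theta\abs\lambda\ll1$; and the Hardy step needs $\q^2-c\lambda^2(1+\theta)\sc{\po}^{-2}>0$, forcing $\lambda^2\theta\to 0$. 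The paper's choice $\theta=\abs\lambda^{-1/2}$ threads all three: positivity on $\ran\Pi$ becomes $\abs\lambda^{3/2}\gb\gtrsim\lambda^2$, i.e.\ $\abs\lambda<C\gb^2$; $\theta\abs\lambda=\abs\lambda^{1/2}\to0$; and $\lambda^2(1+\theta)\sim\abs\lambda^{3/2}\to0$. Without exhibiting such a $\theta$ and verifying all three constraints simultaneously, the assertion that the Young residues on $\ran\Pi$ can be made smaller than the FGR term is precisely the thing to be proved, not a remark. A minor additional imprecision: after Young the residues landing on $\ran\Pi$ and on the particle factors are $O(\lambda^2)$, not $O(\abs\lambda)$ as you write in two places; this matters because it is what lets $\theta\lambda^2\gb$ with $\theta\gb$ merely $O(1)$ beat them.
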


Let us  now give the  proof of the main theorem.  

\begin{proof}[Proof of  \thref{th:main}.] The theorem follows from  \autoref{th:positive} together with \autoref{th:virial}.
\end{proof} 

\subsection{Error Estimates}
\label{subsec:error}

In the following proposition we prove separate estimates from below for the three operators \eqref{eq:calccomm}, \eqref{eq:L A0} and \eqref{eq:CQ}.  We use the short-hand notation 
\[
\Poh := \Id_\p \otimes \Id_\p \otimes P_\Omega. 
\]
\begin{prop}
\label{th:total estimate steps}
The following holds.
\begin{enumerate}[label=(\alph*)]
\item
\label{it:C1 estimate}
There exist constants $c_1$, $c_2 > 0$ such that, for all $\lambda \in \IR$, we have in the sense of quadratic forms on $\Def(\qce)$,
\begin{align}
\label{eq:C1 estimate}
C_1 &\geq   \big[\Vc^* \q^2 \Vc  \otimes \Id_\p  + \Id_\p \otimes   \Vc^* \q^2 \Vc   \\ &- c_1 (1+ \beta^{-1}) \lambda^2 \big( \Vc^* \sc{\po}^{-2} \Vc \otimes \Id_\p +  \Id_\p  \otimes  \Vc^* \sc{\po}^{-2} \Vc + \PP{(\Pc \otimes \Pc)} \big)  \big]    \nonumber
\\ &\qquad  \otimes \Id_\f  + c_2  \PP{\Poh}. \nonumber
\end{align}
\item
For all $\varepsilon >0$ there exist constants $c_1, c_2, c_3 > 0$ (depending on $\varepsilon$) such that for $\abs \lambda < 1$,
\muu
{
\i [L_\lambda, A_0] &\geq (1- c_1 \abs \lambda) 2  \lambda^2 \Pi W R_\varepsilon^2  W \Pi  - c_2 (1+\beta^{-1})  \abs \lambda \PP{\Poh}  \\  &\qquad -  c_3 (1+\beta^{-1}) \lambda^2    \ind_{L_\p \not= 0} \big( \Vc^* \sc{\po}^{-2} \Vc     \otimes \Id_\p +  \Id_\p  \otimes \Vc^* \sc{\po}^{-2} \Vc \\ &\qquad + \PP{ (\Pc \otimes \Pc)}   \big) \ind_{L_\p \not= 0}  \otimes P_\Omega. 
}

\item
There exists a constant $c_1 > 0$ such that, for all $\lambda \in \IR$,
\muu
{
\CQ{Q} \geq (1-c_1 \abs \lambda (1+\beta^{-1}) ) Q \otimes P_\Omega - \abs \lambda \PP{\Poh}. 
}
\end{enumerate}
\end{prop}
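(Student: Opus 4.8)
The plan is to prove the three lower bounds separately, each by the same recipe: isolate the unperturbed non-negative part, write the $\lambda$-dependent remainder as a field operator (or a product of two) with $\L(\H_\p\otimes\H_\p)$-valued coefficient, and absorb it using the standard relative bounds of $a(\cdot),a^*(\cdot),\Phi(\cdot)$ with respect to $\Nfh$, the estimates of \thref{th:G commutator conditions}, \thref{th:psi_d bounded}, \thref{th:psi_c bounded}, and the elementary inequality $ab\le\tfrac\delta2 a^2+\tfrac1{2\delta}b^2$. The one point where $\beta$-uniformity will be at stake is the bound $\nn{\partial_u^m\Vc^*\po_j\Vc\,\ad^{(n)}_{\Aps}(\tau_\beta(G))}_{L^2(\IR\times\IS^2,\L(\H_\p))}\le C(1+\beta^{-1/2})$ of \thref{th:G commutator conditions}(4), together with $\nn{I}_{L^2(\IR\times\IS^2,\L(\H_\p\otimes\H_\p))}^2\le C(1+\beta^{-1})$, which follows from \ref{assumption:I2}, \ref{assumption:I3} and the explicit form of $\tau_\beta$ (the weight $u\sqrt{1+\rho_\beta(u)}$ is $O(\beta^{-1/2})$ as $u\downarrow0$).

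\emph{Parts (a) and (c).} For (a), by \eqref{eq:calccomm} it suffices to bound $\pm\lambda W_1$ from below, $W_1=\ad^{(1)}_{\As}(\Phi(I))$ being a sum of field operators. I would insert $\Id_{\H_\p\otimes\H_\p}=(\Pc+\Pd)\otimes(\Pc+\Pd)$ on both sides. On the block between two copies of $\ran\Pc$ I would extract a factor $\Vc^*\sc{\po}^{-1}\Vc$ from the coefficient on the $\ran\Pc$-side; by \thref{th:G commutator conditions}(4) (which covers the required $n,m\in\{0,1\}$) the residual coefficient is in $L^2$ with norm $\le C(1+\beta^{-1/2})$, so via the $\Nfh^{1/2}$-bound this block is, as a form, below $C(1+\beta^{-1/2})$ times $\big(\Vc^*\sc{\po}^{-2}\Vc\otimes\Id_\p+\Id_\p\otimes\Vc^*\sc{\po}^{-2}\Vc\big)^{1/2}\otimes\Id_\f$ combined with $\Nfh^{1/2}$; each block touching $\ran\Pd$ is a bona fide $\Nfh^{1/2}$-bounded field operator supported on $\ran\PP{(\Pc\otimes\Pc)}$ by \thref{th:psi_d bounded}. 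A Schwarz split with small $\delta$ turns these into $\tfrac\delta2\Nfh$ plus $\tfrac{C}{2\delta}(1+\beta^{-1})\lambda^2$ times the bracketed particle operator and $\tfrac{C}{2\delta}\lambda^2\PP{(\Pc\otimes\Pc)}\otimes\Id_\f$, and choosing $\delta$ so that $(1-2\delta)\Nfh\ge c_2\PP{\Poh}$ (using $\Nfh\ge\PP{\Poh}$) gives (a). Part (c) I would do in one step: since $\ran(L_\p^{-1}Q\otimes P_\Omega)$ lies in the vacuum sector, $W(L_\p^{-1}Q\otimes P_\Omega)=2^{-1/2}a^*(I)(L_\p^{-1}Q\otimes P_\Omega)$ maps the vacuum sector into the one-field-particle sector; $(L_\p^{-1}Q)^2\le Q$ gives $\nn{(L_\p^{-1}Q\otimes P_\Omega)\psi}\le\sc{\psi,Q\otimes P_\Omega\psi}^{1/2}$, while the adjoint $a(I)$ sends $\psi$ to a vector whose vacuum component has norm $\le\nn{I}_{L^2}\nn{\PP{\Poh}\psi}$; hence $|\sc{\psi,W(L_\p^{-1}Q\otimes P_\Omega)\psi}|\le 2^{-1/2}\nn{I}_{L^2}\nn{\PP{\Poh}\psi}\,\sc{\psi,Q\otimes P_\Omega\psi}^{1/2}$, and a Schwarz split together with $\nn{I}_{L^2}^2\le C(1+\beta^{-1})$ gives (c).

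\emph{Part (b).} I would write $A_0=\i\lambda(B-B^*)$ with $B:=\Pi W R_\epsilon^2\oPi$; since $W=\Phi(I)$, $R_\epsilon^2$ and $\Pi$ are self-adjoint, $\i[L_\lambda,A_0]=-\lambda([L_\lambda,B]+[L_\lambda,B]^*)$. Using $L_0\Pi=\Pi L_0=0$, $[\Pi,L_0]=0$ and $[\oPi,R_\epsilon^2]=0$ one gets $[L_0,B]=-\Pi W R_\epsilon^2\oPi L_0\oPi$, so $\i[L_\lambda,A_0]=\lambda(\Pi W R_\epsilon^2\oPi L_0\oPi+\mathrm{h.c.})-\lambda^2([W,B]+[W,B]^*)$. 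Inserting $\Id=\Pi+\oPi$ on each side of each term, $\oPi\Pi=0$ and $\Pi W\Pi=0$ (the interaction changes the field particle-number parity) make the $\ran\Pi\to\ran\Pi$ block collapse to exactly $2\lambda^2\Pi W R_\epsilon^2\oPi W\Pi$, while every surviving contribution carries a factor $\oPi$ on at least one side and hence acts only on $\ran(\Id_\p\otimes\Id_\p\otimes P_\Omega^\perp)\oplus\ran(\ind_{\IR\setminus\{0\}}(L_\p)\otimes P_\Omega)$. On the first summand these are field operators absorbed by $\Nfh$, producing the $-c_2|\lambda|\PP{\Poh}$ error; on the second they are controlled by the $\Vc^*\sc{\po}^{-1}\Vc$-factorisation and \thref{th:psi_d bounded} exactly as in (a), producing the $-c_3(1+\beta^{-1})\lambda^2\ind_{\IR\setminus\{0\}}(L_\p)\big(\Vc^*\sc{\po}^{-2}\Vc\otimes\Id_\p+\Id_\p\otimes\Vc^*\sc{\po}^{-2}\Vc+\PP{(\Pc\otimes\Pc)}\big)\otimes P_\Omega$ error; and the genuinely order-$\lambda^2$ off-diagonal pieces $\Pi(\cdots)\oPi$ are absorbed into $-c_1|\lambda|\cdot 2\lambda^2\Pi W R_\epsilon^2\oPi W\Pi$ by a Schwarz estimate across the $\Pi/\oPi$ splitting. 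Collecting terms gives (b).

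\emph{Expected main obstacle.} The relative $\Nfh$-bounds are routine; the work is in the bookkeeping. In each part I must run the exhaustive case distinction over the four-block decomposition $(\Pc+\Pd)\otimes(\Pc+\Pd)$ of $\H_\p\otimes\H_\p$ — and, in (b), the simultaneous $\Pi/\oPi$ splitting — and check that no uncontrolled diagonal term survives; in (c) the delicate point is that $W(L_\p^{-1}Q\otimes P_\Omega)$ couples only the vacuum and one-particle sectors, so the error is genuinely of $\PP{\Poh}$-type and not $\Nfh$-type. Above all, the $\beta$-dependence must come out precisely as $\lambda^2(1+\beta^{-1})$, which forces one to feed the $\beta$-uniform estimate \thref{th:G commutator conditions}(4) and the bound $\nn{I}_{L^2}^2\le C(1+\beta^{-1})$ into every Cauchy--Schwarz split rather than applying the non-uniform $L^2$-bounds of the field coefficients directly.
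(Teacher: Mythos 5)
Your plan is correct and follows essentially the same route as the paper: expand in $\lambda$, bound the field-operator remainder via the standard $\Nfh^{1/2}$-estimate, block-decompose $\H_\p\otimes\H_\p$ into $\Pc$- and $\Pd$-ranges, extract $\Vc^*\sc{\po}^{-1}\Vc$ on the $\Pc$-side using \thref{th:G commutator conditions}(4), and run Cauchy--Schwarz with a small parameter; for (b), this is done block-wise against the $\Pi/\oPi$ splitting. The paper merely packages the bookkeeping into \thref{lemma:W1_estimate}, \thref{th:I1_q} and \thref{lemma:pi_estimate_new} rather than doing the four-block case split inline, and bounds the $\Pi\oPi+\oPi\Pi$ contribution by the stated identities for $\Pi\i[L_\lambda,A_0]\oPi$ rather than your Schwarz across the splitting — same substance.

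Two small remarks, both to your credit. In (c) your reduction to the $\ind_{\Nfh\le1}$ sector and the sharp inequality $(L_\p^{-1}Q)^2\le Q$ are exactly what is needed to land on $(1-c_1\abs\lambda(1+\beta^{-1}))\,Q\otimes P_\Omega-\abs\lambda\PP{\Poh}$; the paper's displayed intermediate steps use $\delta\Nfh$ and $(L_\p^{-1}Q)^2\le 2\Id$, which by themselves do not deliver the stated form without implicitly invoking your observations. On the other hand, in (a) the error on the $\Pd$-blocks also carries a $(1+\beta^{-1})$ factor (as in \thref{th:I1_q}(b)), which you dropped on the $\PP{(\Pc\otimes\Pc)}$ term; this is harmless for the inequality (it only weakens the bound) but should be kept to match \eqref{eq:C1 estimate} as stated.
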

Before we can give the proof of \thref{th:total estimate steps} we need some preparatory lemmas. 
It is convenient to introduce some further notation for the interaction and the commuted interaction. We separate them into parts which act on the left and right of the particle space tensor product, respectively, 
\begin{align*}
I^{(\l )}_1(u,\Sigma)  &:= (-\i \partial_u) \tau_\beta(G)(u,\Sigma)  +  \tau_\beta( \ad_{\Aps}(G))(u,\Sigma)  , \\ 
I^{(\r )}_1(u,\Sigma) &:=  (-\i \partial_u)  e^{-\beta u  /2}   \tau_\beta(\G^*)(u,\Sigma)    -   e^{-\beta u  /2}   \tau_\beta( \ad_{\Aps} (\G^*))(u,\Sigma).
\end{align*}
We introduce also integrated versions which will be used in the further estimates,
\begin{align*}
w &:=  \int I(u,\Sigma)^* I(u,\Sigma) \d(u, \Sigma), \\
w_1 &:=  \int I_1 (u,\Sigma)^* I_1(u,\Sigma) \d(u, \Sigma),
\end{align*}
and the left and right parts, for  $\alpha = \l,\r$, 
\begin{align*}
w^{(\alpha)} &:=  \int I^ {(\alpha)}(u,\Sigma)^* I^ {(\alpha)}(u,\Sigma) \d(u, \Sigma), \\
w_1^{(\alpha)}&:=  \int I_1^ {(\alpha)}(u,\Sigma)^* I_1^ {(\alpha)}(u,\Sigma) \d(u, \Sigma).
\end{align*}
By construction $W_1 = \Phi(I_1)$ and we have the decomposition
\muun
{
\label{eq:I1_decomposition}
I_1(u,\Sigma) &= I^{(\l)}_1(u,\Sigma) \otimes \Id_\p + \Id_\p  \otimes  I^{(\r)}_1(u,\Sigma).
}
First, we estimate the commuted interaction term $W_1$ appearing in \eqref{eq:C1 estimate}. 
For this, we prove a bound for $w$ and $w_1$.
\begin{lemma}
\label{th:I1_q}
There exist constants $C$ independent of $\beta$ such that, for  $\alpha = \l,\r$, 
\begin{enumerate}[label=(\alph*)]
\item $\nn{I^{(\alpha)}}^2_{ L^2(\IR \times \IS^2, \L(\H_\p) )} \leq C ( 1+ \beta^{-1})$,
\item $\nn{I_1^{(\alpha)}}^2_{ L^2(\IR \times \IS^2, \L(\H_\p) )}  \leq C ( 1+ \beta^{-1})$,
\item $\Pc w^{(\alpha)} \Pc \leq C ( 1+ \beta^{-1}) \Vc^* \sc{\po}^{-2} \Vc$,
\item $ \Pc w_1^{(\alpha)}\Pc \leq C ( 1+ \beta^{-1}) \Vc^*  \sc{\po}^{-2} \Vc$.
\end{enumerate}
\end{lemma}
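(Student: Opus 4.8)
The plan is to reduce each of (a)--(d) to a pointwise-in-$(u,\Sigma)$ operator bound followed by an elementary scalar integration. Write $W_\p^{(\alpha)}=\int_{\IR\times\IS^2} I^{(\alpha)}(u,\Sigma)^* I^{(\alpha)}(u,\Sigma)\,\d(u,\Sigma)$, split the $u$-integral into $\{u>0\}$ and $\{u<0\}$, and insert the definition of $\tau_\beta$. The mechanism that keeps everything finite and $\beta$-uniform is the set of elementary identities $1+\rho_\beta(u)=\tfrac{e^{\beta u}}{e^{\beta u}-1}$, hence $e^{-\beta u}(1+\rho_\beta(u))=\rho_\beta(u)$ and $e^{\beta u}\rho_\beta(u)=1+\rho_\beta(u)$ for $u>0$, together with $\tfrac{x}{e^{x}-1}\le 1$, i.e.\ $u^2\rho_\beta(u)\le u/\beta$ and $\beta u\rho_\beta(u)\le 1$. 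For $\alpha=\l$ the two branches of $\tau_\beta$ produce the integrand $u^2\bigl((1+\rho_\beta(u))\,G(u,\Sigma)^* G(u,\Sigma)+\rho_\beta(u)\,\overline{G(u,\Sigma)^* G(u,\Sigma)}\bigr)$, which is $\le u^2(1+2\rho_\beta(u))\,\|G(u,\Sigma)\|_{\L(\H_\p)}^2\,\Id_\p$; for $\alpha=\r$ the extra factor $e^{-\beta u/2}$ is a decay on $\{u>0\}$ (turning $1+\rho_\beta$ into $\rho_\beta$), and on $\{u<0\}$ its growth $e^{\beta|u|/2}$ is exactly cancelled by the $\rho_\beta(-u)$ coming from the negative branch of $\tau_\beta$, so one ends up with the same type of integrand. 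Using $\|G(\omega,\Sigma)\|_{\L(\H_\p)}=|\kappa(\omega)|\,\|\Gt(\omega,\Sigma)(\cdot)\|_{\infty}\le|\kappa(\omega)|P(\omega)$ from \ref{assumption:I1} and $u^2\rho_\beta(u)\le u/\beta$, the bound reduces to $\int_0^\infty(u^2+u/\beta)\,|\kappa(u)|^2P(u)^2\,\d u$, which is finite by the infrared behaviour \ref{assumption:I3} near $0$ and the rapid decay \ref{assumption:I2} at $\infty$; the $\beta$-dependence sits only in the coefficient of the $u/\beta$ term, giving (a).

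For (b) the only new point is the action of $(-\i\partial_u)$ on the glued functions. By the product rule $(-\i\partial_u)\tau_\beta(G)(u,\Sigma)$ equals, for $u>0$, $-\i\bigl((\partial_u[u\sqrt{1+\rho_\beta(u)}])\,G(u,\Sigma)+u\sqrt{1+\rho_\beta(u)}\,\partial_u G(u,\Sigma)\bigr)$, where a direct computation gives $\partial_u[u\sqrt{1+\rho_\beta(u)}]=\sqrt{1+\rho_\beta(u)}\bigl(1-\tfrac{\beta u}{2}\rho_\beta(u)\bigr)$, hence $\bigl|\partial_u[u\sqrt{1+\rho_\beta(u)}]\bigr|\le\tfrac32\sqrt{1+\rho_\beta(u)}$. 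Combining this with $\|\partial_\omega G(\omega,\Sigma)\|\le(|\kappa'(\omega)|+|\kappa(\omega)|)P(\omega)$ and with $\|\ad^{(1)}_{\Aps}(G)(\omega,\Sigma)\|=|\kappa(\omega)|\,\|\ad^{(1)}_{\Aps}(\Gt(\omega,\Sigma))\|\le|\kappa(\omega)|P(\omega)$, which is supplied by \thref{th:G commutator conditions 2}, the integrand of $W_{1,\p}^{(\alpha)}$ is again controlled by $(1+\rho_\beta(u))$ times a polynomial in $u$ times $|\kappa(u)|^2$ or $|\kappa'(u)|^2$, and the same integrability argument as in (a) closes the estimate.

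For (c) and (d) I transport the estimate via $\Vc$: since $\Vc\Pc=\Vc$ and $\Vc\Vc^*=\Id$, a pointwise Schwarz estimate gives $\Pc\,I^{(\alpha)}(u,\Sigma)^* I^{(\alpha)}(u,\Sigma)\,\Pc\le\bigl\|I^{(\alpha)}(u,\Sigma)\,\Vc^*\sc{\po}\,\bigr\|_{\L(\H_\p)}^{2}\;\Vc^*\sc{\po}^{-2}\Vc$, so (c)/(d) follow once $\int_{\IR\times\IS^2}\|I^{(\alpha)}(u,\Sigma)\,\Vc^*\sc{\po}\|^2\,\d(u,\Sigma)\le C(1+\beta^{-1})$; writing $\sc{\po}^2=1+\sum_j\po_j^2$ and using $\Vc\Vc^*=\Id$, this is exactly the square-integrability of the operators $\Vc^*\po_j\Vc\,\tau_\beta(G)(u,\Sigma)$ (and, for (d), their $\partial_u$-derivatives, resp.\ with $G$ replaced by $\ad^{(1)}_{\Aps}(G)$), which is item~(4) of \thref{th:G commutator conditions} and already supplies the $C(1+\beta^{-1/2})$-bound on the relevant $L^2$-norms. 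The main obstacle here is not conceptual but lies in the bookkeeping that keeps the constant genuinely independent of $\beta$: one cannot use the crude infrared bound $1+\rho_\beta(u)\lesssim(\beta u)^{-1}$ in isolation but must exploit the exact identities above, and in (b)/(d) the derivative of the $\tau_\beta$-weight has to be treated carefully, so that the potentially $u^{-1}$-singular infrared contributions produced by differentiating $u\sqrt{1+\rho_\beta(u)}$ against a coupling of the borderline type $\kappa(\omega)\sim\omega^{-1/2}$ cancel between the two terms of the product rule.
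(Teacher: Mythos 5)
Your reductions for (a), (c) and (d) are sound, and for (c)/(d) they follow essentially the same pattern the paper uses: insert $\Vc^*\sc{\po}\Vc$ and $\Vc^*\sc{\po}^{-1}\Vc$, pass to the $L^2$ norm of $I^{(\alpha)}\Vc^*\sc{\po}\Vc$, and invoke item (4) of \thref{th:G commutator conditions}. For (a) you make the $\tau_\beta$-computation explicit where the paper simply cites \thref{th:G commutator conditions} items (1)--(3); that is fine and correct.

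The gap is in (b). You split $(-\i\partial_u)\tau_\beta(G)(u,\Sigma)$ via the product rule into $\partial_u\!\bigl[u\sqrt{1+\rho_\beta(u)}\,\bigr]\,G(u,\Sigma)$ plus $u\sqrt{1+\rho_\beta(u)}\,\partial_u G(u,\Sigma)$, bound each factor separately, and then say the integrand is ``controlled by $(1+\rho_\beta(u))$ times a polynomial in $u$ times $|\kappa(u)|^2$ or $|\kappa'(u)|^2$, and the same integrability argument as in (a) closes the estimate.'' That is false in the borderline infrared case $\kappa(\omega)=\omega^{-1/2}\widetilde\kappa(\omega)$ that \ref{assumption:I3} allows: your bound for the first product-rule term is $\lesssim\sqrt{1+\rho_\beta(u)}\,|\kappa(u)|\,P(u)$, whose square behaves like $\beta^{-1}u^{-1}\cdot u^{-1}=\beta^{-1}u^{-2}$ near $u=0$, and the second term is of the same size via $u\sqrt{1+\rho_\beta(u)}\sim\beta^{-1/2}u^{1/2}$ against $\partial_u\kappa\sim u^{-3/2}$. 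Neither term is square-integrable near $u=0$, so bounding them separately cannot close the argument. You do note at the very end that a cancellation between the two product-rule terms is needed, but you never establish it; the sentence ``the same integrability argument as in (a) closes the estimate'' is simply not correct for this $\kappa$. The paper avoids this entirely: instead of separating the $\tau_\beta$-weight from the coupling, it absorbs $\omega^{-1/2}$ into the weight and differentiates the combined function $\sqrt{u+u\rho_\beta(u)}\,\widetilde\kappa(u)\Gt(u,\Sigma)$, which is $C^m$ with bounded derivatives up to $u=0$ (this is exactly the alternative infrared clause of \thref{lemma:gluing_integrable}, on which \thref{th:G commutator conditions} rests). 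To repair your version you must either reproduce that regrouping, or explicitly compute the leading $u^{-1}$ terms of both product-rule summands and show that they cancel before integrating; as written, (b) does not go through.
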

\begin{proof}
By \thref{th:G commutator conditions}
we have that, for all $j \in \{1,2,3\}$, $\alpha = \l, \r$, 
\[ 
I^{(\alpha)},~
I^{(\alpha)}_1,~
I^{(\alpha)} \Vc^* \po_j \Vc, ~
I^{(\alpha)}_1 \Vc^* \po_j \Vc  \in L^2(\IR \times \IS^2, \L(\H_\p) ), 
\]
 and there is a constant $C$ independent of $\beta$ such that we can estimate the norm of these expressions by
\[
C (1+\beta^{-1}).
\]
Thus, the same applies for $ I^{(\alpha)} \Vc^* \sc{\po}  \Vc, ~ I^{(\alpha)}_1 \Vc^* \sc{\po} \Vc \in L^2(\IR \times \IS^2, \L(\H_\p) )$.
Consequently, we obtain, for $\alpha = \l, \r$, and some constant $C >0$ not depending on $\beta$,
\muu
{
\Pc &w^{(\alpha)}  \Pc  \\ &=   \Vc^* \sc{\po}^ {-1} \Vc  \int  (I^{(\alpha)}(u,\Sigma)  \Vc^* \sc{\po} \Vc  )^* I^{(\alpha)}(u,\Sigma)  \Vc^* \sc{\po} \Vc \d( u, \Sigma) \Vc^* \sc{\po}^ {-1} \Vc \\ &\geq C ( 1+ \beta^{-1}) \Vc^* \sc{\po}^{-2 } \Vc.
}
The proof for $w^{(\alpha)}_{1}$ is analogous. 
\end{proof}

\begin{lemma}
\label{lemma:W1_estimate}
For any $\delta > 0$  we have    
\begin{align} \label{estoncommint} 
\pm  \lambda  W_1    &\leq  \delta  \Nfh    + \frac{1}{\delta}  \lambda^2  w_1 \otimes \Id_\f
\end{align} 
in the sense of forms.
\end{lemma}
\begin{proof}
 The standard estimates
of the creation and annihilation operators lead to the inequality above. 
\end{proof}

Now, we estimate the second term involving $A_0$. As an immediate consequence of 
the  definition,  \eqref{eq:defofAzero}, we have 
\muu
{
\i [L_\lambda, A_0] = - \lambda [L_\lambda, \Pi W R_\varepsilon^2 \oPi - \oPi R_\varepsilon^2 W \Pi].
}
With respect to the different subspaces we obtain 
\muun
{
\Pi \i [ L_\lambda, A_0] \Pi &= 2  \lambda^2 \Pi W R_\varepsilon^2  W \Pi, \label{eq:pipi_normal}  \\
\oPi \i [ L_\lambda, A_0] \oPi &= -  \lambda^2 ( \oPi W \Pi W R_\varepsilon^2 \oPi + \oPi R_\varepsilon^2 W \Pi W \oPi), \label{eq:pipi} \\
\Pi \i [ L_\lambda, A_0] \oPi &=  \lambda \Pi W R_\varepsilon^2 \oPi L_\lambda \oPi.   \label{eq:pipip}
}
The Fermi Golden Rule condition implies strict positivity of  $\Pi \i [ L_\lambda, A_0] \Pi$, see \thref{th:fgr}. The other expressions can be potentially negative and are estimated in the following lemma, which we use later for a Birman-Schwinger argument. It contains slightly sharper estimates than a similar one in \cite{merkli2}. 
\begin{lemma}
\label{lemma:pi_estimate_new}
For all $\varepsilon > 0$ and all $\lambda \in \IR$ the following holds. 
\begin{enumerate}[label = (\alph*)]
\item
We have  $ \oPi \i [L_\lambda, A_0] \oPi =  \ind_{\Nfh=1}  \oPi \i [L_\lambda, A_0] \oPi \ind_{\Nfh=1}   $.  Moreover, 
\[
\nn{ \oPi \i [L_\lambda, A_0] \oPi }  \leq 2  \frac{\lambda^2}{\varepsilon^2} \nn{I}^2 . 
\]
\item  For arbitrary $\delta_1, \delta_2 > 0$,
\muu
{
 \oPi  &\i [L_\lambda, A_0] \Pi + \Pi  \i [L_\lambda, A_0] \oPi  \leq  
  (\abs\lambda \delta_1 + \delta_2 \lambda^2)  \Pi W R_\varepsilon^2  W \Pi  + \frac{  \abs\lambda}{ \delta_1}   \ind_{\Nfh = 1}     \\
  &\qquad  + 2  \frac{ \lambda^2} {\delta_2}   \bigg(      \ind_{\Nfh=2}  a^*(I)  R_\varepsilon^2 a(I)     \ind_{\Nfh=2}  \\ &\qquad \qquad +     \oPi  \int \frac{ I(u,\Sigma)^*  I(u,\Sigma)}{u^2 + \varepsilon^2}  \d(u, \Sigma) \otimes P_\Omega  \oPi  \bigg).
}
\end{enumerate}
\end{lemma}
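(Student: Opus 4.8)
The plan is to argue directly from the three block identities \eqref{eq:pipi_normal}--\eqref{eq:pipi} for $\i[L_\lambda,A_0]$ relative to the splitting $\H=\ran\Pi\oplus\ran\oPi$, combined with the elementary form estimate $\pm(XY^{*}+YX^{*})\le\delta\,XX^{*}+\delta^{-1}\,YY^{*}$ for $\delta>0$, and the following sector facts: $L_0$, $R_\epsilon^{2}$ and $\oPi$ all commute with the number operator $\Nfh$; moreover, since $a(I)$ annihilates $\Omega$ while $\Pi$ projects onto the Fock-vacuum sector, one has $W\Pi=2^{-1/2}a^{*}(I)\Pi$, so $W\Pi$ maps into $\ran\ind_{\Nfh=1}$, and dually $\Pi W=2^{-1/2}\Pi a(I)$ vanishes on $(\ran\ind_{\Nfh=1})^{\perp}$. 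Throughout I write $R_\epsilon:=(L_0^{2}+\epsilon^{2})^{-1/2}$, so that $R_\epsilon^{2}=(L_0^{2}+\epsilon^{2})^{-1}$ as in the paper, and note $\nn{R_\epsilon^{2}}\le\epsilon^{-2}$.

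For part (a), write \eqref{eq:pipi} as $\oPi\i[L_\lambda,A_0]\oPi=-\lambda^{2}\big((\oPi W\Pi)(\Pi W R_\epsilon^{2}\oPi)+\text{h.c.}\big)$, inserting $\Pi=\Pi^{2}$. The right factor $W\Pi$ has range in $\ran\ind_{\Nfh=1}$ and the left factor $\Pi W$ kills $(\ran\ind_{\Nfh=1})^{\perp}$; since $R_\epsilon^{2}$ and $\oPi$ commute with $\Nfh$, each summand equals $\ind_{\Nfh=1}(\cdots)\ind_{\Nfh=1}$, which is the first assertion. The two summands are mutually adjoint, so $\nn{\oPi\i[L_\lambda,A_0]\oPi}\le2\lambda^{2}\nn{\oPi W\Pi}\,\nn{\Pi W R_\epsilon^{2}\oPi}\le2\lambda^{2}\nn{W\Pi}^{2}\nn{R_\epsilon^{2}}$, using $\nn{\Pi W}=\nn{W\Pi}$. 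Finally $\nn{W\Pi}^{2}=\tfrac12\nn{a^{*}(I)\Pi}^{2}=\tfrac12\nn{\Pi\,a(I)a^{*}(I)\,\Pi}\le\tfrac12\nn{W_\p}$, because on the vacuum sector $a(I)a^{*}(I)$ acts as the bounded multiplication operator $\int I(u,\Sigma)^{*}I(u,\Sigma)\d(u,\Sigma)=W_\p$; this yields $\nn{\oPi\i[L_\lambda,A_0]\oPi}\le\lambda^{2}\epsilon^{-2}\nn{W_\p}\le2\lambda^{2}\epsilon^{-2}\nn{W_\p}$.

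For part (b), the third identity in \eqref{eq:pipi_normal}--\eqref{eq:pipi} gives $\Pi\i[L_\lambda,A_0]\oPi=\lambda\,\Pi W R_\epsilon^{2}\oPi L_\lambda\oPi$. Using $L_0\Pi=\Pi L_0=0$, $\Pi W\Pi=0$ (hence $\Pi W\oPi=\Pi W$), and $L_\lambda=L_0+\lambda W$, this equals $\lambda\,\Pi W R_\epsilon^{2}L_0\oPi+\lambda^{2}\,\Pi W R_\epsilon^{2}\oPi W\oPi$. I add the adjoint and estimate the two pieces separately, each time applying the form estimate with $X:=\Pi W R_\epsilon$, so that $XX^{*}=\Pi W R_\epsilon^{2}W\Pi=\Pi W R_\epsilon^{2}\oPi W\Pi$ is exactly the ``good'' term on the right-hand side of the claim. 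For the first piece, $X=X\ind_{\Nfh=1}$ localizes its partner, and with $Y:=\oPi L_0 R_\epsilon\ind_{\Nfh=1}$ one has $YY^{*}=\ind_{\Nfh=1}\,\oPi L_0^{2}(L_0^{2}+\epsilon^{2})^{-1}\oPi\,\ind_{\Nfh=1}\le\ind_{\Nfh=1}\le\PP{\Poh}$, producing $\abs\lambda\delta_1\,\Pi W R_\epsilon^{2}\oPi W\Pi+\tfrac{\abs\lambda}{\delta_1}\PP{\Poh}$. For the second piece, with $Y:=\oPi W R_\epsilon$ one obtains $\delta_2\lambda^{2}\,\Pi W R_\epsilon^{2}\oPi W\Pi+\tfrac{\lambda^{2}}{\delta_2}\,\oPi W R_\epsilon^{2}W\oPi$; one then expands both $W$'s into their creation and annihilation parts and splits $\oPi W R_\epsilon^{2}W\oPi$ by boson number. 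The sector in which the inner factors raise, respectively lower, the boson number relative to $\ran\ind_{\Nfh=1}$ yields the operator built from $a^{*}(I)\,\oPi R_\epsilon^{2}\,a(I)$ over $\ran\PP{\Poh}$ (using that this operator is block diagonal in $\Nfh$, so its restriction to any sector is dominated by it), while the sector linking $\ran\ind_{\Nfh=1}$ to the vacuum subspace $\ran(\oPi\Poh)$ — where $L_0$ acts as $L_\p\otimes\Id_\f+\dG(u)$, i.e.\ as $L_\p+u$ with the field variable $u$ — yields the term $\oPi\int\frac{I(u,\Sigma)^{*}\oPi I(u,\Sigma)}{u^{2}+\epsilon^{2}}\d(u,\Sigma)\,\oPi\Poh$. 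Collecting the three contributions gives the asserted bound.

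The main obstacle is the final step of (b): putting the second bracketed error term into exactly the stated form. This requires carefully disentangling the boson-number sectors of $\oPi W R_\epsilon^{2}W\oPi$ — the inner pair of creation/annihilation operators can connect $\ran\ind_{\Nfh=1}$ both to the two-boson sector and to the vacuum sector, and the two contributions must be separated cleanly — and controlling the restriction of $R_\epsilon^{2}=(L_0^{2}+\epsilon^{2})^{-1}$ to the one-boson subspace, on which $L_0$ acts as $L_\p+u$, against the purely field-theoretic weight $(u^{2}+\epsilon^{2})^{-1}$. All remaining steps are routine: the single form inequality above, the bound $\nn{R_\epsilon^{2}}\le\epsilon^{-2}$, and the sector identities for $W\Pi$ and $\Pi W$ together with the commutation of $L_0$, $R_\epsilon^{2}$ and $\oPi$ with $\Nfh$.
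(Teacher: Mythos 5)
Your proposal reproduces the paper's architecture faithfully for part (a) and for the first half of part (b) (the $\lambda L_0$ piece): same block identities, same form inequality $XY^*+YX^*\le\delta XX^*+\delta^{-1}YY^*$, same observation that $W\Pi=\ind_{\Nfh=1}W\Pi$, and the same bound $\nn{R_\epsilon^{2}}\le\epsilon^{-2}$. That part is correct.

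The gap is in the second piece of (b). You set $Y:=\oPi W R_\epsilon$, so that $YY^{*}=\oPi W R_\epsilon^{2} W\oPi$. This operator is supported on \emph{all} boson-number sectors, and after expanding $W=2^{-1/2}(a(I)+a^{*}(I))$ it contains, on each sector $\ind_{\Nfh=n}$ with $n\ge 1$, a diagonal contribution of the type $\ind_{\Nfh=n}\,a(I)R_\epsilon^{2}a^{*}(I)\,\ind_{\Nfh=n}$ (the inner state sitting at $\Nfh=n+1$). The right-hand side of the lemma has no such term: it has $a^{*}(I)\oPi R_\epsilon^{2} a(I)$ on $\ran\PP{\Poh}$ (inner state one boson \emph{below}) plus a term supported only on $\Poh$. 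So $YY^{*}$ as you have chosen it cannot be dominated by the claimed expression — the statement ``splits $\oPi W R_\epsilon^{2}W\oPi$ by boson number ... yields the operator built from $a^{*}(I)\oPi R_\epsilon^{2}a(I)$'' silently discards these $a R_\epsilon^{2} a^{*}$ blocks on $\Nfh\ge 1$. The paper avoids this by exploiting the same localization you used in part (a): since $\Pi W=\Pi W\ind_{\Nfh=1}$, one may take $Y:=\oPi W\ind_{\Nfh=1}R_\epsilon$ without changing the cross term $XY^{*}$, and then $YY^{*}=\oPi W\ind_{\Nfh=1}\oPi R_\epsilon^{2}\ind_{\Nfh=1}W\oPi$ is automatically supported only on $\ran(\Poh+\ind_{\Nfh=2})$ (the outer $W$ factors can only connect $\Nfh=1$ to $\Nfh\in\{0,2\}$). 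One further application of the operator inequality then separates the $\Poh$ and $\ind_{\Nfh=2}$ blocks and delivers exactly the two terms in the lemma. You need to insert that $\ind_{\Nfh=1}$ in $Y$ — as it stands the reduction to the stated bound fails.

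A secondary remark: you also claim that on the one-boson sector $L_0$ acting as $L_\p+u$ ``yields'' the $(u^{2}+\epsilon^{2})^{-1}$ weight. On that sector the resolvent is $((L_\p+u)^{2}+\epsilon^{2})^{-1}$, not $(u^{2}+\epsilon^{2})^{-1}$, so this step needs to be stated more carefully (it is harmless for the downstream application, which only uses $\le\epsilon^{-2}$, but the identification is not immediate).
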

\begin{proof}
\begin{enumerate}[label = (\alph*),wide, labelindent=0pt]
\item
Consider the first term in \eqref{eq:pipi}, 
\muu
{
 \oPi W \Pi W R_\varepsilon^2 \oPi = a^*(I) \Pi  a(I) R_\varepsilon^2 \oPi.
}
Clearly, this operator vanishes everywhere except on $\ran \ind_{\Nfh=1}$.
By standard estimates of creation and annihilation operators we obtain
 \[\nn{  a^*(I) \Pi  a(I)} \leq  \nn{I}^2  ,
\] thus
\muu
{
\nn{  \oPi W \Pi W R_\varepsilon^2 \oPi } \leq \frac{ \nn{I}^2  }{\varepsilon^2},
}
which proves the claim, as the second term in \eqref{eq:pipi} is just the adjoint of the first one. 
\item
Using \eqref{eq:pipip}, and the operator inequality \eqref{eq:op estimate} we get
\muun
{
\oPi & \i [L_\lambda, A_0] \Pi + \Pi  \i [L_\lambda, A_0] \oPi \nonumber  \\ &=  \lambda( \Pi W R_\varepsilon^2  L_0  \ind_{\Nfh=1} + \ind_{\Nfh=1}  L_0 R_\varepsilon^2 W \Pi) + \nonumber \\ &\qquad \lambda^2 ( \Pi W R_\varepsilon^2  \ind_{\Nfh=1} W \oPi + \oPi W \ind_{\Nfh=1}  R_\varepsilon^2 W \Pi)  \nonumber \\
&\leq   \abs\lambda ( \delta_1 \Pi W R_\varepsilon^2  W \Pi  + \delta_1^{-1} \ind_{\Nfh=1}  L_0  R_\varepsilon^2 L_0 \ind_{\Nfh=1} ) \label{eq:firstpiop} \\ &\qquad + \lambda^2( \delta_2 \Pi W R_\varepsilon^2  W \Pi + \delta_2^{-1} \oPi W \ind_{\Nfh=1}  R_\varepsilon^2 \ind_{\Nfh=1} W \oPi).   \label{eq:secondpiop}
}
We have
\[
\nn{ \ind_{\Nfh=1} L_0  R_\varepsilon^2 L_0  \ind_{\Nfh=1} } \leq 1,
\]
which yields a bound for the second operator in \eqref{eq:firstpiop}.  The second one in \eqref{eq:secondpiop} only operates on the space $\ran ( \ind_{\Nfh = 0}  + \ind_{\Nfh = 2})$, so we can write
\begin{align}
\oPi W \ind_{\Nfh=1} & R_\varepsilon^2 \ind_{\Nfh=1} W \oPi \nonumber  \\ &=  ( \ind_{\Nfh = 0}  + \ind_{\Nfh = 2}  ) \oPi W \ind_{\Nfh=1} R_\varepsilon^2 \ind_{\Nfh=1} W \oPi (  \ind_{\Nfh = 0}    + \ind_{\Nfh = 2}  ). \label{eq:local term}
\end{align}
Now, we can use again the operator inequality \eqref{eq:op estimate} 
and then the  pull-through formula to bound \eqref{eq:local term} by
\begin{align*}
& 2   a^*(I)   R_\varepsilon^2 a(I)  \ind_{\Nfh=2}  + 2  \oPi   a(I)   R_\varepsilon^2 \ a^*(I)   \oPi   \ind_{\Nfh=0}   \\
& \leq 
2  \left( a^*(I)   R_\varepsilon^2 a(I)  \ind_{\Nfh=2}  +  \oPi  \int  \frac{ I^*(u,\Sigma)  I(u,\Sigma)}{u^2 + \varepsilon^2}  \d(u,\Sigma) \otimes P_\Omega  \oPi  \right).
\qedhere
\end{align*}
\end{enumerate}
\end{proof}
Using the just proven  lemmas  we can now give the proof for the concrete error estimates  stated in \thref{th:total estimate steps}. 
\begin{proof}[Proof of \thref{th:total estimate steps}]
\begin{enumerate}[label=(\alph*),wide, labelindent=0pt]
\item First, we use \thref{lemma:W1_estimate} and the explicit form of $C_1$ to obtain on $\D$ for any $\delta_1 > 0$,
\muu
{
C_1 &= (\Vc^* \q^2 \Vc   \otimes \Id_\p + \Id_\p \otimes  \Vc^* \q^2 \Vc)    \otimes \Id_\f +  \Nfh + \lambda W_1 \\
&\geq ( \Vc^* \q^2 \Vc  \otimes \Id_\p + \Id_\p \otimes  \Vc^* \q^2 \Vc  )  \otimes \Id_\f +  (1 - \delta_1)  \PP{\Poh}  - \frac{\lambda^2}{\delta_1} w_1 .
}
Next, note that the operator inequality \eqref{eq:op estimate} yields
\[
w_1 \leq 2 ( w^{(\l)}_1 \otimes \Id_\p +  \Id_\p  \otimes w^{(\r)}_1  ).
\]
Then, using \eqref{eq:op estimate} again, and subsequently \thref{th:I1_q},  a decomposition into $\ran \Pc$ and $\ran \Pd$ gives, for $\alpha = \l, \r$, 
\begin{align*}
w_1^{(\alpha)} &\leq 2(\Pc w_1^{(\alpha)} \Pc +  \Pd w_1^{(\alpha)} \Pd ) \\
&\leq C (1+\beta^{-1}) ( \Vc^* \sc{\po}^{-2} \Vc + \Pd),
\end{align*}
where $C > 0$ is a constant not depending on $\beta$. Choosing any $0< \delta_1 < 1$ yields \eqref{eq:C1 estimate} on $\D$.  As $\D$ is a core for $C_1$, it is also a form core for $q_{C_1}$, so the operator inequality can be extended to the corresponding forms in the form sense on $\Def(q_{C_1})$. 
\item 
We have for all $\delta_1  \delta_2 > 0$ by \thref{lemma:pi_estimate_new},
\muu
{
\i [L_\lambda, A_0] &= \Pi i [ L_\lambda , A_0 ]   \Pi +   \oPi i [ L_\lambda , A_0 ]   \oPi +   \oPi i [ L_\lambda , A_0 ]   \Pi +   \Pi i [ L_\lambda , A_0 ]   \oPi  \\
 &\geq (1-  (\abs \lambda \delta_1 + \delta_2 \lambda^2)) \Pi i [ L_\lambda , A_0 ]   \Pi  -  2 \frac{{\lambda}^2}{\varepsilon^2} \nn{ I }^2  \ind_{\Nfh =1}   \\
& \qquad -  \frac{ \abs\lambda}{\delta_1} \ind_{\Nfh =1}  -  2  \frac{ \lambda^2}{\delta_2}  \left(  a^*(I)   R_\varepsilon^2 a(I)  \ind_{\Nfh =2}   + \frac{ 1}{\varepsilon^2}  \oPi  w \otimes P_\Omega  \oPi  \right).
}
Notice that the operator $  a^*(I)  R_\varepsilon^2 a(I)  \ind_{\Nfh =2}$ is in fact bounded
by  \autoref{th:I1_q}, and $\Poh \oPi = \ind_{L_\p \not= 0} \otimes P_\Omega$. 
The last term can be decomposed and estimated as above in the proof of \ref{it:C1 estimate}. 
\item
We have
\muu
{
W (L_\p^{-1} Q \otimes P_\Omega ) = a^*(I)  (L_\p^{-1} Q \otimes P_\Omega)  = a^*(I L_\p^{-1} Q   ) \Poh.
}
Thus, by the standard estimates for creation and annihilation operators, we obtain for all  $\delta > 0$ on $\Def(\Nfh)$,
\begin{align*}
&  W ( L_\p^{-1}  Q \otimes P_\Omega )    +   ( W ( L_\p^{-1}  Q \otimes P_\Omega )  )^*  \\
& \quad  \leq \delta \Nfh + \delta^{-1} \left(   \int
L_\p^{-1} Q  I(u,\Sigma)^* I(u,\Sigma) L_\p^{-1} Q   \d(u,\Sigma) \right) \otimes  P_\Omega.
\end{align*} 
Hence,  we get
\muun
{
\CQ{Q} &\geq Q \otimes  P_\Omega  - \abs \lambda \delta  \Nfh -  \abs \lambda \nn{ w }  \delta^{-1}   
( L_\p^{-1} Q )^2 \otimes P_\Omega   \nonumber  \\
 &\geq  (1 - 2 \abs \lambda (1+\beta^{-1}) \delta^{-1}) Q \otimes  P_\Omega  - \abs \lambda \delta  \Nfh,  \label{eq:final3}
}
where we used \autoref{th:I1_q} and the fact that the concrete choice of $Q$ as in \eqref{eq:Q defn} yields
\[
( L_\p^{-1} Q )^2 =  ( L_p^2 \ind_{[-1,1]}(L_\p) + L_\p^{-2} \ind_{(-\infty,-1) \cup (1, \infty)}(L_\p)) \leq 2 Q. 
\qedhere
\]
\end{enumerate}
\end{proof}


After the preparations we are now able to put all the estimates of this section together in order to prove positivity of $\qtot$. 
\begin{proof}[Proof of \thref{th:positive}]
First choose $\varepsilon > 0$ such that \thref{th:fgr} holds.
Using \thref{th:total estimate steps}, we obtain 
for the quadratic form defined in \eqref{defoffinalcommutator} 
in the sense of forms on $\Def(\qce)$ for all $\theta > 0$, 
\begin{align}
\nonumber \qtot &\geq  \Vc^* \left(  \q^2    - c_1  \lambda^2( 1+\theta ) (1+\beta^{-1})\sc{\po}^{-2}\right) \Vc \otimes \Id_\p \otimes \Id_\f  \\
\nonumber &\qquad  + \Id_\p \otimes   \Vc^* \left(   \q^2    - c_1 \lambda^2( 1+\theta )(1+\beta^{-1}) \sc{\po}^{-2} \right) \Vc \otimes \Id_\f  \\
\label{eq:Pomega} &\qquad + \left( c_2 - c_3 (1+\theta) \abs \lambda(1+\beta^{-1}) \right)  \PP{\Poh} \\
\label{eq:fgr estimate} &\qquad + 2  \theta \lambda^2  (1- c_4 \abs \lambda) \gb(\varepsilon) \Pi -  c_5 \lambda ^2(1+\beta^{-1}) \Pi \\
\label{eq:Cq estimate} 
\begin{split}
 &\qquad +  \big[ \left( 1-c_6 \abs \lambda(1+\beta^{-1}) \right) Q  \\ &\qquad  \qquad - c_7 \lambda ^2 (1 + \theta)(1+\beta^{-1}) \PP{( \Pc \otimes \Pc )}  \ind_{L_\p \not= 0} \big] \otimes P_\Omega, 
\end{split}
\end{align}
for constants $c_i > 0$, $i \in \{1,\ldots,7\}$, independent of $\lambda$ and $\beta$,
where we made use of  the identity  $(\Pc \otimes \Pc)^\perp \otimes P_\Omega = 
(\Pc \otimes \Pc)^\perp \ind_{L_\p \neq 0 } \otimes P_\Omega + \Pi$.
Now we set $\theta = \abs \lambda^{-1/2}$ in order to have a positive term in \eqref{eq:fgr estimate} of higher order.  Next, we make $\abs \lambda > 0$ sufficiently small in the following sense: First we make it so small such that, by the uncertainty principle lemma (cf. \cite[X.2]{rs2}),
\[
 \q^2    -  c_1  \lambda^2( 1+ \abs \lambda^{-\frac{1}{2}} )(1+\beta^{-1}) \sc{\po}^{-2} > 0.
\]
Furthermore, we can make it small enough such that we get strictly positive operators in \eqref{eq:Pomega} and \eqref{eq:fgr estimate} on $\ran \PP{P_\Omega}$ and $\ran \Pi$, respectively. Note that for \eqref{eq:fgr estimate} to be positive  
 we have to choose $\abs \lambda$ small  compared  to $\gb(\varepsilon)^2$. For the last term, \eqref{eq:Cq estimate}, we first observe that  
\begin{equation} \label{finalpos1}  
\PP{( \Pc \otimes \Pc  )}  \ind_{L_\p \not= 0} \leq \ind_{[\Xi,\infty)}(L_\p^2), 
\end{equation} 
where
\[
\Xi := \inf_{\substack{ \lambda  \in \sigma_\d(H_p) , \mu \in \sigma(H_p) \\ \lambda \neq \mu }}
(\lambda - \mu)^2> 0.
\]
Furthermore,   using  the   definition  \eqref{eq:Q defn}, $Q =  L_\p^2 \ind_{[0,1]}(L_\p^2) + \ind_{(1,\infty)}(L_\p^2)$, we see   that 
\begin{align}
Q - \delta  \ind_{[\Xi,\infty)}(L_\p^2) &=  L_\p^2 \ind_{[0,\Xi)} (L_\p^2) + (L_\p^2 - \delta ) \ind_{[\Xi,1]}(L_\p^2) + (1- \delta ) \ind_{(1,\infty)}(L_\p^2) \nonumber \\ &> 0   \label{finalpos2}  
\end{align}
on $\ran \ind_{L_\p \not= 0}$  whenever $\delta < {\min\{\Xi,1\}}$. Now   using  \eqref{finalpos1}   and  \eqref{finalpos2}  
we can achieve   that \eqref{eq:Cq estimate} is positive on  \[\ran (\ind_{L_\p \not= 0} \otimes P_\Omega)\] for $\abs \lambda$ small enough. The claim now follows in view of the decomposition \eqref{eq:ranpi}.
\end{proof}

\appendix

\section{Decay in the Glued Positive Temperature Space}

In \autoref{lemma:gluing_integrable} the decay behavior of functions and their derivatives in the positive temperature gluing representation is discussed. That is, sufficient decay conditions of a function are stated such that the derivatives of the transformation by $\tau_\beta$, defined as in \Autoref{subsec:liouvillian}, are integrable. This is required for the proof of \autoref{th:G commutator conditions}. The same kind of bounds were already used in \cite{merkli1,merkli2} but without explicitly mentioning them. 

Before that we state some properties of the decay behavior of the Planck distribution and its first derivative.  
\begin{lemma}[Properties of $\rho_\beta$]
\label{th:rho_beta properties}
There exists constants $C$ such that for all $\omega >  0$, $\beta > 0$, 
\begin{enumerate}[label=(\alph*)]
\item $\sqrt{\rho_\beta(\omega)} \leq \frac{1}{\sqrt{\beta \omega} }$,
\item $\sqrt{1+ \rho_\beta(\omega)} \leq 1+  \frac{1}{\sqrt{\beta \omega} }$,
\item $\partial_\omega \sqrt{\rho_\beta(\omega)} \leq C( \omega^{-1} + \beta^{-\frac{1}{2}} \omega^{-\frac{3}{2}} )$,
\item $\partial_\omega \sqrt{1+ \rho_\beta(\omega)} \leq C( \omega^{-1} + \beta^{-\frac{1}{2}} \omega^{-\frac{3}{2}} )$.
\end{enumerate}

\end{lemma}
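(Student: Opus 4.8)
The plan is to reduce all four bounds to two elementary facts about $s \mapsto e^s - 1$ on $[0,\infty)$: the convexity bound $e^s - 1 \ge s$, and, for $s \ge 1$, $e^s - 1 = e^s(1 - e^{-s}) \ge (1 - e^{-1})e^s \ge \tfrac12 e^s$. Substituting $s = \beta\omega$ throughout then automatically produces constants that are uniform in $\beta$.

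Parts (a) and (b) are immediate. From $e^{\beta\omega} - 1 \ge \beta\omega$ one gets $\rho_\beta(\omega) \le (\beta\omega)^{-1}$, which is (a) after taking square roots; (b) follows by combining this with the pointwise inequality $\sqrt{1+t} \le 1 + \sqrt t$ for $t \ge 0$ (square both sides) and the monotonicity of $t \mapsto \sqrt{1+t}$, so that $\sqrt{1 + \rho_\beta(\omega)} \le 1 + \sqrt{\rho_\beta(\omega)} \le 1 + (\beta\omega)^{-1/2}$.

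For (c) and (d) — which are to be read as bounds on the absolute value of the derivative, the derivative itself being negative — the first step is to differentiate explicitly. Using $\rho_\beta = (e^{\beta\omega}-1)^{-1}$ and $1 + \rho_\beta = e^{\beta\omega}(e^{\beta\omega}-1)^{-1}$ one finds
\[
\partial_\omega \sqrt{\rho_\beta(\omega)} = - \frac{\beta e^{\beta\omega}}{2(e^{\beta\omega}-1)^{3/2}}, \qquad \partial_\omega \sqrt{1 + \rho_\beta(\omega)} = - \frac{\beta e^{\beta\omega/2}}{2(e^{\beta\omega}-1)^{3/2}}.
\]
Splitting $e^{\beta\omega} = (e^{\beta\omega}-1) + 1$ in the numerator of the first identity reduces everything to controlling expressions of the form $\beta(e^{\beta\omega}-1)^{-1/2}$, $\beta(e^{\beta\omega}-1)^{-3/2}$ and $\beta e^{\beta\omega/2}(e^{\beta\omega}-1)^{-3/2}$. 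On the range $\beta\omega \le 1$ I would use $e^{\beta\omega}-1 \ge \beta\omega$ and drop the bounded factor $e^{\beta\omega/2} \le e^{1/2}$, obtaining $\beta(e^{\beta\omega}-1)^{-1/2} \le \beta^{1/2}\omega^{-1/2} = (\beta\omega)^{1/2}\omega^{-1} \le \omega^{-1}$ and $\beta(e^{\beta\omega}-1)^{-3/2} \le \beta^{-1/2}\omega^{-3/2}$, both dominated by the claimed right-hand side. On the range $\beta\omega \ge 1$ I would instead use $e^{\beta\omega}-1 \ge \tfrac12 e^{\beta\omega}$, turning each term into one of the form $C\beta e^{-c\beta\omega}$ with $c \in \{\tfrac12,1,\tfrac32\}$; since $s \mapsto s e^{-cs}$ is bounded on $[1,\infty)$ we get $\beta e^{-c\beta\omega} = (\beta\omega\, e^{-c\beta\omega})\,\omega^{-1} \le C_c\,\omega^{-1}$, again under control. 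Summing the contributions of the two ranges yields (c) and (d) with a $\beta$-independent constant.

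There is no genuine obstacle here; the only point worth flagging is that the crude bound $e^{\beta\omega}-1 \ge \beta\omega$ alone does not suffice for large $\beta\omega$, since it leaves an uncontrolled factor $(\beta\omega)^{1/2}$ in (c). It is precisely the exponential lower bound on $[1,\infty)$ together with the boundedness of $s \mapsto s e^{-cs}$ that absorbs this factor, and it is also what keeps all constants independent of $\beta$.
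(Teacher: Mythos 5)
Your proof is correct, and since the paper's own proof is just ``Elementary calculation,'' you are filling in precisely the details the authors omit. The splitting $e^{\beta\omega}=(e^{\beta\omega}-1)+1$ together with the two-regime argument ($e^{\beta\omega}-1\ge\beta\omega$ near zero, $e^{\beta\omega}-1\ge\tfrac12 e^{\beta\omega}$ for $\beta\omega\ge1$ plus the boundedness of $s\mapsto se^{-cs}$) is the natural way to make the claimed constants $\beta$-independent, and your observation that (c) and (d) must be read as bounds on $|\partial_\omega(\cdot)|$ is the right reading of the statement.
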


\begin{proof}
Parts (a) and (b) follow from an elementary calculation. 
\begin{enumerate}[label=(\alph*),wide, labelindent=0pt]
 \addtocounter{enumi}{2}
\item
  $ \begin{aligned}[t]
\left| \partial_\omega \sqrt{\rho_\beta(\omega)} \right| &  = 
 \frac{\beta e^{\beta \omega}}{2 (e^{\beta \omega } - 1 )^{3/2} } = \omega^{-1} \frac{\beta \omega e^{ \beta \omega}}{2 (e^{\beta \omega } - 1 )^{3/2} } \\
& \leq  \omega^{-1}  C \left(   1_{\beta \omega \geq 1} + 1_{\beta \omega <  1}   (\beta \omega  )^{-1/2} \right)  . 
 \end{aligned} $
\item 
 $ \begin{aligned}[t]
\left| \partial_\omega \sqrt{1 + \rho_\beta(\omega)} \right| = 
 \frac{\beta e^{-\beta \omega}}{2 (1- e^{-\beta \omega }  )^{3/2} }
= \frac{\beta e^{\frac{1}{2}\beta \omega}}{2 (e^{\beta \omega } - 1   )^{3/2} }
 \leq \left| \partial_\omega \sqrt{\rho_\beta(\omega)} \right|.
 \end{aligned} $
 \qedhere
\end{enumerate}
\end{proof}

\begin{rem} \label{toplh} 
We note that in the following lemma  derivatives of the $\mathcal{L}(\H)$-valued functions
can be understood in the operator norm, strong operator, or weak operator topology in $\mathcal{L}(\H)$, respectively.
\end{rem} 

\begin{lemma}
\label{lemma:gluing_integrable}
Let $\H$ be a Hilbert space, $F \: \IR_+ \times \IS^2 \rightarrow \Lb(\H)$  a measurable function,  and $m \in \IN$.
Assume that for $j=0,\ldots,m$ the partial derivatives  $ \partial_\omega^j F(\omega,\Sigma)$ exist on $\R_+ \times \IS^2$,  see \autoref{toplh},
and that they are essentially bounded on compact subsets. 
Furthermore assume that there exist constants $\varepsilon > 0 $ and $ k, K,  C_1,  C_2 
\in (0,\infty) $ such that for all $\Sigma \in \IS^2$, $j = 0, \ldots , m$ 
\begin{enumerate}[label=(\arabic*)]
\item $\nn{ \partial_\omega^j F(\omega,\Sigma) } \leq  C_1 \omega^{m-1 + \varepsilon -j }$, for $\omega \in (0, k)$, \label{gluing:ir condition}
\item  $\nn{ \partial_\omega^j F(\omega,\Sigma) } \leq  C_2 \omega^{-\frac{3}{2} -\varepsilon} $, for $\omega \in ( K,\infty)$. \label{gluing:uv condition}
\end{enumerate}
Then the weak partial derivatives  $\partial_u^j \tau_\beta(F)$ are in 
$L^2(\IR \times \IS^2, \Lb(\H))$ for $j=0,\ldots,m$.
In particular, for $j=0,1$, there exists a constant $C_3$ such that for all  $\beta \in (0,\infty)$   
\begin{align}
\label{eq:beta eq}
\nn{\partial_u^j \tau_\beta(F)}_{L^2(\IR \times \IS^2, \Lb(\H))} \leq C_3 (1+ \beta^{-\frac{1}{2}}),
\end{align}
where $\tau_\beta$ was defined in \eqref{eq:taubeta defn}. 
 The assertion  also holds, if we assume
instead of Condition \ref{gluing:ir condition}: 
\begin{enumerate}[label=(\arabic*')] 
\item
\label{alternative ir} There exists a  $J  \in \IN_0$
such that 
$$F(\omega,\Sigma) = \omega^{- \frac{1}{2} +J }  F_0(\omega,\Sigma) ,  \quad \omega \in (0, k), $$ 
where  $F_0$ is an $\mathcal{L}(\H)$-valued function on $[0,k) \times \IS^2$ such that  for $j = 0,\ldots,\allowbreak \max\{0,m-J\}$ the partial derivatives 
 $\partial_\omega^j F_0$ exist,  
are uniformly bounded, 
and satisfy the relation 
  $$\partial_{\omega}^j F_0(\omega,\Sigma) |_{\omega=0} =    (-1)^{j+J+1} \partial_{\omega}^j  F_0(\omega,\Sigma)^*|_{\omega=0} . $$ 
\end{enumerate}
\end{lemma}
\begin{proof} We will treat small   and large $|u|$ separately.
In particular,  \eqref{eq:beta eq} will follow as a consequence of 
 \eqref{eq:inteoffact3}  and  \eqref{eq:inteoffact2}, below.  Let us start  
using  Leibniz' formula  
\begin{align}
\label{eq:tau beta product}
\partial_u^j( \tau_\beta F ) (u,\Sigma) = \begin{cases} \sum_{l=0}^j  \binom{j}{l} \partial_u^{l} \left( u \sqrt{1+ \rho_\beta(u)} \right) \partial_u^{j-l} F(u, \Sigma) & \text{ if }  u > 0, \\
\sum_{l=0}^j  \binom{j}{l} \partial_u^{l} \left(  u \sqrt{ \rho_\beta(-u)} \right) \partial_u^{j-l} F(-u, \Sigma)^* &\text{ if }  u < 0.
 \end{cases}
\end{align}
We first  consider  $|u|$ at infinity. 
 The first factor in the first line in \eqref{eq:tau beta product} is in $O(u)$ for $u\to\infty$, and the first factor in the second line decays faster than any polynomial. 
This implies with \ref{gluing:uv condition} that    there exist constants $C_j(\beta)$, such that, for all $j = 0,\ldots,m$ and all $u$ with $\abs u  > K$,
\begin{align*}
\nn{ \partial_u^j( \tau_\beta F ) (u,\Sigma)  } \leq C_j(\beta) \abs u^{-\frac{1}{2}  - \varepsilon}. 
\end{align*}
This yields  that
\begin{align*}
 u \mapsto \ind_{\IR \setminus [-K,K]}(u)  \partial_u^j( \tau_\beta F)(u, \Sigma)
\end{align*}
is in $L^2(\IR, \Lb(\H))$ for all $\Sigma\in \IS^2$ and $j=0, \ldots, m$. In particular for $j=0,1$
we find in view of \thref{th:rho_beta properties} that $C_0(\beta)$ and $C_1(\beta)$ can be bounded 
by a constant times $1 + \beta^{-1/2}$. Thus we conclude that 
\begin{align} \label{eq:inteoffact3} 
 \| \ind_{\IR \setminus [-K,K]}(u) \partial_u^j( \tau_\beta F) \|_{L^2(\IR \times \IS^2, \Lb(\H))}  \leq C (1 + \beta^{-\frac{1}{2}} ) , \quad j = 0, 1 . 
\end{align}

Let us now  check the decay behavior near zero. 
First assume that \ref{gluing:ir condition} is satisfied. 
We  extend  $\tau_\beta F$ to a function  on $\IR \times \IS^2$
 by setting $\tau_\beta F(0,\Sigma) := 0$.
The first factors in the sums in \eqref{eq:tau beta product}, $\partial_u^{l} \left( u \sqrt{1+ \rho_\beta(u)} \right)$ and $\partial_u^{l} \left( u \sqrt{\rho_\beta(u)} \right)$,  are in $O(\abs u^{\frac{1}{2} - l})$ for $u \to 0$.   The norms of the second factors are in $O(\abs u^{m-1-(j-l) + \varepsilon})$ by Assumption   \ref{gluing:ir condition}. Hence, there exist constants $ c_j(\beta)$, such that, for $0 < \abs u  < k$ and $\Sigma \in \IS^2$,  we have
\begin{align}\label{estonderFbeta}
\nn{ \partial_u^j( \tau_\beta F ) (u,\Sigma)  } \leq c_j(\beta)  \abs u^{-\frac{1}{2} + m - j + \varepsilon} , \quad  j=0,\ldots,m . 
\end{align} 
We  conclude  from \eqref{estonderFbeta} that for each 
$\Sigma$ the function $u \mapsto \tau_\beta F(u,\Sigma)$   is 
$m-1$ times continuously differentiable  on $\R$ with $\partial_u^j \tau_\beta F(0,\Sigma) = 0$ for $j=1,\ldots,m-1$. Moreover, we see from \eqref{estonderFbeta}
that  for each $\Sigma$, the function 
$u \mapsto \partial_u^{m-1}\tau_\beta F(u,\Sigma)$ is weakly differentiable. 
 Furthermore,  we infer from  \eqref{estonderFbeta}  and the boundedness  of $F$ and its partial derivatives on compact subsets of $\R_+ \times \IS^2$ 
  that for any $R > 0$  
\begin{align} \label{eq:inteoffact} 
 (u,\Sigma)  \mapsto \ind_{[-R,R]}(u) \partial_u^j( \tau_\beta F)(u, \Sigma)
\end{align}
is in $L^2(\IR \times \IS^2, \Lb(\H))$ for all $j=0, \ldots, m$.  In particular, for $j=0,1$ 
we find in view  of   \thref{th:rho_beta properties} that we can 
bound    $c_0(\beta)$ and $c_1(\beta)$ by a constant 
times $1 + \beta^{-1/2}$, and so   
\begin{align} \label{eq:inteoffact2} 
 \| \ind_{[-R,R]}(u) \partial_u^j( \tau_\beta F) \|_{L^2(\IR \times \IS^2, \Lb(\H))}  \leq C (1 + \beta^{-\frac{1}{2}} ) , \quad j = 0, 1 . 
\end{align}

Let us now assume the  alternative condition  \ref{alternative ir} is satisfied. In that case we first oberve that 
 for $F(\omega,\Sigma) := \omega^{-\frac{1}{2} + J }  F_0(\omega,\Sigma)$  we can write for $\abs u < k$,
\begin{align*}
(\tau_\beta F)(u, \Sigma) &:= \begin{cases}  u^{J}  \sqrt{\sigma_\beta(-u)} F_0(u, \Sigma), & u > 0, \\
						-	(	-u)^{J}  \sqrt{\sigma_\beta(-u)} F_0(-u ,\Sigma)^* , & u < 0 , \end{cases}
\end{align*}
where we defined  the function
$$
\IR \to \IR , \quad x \mapsto \sigma_\beta(x) = \left\{ \begin{array}{ll} x \rho_\beta(x), &  x \neq 0, \\  \frac{1}{\beta}, &  x = 0 . \end{array} \right.  
$$
The function $\sigma_\beta$ is  $C^\infty$ and positive, which for large $|x|$ is obvious and for small  $|x| \neq 0$ can be 
seen  from  the power series expansion 
$$
\sigma_\beta(x) = \beta^{-1}  \left( 1 + \sum_{n=1}^\infty \frac{(\beta x )^n}{(n+1)!} \right)^{-1}  .
$$
 It is now  straightforward to verify the claimed  differentiability and boundendness property 
if the   assumed  conditions are satisfied, by continuously extending the function 
at  zero and  applying  the product rule.
We infer from boundedness the $L^2$--integrablity of \eqref{eq:inteoffact}.  In particular, for $j=0,1$ we again 
obtain a bound of the form \eqref{eq:inteoffact2},  by noting that $\sigma_\beta(x) = \beta^{-1}\sigma_1(\beta x)$, and so $\sigma_\beta'(x) = \sigma_1'(\beta x)$,  
  $\sup_y |\sigma_1(y)| \langle y \rangle^{-1} < \infty$, and  $\sup_y |\sigma_1'(y)| < \infty$.   
\end{proof}

\section{Fermi Golden Rule}

In this part we review the result \cite[Proposition 3.2]{merkli2} -- how the Fermi Golden Rule condition \ref{fgrc} implies the positivity of the commutator with $A_0$  --  and generalize it with the obvious modifications to the coupling considered in this paper. This will be used in \Autoref{sec:positivity} for the proof of positivity.

First we state some elementary properties of the conjugate operator $A_0$, which was introduced in \Autoref{subsec:main proof}. In the following we use the symbols as defined in \Autoref{sec:model}.
\begin{lemma}
\label{th:a0 welldefined}
The operator 
\[
A_0 = \i \lambda (\Pi W R_\varepsilon^2 \oPi - \oPi R_\varepsilon^2 W \Pi),
\]
is bounded, self-adjoint and $\ran A_0 \subseteq \Def(L_\lambda)$ for any $\lambda \in \IR$ and $\varepsilon > 0$. 
\end{lemma}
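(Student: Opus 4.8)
The plan is to establish the three claims — boundedness, self-adjointness, and $\ran A_0 \subseteq \Def(L_\lambda)$ — by unwinding the definition $A_0 = \i\lambda(\Pi W R_\epsilon^2 \oPi - \oPi R_\epsilon^2 W \Pi)$ and exploiting the structure of its range. First I would note that $\Pi = \ind_{\{0\}}(L_0)$ and $R_\epsilon^2 = (L_0^2+\epsilon^2)^{-1}$ are both bounded (the latter with norm $\leq \epsilon^{-2}$), and that $W = \Phi(I)$ with $I \in L^2(\IR\times\IS^2,\L(\H_\p\otimes\H_\p))$ (this is exactly the integrability guaranteed by \ref{assumption:I2}, \ref{assumption:I3} via \thref{lemma:gluing_integrable}), so $W$ is relatively $\Nfh^{1/2}$-bounded. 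The key point making $A_0$ genuinely bounded despite $W$ being unbounded is that $\Pi$ projects onto $\ran(\ind_{\{0\}}(L_\p)\otimes P_\Omega)$, i.e. onto the vacuum sector in the field; since a field operator $\Phi(I)$ maps the zero- and one-particle subspaces into the zero-, one- and two-particle subspaces, the operator $W\Pi$ maps into $\H_\p\otimes\H_\p\otimes\FF_{\leq 2}$, on which $\Nfh^{1/2}$ acts boundedly. Hence $W\Pi$ is bounded, and so is its adjoint $\Pi W$ (interpreted on the appropriate domain), which gives boundedness of both summands of $A_0$.

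For self-adjointness, once $A_0$ is seen to be bounded and everywhere defined, it suffices to check $A_0^* = A_0$ formally: $(\i\lambda\,\Pi W R_\epsilon^2\oPi)^* = -\i\lambda\,\oPi R_\epsilon^2 W^* \Pi = -\i\lambda\,\oPi R_\epsilon^2 W\Pi$ since $R_\epsilon^2$, $\Pi$, $\oPi$ are self-adjoint, $W = \Phi(I)$ is symmetric, and the pairing of $W$ with $\Pi$ on one side is bounded so there is no domain subtlety; thus the adjoint of the first summand is precisely $-\i\lambda\,\oPi R_\epsilon^2 W\Pi$, which is the second summand, and symmetrically. So $A_0^* = A_0$.

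For $\ran A_0 \subseteq \Def(L_\lambda)$, I would argue that $\ran A_0 \subseteq \H_\p\otimes\H_\p\otimes\FF_\fin$ — indeed $\ran(\Pi W R_\epsilon^2 \oPi) \subseteq \ran\Pi \subseteq \H_\p\otimes\H_\p\otimes\ran P_\Omega \subseteq \FF_\fin$ (vectors with at most finitely many, in fact zero, particles in the field component), and $\ran(\oPi R_\epsilon^2 W\Pi)$ lies in the span of at most two-particle vectors, again inside $\FF_\fin$. Then I would invoke that $L_\lambda$, being essentially self-adjoint on $\D = \Cci\otimes\Cci\otimes\FF_\fin(\Cci(\IR\times\IS^2))$ (by \thref{th:gjn check}\ref{Ll essentially self-adjoint}) and whose self-adjoint extension has $\FF_\fin$-type vectors in its domain — more precisely, since $L_\lambda$ differs from the self-adjoint $L_0$ by the $\Nfh^{1/2}$-bounded perturbation $\lambda\Phi(I)$, one has $\Def(L_\lambda)\supseteq\Def(L_0)\cap\Def(\Nfh^{1/2})$, and range vectors of $A_0$, having only boundedly many field particles and lying in $\Def(\Lambda)$ by the regularity built into $R_\epsilon^2$ and $\Pi$, are in this intersection. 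The main obstacle is precisely making this last domain inclusion rigorous: one must check that the range vectors lie not just in $\FF_\fin$ but genuinely in $\Def(L_\lambda)$, which requires controlling the action of the free particle parts $H_\p$ on the components $\ran\Pi$ and $\ran(\oPi R_\epsilon^2 W\Pi)$ — here one uses that $R_\epsilon^2$ and $\Pi$ are functions of $L_0$, hence commute with and preserve the domain of $L_0$, so $L_0$ applied to these range vectors stays finite, and $\Phi(I)$ applied to them is controlled because they have at most two field particles.
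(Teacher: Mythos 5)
Your proof takes the same approach as the paper: the crucial observation is that $\Pi$ contains the vacuum projection $P_\Omega$, so $W\Pi$ (and by adjunction $\Pi W$) is bounded, $A_0^*=A_0$ follows from the manifest antisymmetric construction, and $\ran A_0$ lies in the finite-particle, $\Def(L_0)$-regular sector and hence in $\Def(L_\lambda)$. One minor inaccuracy worth noting: since $a(I)$ annihilates the vacuum, $\Phi(I)\Pi$ actually maps into the one-particle sector only (not ``the zero-, one- and two-particle subspaces''), though this does not affect any conclusion.
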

\begin{proof}
Note that $\Pi$ contains the projection to the vacuum subspace, so the creation operators yield  bounded 
contributions and the annihilation operators vanish. Thus, the operator is indeed bounded and self-adjoint by construction.

Furthermore, the range of the first summand of $A_0$ equals $\ran \Pi$ and the range  of the second  summand 
equals  $\Def(L_0^2) \cap \FF_\fin$, which are clearly subsets of $\Def(L_\lambda)$. 
\end{proof}

\begin{prop}
\label{th:fgr}
For all $\varepsilon > 0$ we have
\muu
{
 \Pi \Pw W  R_\varepsilon^2  W \Pw \Pi \geq \gb(\varepsilon) \Pi,
}
where $\gb(\varepsilon)$ is defined as in \Autoref{subsection:main}.
\end{prop}
\begin{proof} Notice that
\muu
{
\Pi = \ind_{L_\p\not=0} \otimes P_\Omega = \sum_{\substack{E \in \sigma_\d(H_\p) }} p_E \otimes p_E \otimes P_\Omega.
}
Then we compute
\begin{align*}
\Pi W R_\varepsilon^2 W \Pi & \geq \Pi W R_\varepsilon^2 (\Pc \otimes \Pd \otimes \Id_\f) W \Pi \\
&=  \Pi ( a(\tau_\beta(G \otimes \Id_\p)) - a(e^{-\beta \cdot /2}  \tau_\beta( \Id_\p \otimes  \G^*)) )  \frac{\Pc \otimes \Pd \otimes \Id_\f }{L_0^2 + \varepsilon^2}  \\ &\qquad  \qquad ( a^*( \tau_\beta(G \otimes \Id_\p)) - a^*(  e^{-\beta \cdot /2}  \tau_\beta( \Id_\p \otimes \G^*)) )  \Pi  \\
&=  \Pi  a(\tau_\beta(G \otimes \Id_\p) )   \frac{\Pc \otimes \Pd \otimes \Id_\f }{L_0^2 + \varepsilon^2}  a^*( \tau_\beta(G \otimes \Id_\p))  \Pi  \\
&= \sum_{E \in \sigma_\d(H_\p)}   \Pi  a(\tau_\beta(G \otimes \Id_\p) )   \frac{\Pc \otimes p_E \otimes \Id_\f }{ (H_\p \otimes \Id_\p \otimes \Id_\f - E + \widehat{\dG(u)} )^2  + \varepsilon^2} \\  &\qquad  \qquad a^*( \tau_\beta(G \otimes \Id_\p))  \Pi,  
\end{align*}
with $\widehat{\dG(u)} := \Id_\p \otimes \Id_\p \otimes \dG(u)$, and where we used the usual pull-through formula in the last step. Evaluating $\Pi$ and using the definition of $\tau_\beta$, we arrive at
\begin{align*}
\Pi W R_\varepsilon^2 W \Pi &  \geq  (\ind_{L_\p\not=0} \sum_{E \in \sigma_\d(H_\p)}  \int_\IR \int_{\IS^2}   \tau_\beta(G^* \otimes \Id_\p)(u, \Sigma)    \frac{\Pc}{ (H_\p - E + u)^2  + \varepsilon^2}  \otimes p_E    \\&\qquad  \tau_\beta(G \otimes \Id_\p) (u, \Sigma) \d \Sigma  \d u  \ind_{L_\p\not=0} ) \otimes  P_\Omega  \\
&= \sum_{E \in \sigma_\d(H_\p)}  p_E( F^{(1)}_\beta(E,\varepsilon) +  F^{(2)}_\beta(E,\varepsilon) ) p_E \otimes p_E \otimes P_\Omega,
\end{align*} 
with $F^{(1)}_\beta(E,\varepsilon)$, $F^{(2)}_\beta(E,\varepsilon)$ defined as in \Autoref{subsection:main}.
%
\end{proof}

\bibliography{main}
\addcontentsline{toc}{section}{References}

\end{document}